\definecolor{DarkRed}{rgb}{0.5,0.1,0.1}
\definecolor{DarkBlue}{rgb}{0.1,0.1,0.5}
\def\BState{\State\hskip-\ALG@thistlm}
\newtheorem{theorem}{Theorem}
\newtheorem{lemma}{Lemma}[section]
\newtheorem{proposition}[lemma]{Proposition}
\newtheorem{claim}[lemma]{Claim}
\newtheorem{fact}[lemma]{Fact}
\newtheorem{definition}{Definition}
\newtheorem{problem}{Problem}
\newtheorem{remark}[lemma]{Remark}
\newtheorem*{claim*}{Claim}
\newtheorem*{proposition*}{Proposition}
\newtheorem*{lemma*}{Lemma}
\newtheorem*{problem*}{Problem}
\newtheorem*{remark*}{Remark}
\newtheorem{mdresult}{Result}
\newenvironment{result}{\begin{mdframed}[backgroundcolor=lightgray!40,topline=false,rightline=false,leftline=false,bottomline=false,innertopmargin=2pt]\begin{mdresult}}{\end{mdresult}\end{mdframed}}
\newtheorem{mdinvariant}{Invariant}
\theoremstyle{definition}
\newtheorem{mdexample}{Example}[section]
\newenvironment{example}{\begin{mdframed}[hidealllines=false,backgroundcolor=gray!10,innertopmargin=0pt]\begin{mdexample}}{\end{mdexample}\end{mdframed}}
\DeclareMathOperator*{\argmax}{arg\,max}
\renewcommand{\qed}{\nobreak \ifvmode \relax \else
      \ifdim\lastskip<1.5em \hskip-\lastskip
      \hskip1.5em plus0em minus0.5em \fi \nobreak
      \vrule height0.75em width0.5em depth0.25em\fi}
\newcommand{\Qed}[1]{\ensuremath{\qed_{\textnormal{~#1}}}}
\newcommand*\samethanks[1][\value{footnote}]{\footnotemark[#1]}
\newcommand{\Sec}[1]{Section~\ref{#1}}
\newcommand{\Eqn}[1]{Eq.~(\ref{#1})}
\newcommand{\Lem}[1]{Lemma~\ref{#1}}
\newcommand{\Thm}[1]{Theorem~\ref{#1}}
\newcommand{\Clm}[1]{Claim~\ref{#1}}
\newcommand{\cost}[1]{\ensuremath{\textnormal{\textsf{cost}}(#1)}}
\renewcommand{\c}{\ensuremath{\textnormal{\textsf{cost}}}}
\renewcommand{\c}{\ensuremath{\textnormal{\textsf{cost}}}}
\newcommand{\tS}{\ensuremath{\widetilde{\rS}}}
\newcommand{\bQ}{\ensuremath{\widetilde{Q}}}
\newcommand{\R}{{\mathsf R}}
\newcommand{\rT}{{\mathsf T}}
\newcommand{\E}{{\mathsf E}}
\newcommand{\A}{{\mathsf A}}
\newcommand{\X}{{\mathsf X}}
\renewcommand{\S}{{\mathsf S}}
\newcommand{\KAdapt}{\ensuremath{\textsc{$r$-Round-Adaptive}}}
\newcommand{\Reduce}{\ensuremath{\textsc{Reduce}}}
\newcommand{\Select}{\ensuremath{\textsc{Select}}}
\newcommand{\Round}{\ensuremath{\textsc{Round}}}
\newcommand{\1}{{\mathbf 1}}
\renewcommand{\>}{{\rightarrow}}
\renewcommand{\SS}{\ensuremath{\mathcal{S}}}
\newcommand{\nagreedy}{\ensuremath{\textnormal{\sc{non-adapt-greedy}}}\xspace}
\newcommand{\estar}{\ensuremath{e^{*}}}
\newcommand{\toShrink}{-.20cm}
\newcommand{\toShrinkEnu}{-.2cm}
\newcommand{\Leq}[1]{\ensuremath{\underset{\textnormal{#1}}\leq}}
\newcommand{\Le}[1]{\ensuremath{\underset{\textnormal{#1}}<}}
\newcommand{\Geq}[1]{\ensuremath{\underset{\textnormal{#1}}\geq}}
\newcommand{\Eq}[1]{\ensuremath{\underset{\textnormal{#1}}=}}
\renewcommand{\Xi}{\Psi}
\newcommand{\TT}{\ensuremath{\mathcal{T}}}
\newcommand{\rv}[1]{\ensuremath{\textnormal{\textsf{#1}}}}
\newcommand{\rX}{\rv{X}}
\newcommand{\rS}{\rv{S}}
\newcommand{\OPT}{\ensuremath{\textnormal{\textsf{OPT}}}\xspace}
\newcommand{\Ot}{\ensuremath{\widetilde{O}}}
\newcommand{\Bracket}[1]{\Big[#1\Big]}
\newcommand{\bracket}[1]{\left[#1\right]}
\newcommand{\paren}[1]{\ensuremath{\left(#1\right)}\xspace}
\newcommand{\card}[1]{\left\vert{#1}\right\vert}
\newcommand{\IN}{\ensuremath{\mathbb{N}}}
\newcommand{\set}[1]{\ensuremath{\left\{ #1 \right\}}}
\newcommand{\poly}{\mbox{\rm poly}}
\newcommand{\alg}{\ensuremath{\mathcal{A}}\xspace}
\DeclareMathOperator*{\Exp}{\ensuremath{{\mathbb{E}}}}
\DeclareMathOperator*{\Prob}{\ensuremath{\textnormal{Pr}}}
\renewcommand{\Pr}{\Prob}
\newcommand{\Ex}{\Exp}
\newenvironment{tbox}{\begin{tcolorbox}[
		enlarge top by=5pt,
		enlarge bottom by=5pt,
		 breakable,
		 boxsep=0pt,
                  left=4pt,
                  right=4pt,
                  top=10pt,
                  arc=0pt,
                  boxrule=1pt,toprule=1pt,
                  colback=white
                  ]
	}
{\end{tcolorbox}}
\newcommand{\event}{\mathcal{E}}
\title{Stochastic Submodular Cover with Limited Adaptivity}
\author{Arpit Agarwal \and Sepehr Assadi\thanks{Supported in part by the National Science Foundation grant CCF-1617851.} \and Sanjeev Khanna\samethanks}
\date{\vspace{-10pt} University of Pennsylvania \\ \texttt{\{aarpit,sassadi,sanjeev\}@cis.upenn.edu}}
\begin{document}
\maketitle

\thispagestyle{empty}
\begin{abstract}
In the submodular cover problem, we are given a non-negative monotone submodular function $f$ over a ground set $E$ of items, and the goal is to choose a smallest subset $S \subseteq E$ such that $f(S) = Q$ where $Q = f(E)$. In the stochastic version of the problem, we are given $m$ stochastic items which are different random variables that independently realize to some item in $E$, and the goal is to find 
a smallest set of stochastic items whose realization $R$ satisfies $f(R) = Q$. The problem captures as a special case the stochastic set cover problem and more generally, stochastic covering integer programs. 

\smallskip

A fully adaptive algorithm for stochastic submodular cover chooses an item to realize and based on its realization, decides which item to realize next. A non-adaptive algorithm on the other hand needs to choose a permutation of items beforehand and
realize them one by one in the order specified by this permutation until the function value reaches $Q$. The cost of the algorithm in both case is the number (or costs) of items realized by the algorithm. It is not difficult to show that even for the coverage function there exist instances where the expected cost of a fully adaptive algorithm and a non-adaptive algorithm are separated by $\Omega(Q)$. This strong separation, often referred to as the {\em adaptivity gap}, is in sharp contrast to the separations observed in the framework of stochastic packing problems where the performance gap for many natural problem is close to the poly-time approximability of the non-stochastic version of the problem. Motivated by this striking gap between the power of adaptive and non-adaptive algorithms, we consider the following question in this work: does one need full power of adaptivity to obtain a near-optimal solution to stochastic submodular cover? In particular, how does the performance guarantees change when an algorithm interpolates between these two extremes using a few rounds of adaptivity. 

\smallskip

Towards this end, we define an {\em $r$-round adaptive} algorithm to be an algorithm that chooses a permutation of all available items in each round $k \in [r]$, and a threshold $\tau_k$, and realizes items in the order specified by the permutation until the function value is at least $\tau_k$. The permutation for each round $k$ is chosen adaptively based on the realization in the previous rounds, but the ordering inside each round remains fixed regardless of the realizations seen inside the round. Our main result is that for any integer $r$, there exists a poly-time $r$-round adaptive algorithm for stochastic submodular cover whose expected cost is $\tilde{O}(Q^{{1}/{r}})$ times the expected cost of a fully adaptive algorithm. Prior to our work, such a result was not known even for the case of $r=1$ and when $f$ is the coverage function. On the other hand, we show that for any $r$, there exist instances of the stochastic submodular cover problem where no $r$-round adaptive algorithm can achieve better than $\Omega(Q^{{1}/{r}})$ approximation to the expected cost of a fully adaptive algorithm. Our lower bound result holds even for coverage function and for algorithms with unbounded computational power. Thus our work shows that logarithmic rounds of adaptivity are necessary and sufficient to obtain near-optimal solutions to the stochastic submodular cover problem, and even few rounds of adaptivity are sufficient to sharply reduce the adaptivity gap.

\end{abstract}

\setcounter{page}{0}
\clearpage

\section{Introduction}\label{sec:intro}

Submodular functions naturally arise in many applications domains including algorithmic game theory, machine learning, and social choice theory, and have been extensively studied in combinatorial optimization. Many computational problems can be modeled as the {\em submodular cover} problem where we are given a non-negative monotone submodular function $f$ over a ground set $E$, and the goal is to choose a smallest subset $S \subseteq E$ such that $f(S) = Q$ where $Q = f(E)$. A well-studied special case is the set cover problem where the function $f$ is the coverage function and the items correspond to subsets of an underlying universe. Even this special case is known to be NP-hard to approximate to a factor
better than $\Omega(\log Q)$~\cite{DinurS14,Feige98,LundY94,Moshkovitz15}, and on the other hand, the classic paper of Wolsey~\cite{Wolsey82} shows that the problem admits a poly-time $O(\log Q)$-approximation for any integer-valued monotone submodular function.

In this work we consider the {\em stochastic version} of the problem that naturally arises when there is uncertainty about items. For instance, in stochastic influence spread in networks, the set of nodes that can be influenced by 
any particular node is a random variable whose value depends on the realized state of the influencing node (e.g. being successfully activated). 
In sensor placement problems, each sensor can fail partially or entirely with certain probability and the coverage of a sensor depends on whether the sensor failed or not.  
In data acquisition for machine learning (ML) tasks, each data point is apriori a random variable that can take different values, and one may wish to build a dataset representing a {diverse} set of values.  
For example, if one wants to build a ML model for 
identifying a new disease from gene patterns, one would start by building a database of 
gene patterns associated to that disease. In this case, each person's gene pattern 
is a random variable that can realize to different values depending on the race, gender, etc.
For other examples, we refer the reader to \cite{Liu+08} (application in databases) and \cite{Anagnostopoulos+15} (application
in document retrieval).

In the {\em stochastic submodular cover} problem, we are given $m$ stochastic items which are different random variables that independently realize to an element of $E$, and the goal is to find a lowest cost set of stochastic items whose realization $R$ satisfies $f(R) = Q$. 
In network influence spread problems
each item corresponds to a node in the network, and its realization corresponds to the set of nodes it 
can influence. In sensor placement problems an item corresponds to a sensor and its realization corresponds 
to the area that it covers upon being deployed.
In the case of data acquisition, an item corresponds to a data point and its realization
corresponds to the value it takes upon being queried. 
The problem captures as a special case the stochastic set cover problem and more generally, stochastic covering integer programs.

In stochastic optimization, a powerful computational resource is {\em adaptivity}. An {\em adaptive} algorithm for stochastic submodular cover chooses an item to realize and based on its realization, decides which item to realize next. A {\em non-adaptive} algorithm on the other hand needs to choose a permutation of items and realize them in the order specified by the permutation until the function value reaches $Q$. The cost of the algorithm in both cases is the number (or costs) of items realized by the algorithm. It is well-understood that in general, adaptive algorithms perform better than non-adaptive algorithms in terms of cost of coverage. However, 
in practical applications a non-adaptive algorithm is better from the point of view of
practitioners as it eliminates the need of sequential decision making and instead requires them to 
make just one decision. This motivates the study of separation between the performance of adaptive and non-adaptive algorithms, known as the {\em adaptivity gap}. For many stochastic packing problems, the adaptivity gap is only a constant. For instance, the adaptivity gap for budgeted stochastic max coverage where you are given a  constraint on the number of items that can be chosen and the goal is to maximize coverage, the adaptivity gap is bounded by $1 - 1/e$~\cite{Asadpour+08}. In a sharp contrast, for the covering version of the problem, it is not difficult to show an adaptivity gap of $\Omega(Q)$~\cite{GoemansVo06}.

Motivated by this striking separation between the power of adaptive and non-adaptive algorithms, we consider the following question in this work: does one need full power of adaptivity to obtain a near-optimal solution to stochastic submodular cover? In particular, how does the performance guarantees change when an algorithm interpolates between these two extremes using a few rounds of adaptivity.

Towards this end, we define an {\em $r$-round adaptive} algorithm to be an algorithm that chooses a permutation of all available items in each round $k \in [r]$, and a threshold $\tau_k$, and realizes items in the order specified by the permutation until the function value is at least $\tau_k$. 
A non-adaptive algorithm would then correspond to the case $r=1$ (with $\tau_1 = Q$), and 
an adaptive algorithm would correspond to the case $r=m$ (with $\tau_k = 0$ for all $k \in [r]$).
The permutation for each round $k$ is chosen adaptively based on the realization in the previous rounds, but the ordering inside each round remains fixed regardless of the realizations seen inside the round. 
We will call this the ``permutation framework'' for an $r$-round algorithm.

Our main result is that for any integer $r$, there exists a poly-time $r$-round adaptive algorithm for stochastic submodular cover whose expected cost is ${\Ot}(Q^{{1}/{r}})$ times the expected cost of a fully adaptive algorithm, where the ${\Ot}$ notation is hiding a logarithmic dependence on the number of items and the maximum cost of any item. Prior to our work, such a result was not known even for the case of $r=1$ and when $f$ is the coverage function. Indeed achieving such a result was cast as an open problem by Goemans and Vondrak~\cite{GoemansVo06} who achieved an $O(n^2)$ bound (corresponding to $O(Q^2)$) on the adaptivity gap of stochastic set cover. Furthermore, we show that for any $r$, there exist instances of the stochastic submodular cover problem where no $r$-round adaptive algorithm can achieve better than $\Omega(Q^{{1}/{r}})$ approximation to the expected cost of a fully adaptive algorithm. Our lower bound result holds even for coverage function and for algorithms with unbounded computational power. Thus our work shows that logarithmic rounds of adaptivity are necessary and sufficient to obtain near-optimal solutions to the stochastic submodular cover problem, and even few rounds of adaptivity are sufficient to sharply reduce the adaptivity gap.

\vspace{-1mm}

\begin{remark} 
\emph{One may consider an alternate notion of $r$-round adaptive algorithm: In each round $k \in [r]$, the algorithm chooses a fixed set of items to realize in parallel where the choice of the set depends on the realizations in the previous rounds (instead of 
a permutation over items). Let us call this framework the ``set framework''. One benefit of this variation is that items in each round can be realized in parallel. Unfortunately in this framework, any algorithm that always outputs a valid cover (as is our requirement), must in general include all remaining items in the last 
round, because for any proper subset of the remaining items there will be positive probability 
that this subset will not able to cover the entire set.
Hence, the $r$-round adaptivity gap would be $\Omega(m)$. }

\emph{Hence, one would have to consider a relaxed version of the problem and require that the algorithm achieves the desired coverage guarantee only with probability $1 - o(1)$. Our algorithmic results directly carry over to this variant of the problem. In particular, for any fixed $r$, we obtain poly-time $r$-round adaptive algorithm in the set framework whose cost is $\tilde{O}(Q^{{1}/{r}})$ times the expected cost of a fully adaptive algorithm, and that succeeds with probability at least $1 - o(1)$. At the same time, our lower
bound of $\Omega(Q^{{1}/{r}})$ continues to hold in this relaxed setting.
In the following we will provide results for only the permutation framework, with the understanding 
that all our results carry over to the set framework with the relaxed version of the problem.}
\end{remark}

\vspace{-1mm}

\subsection{Problem Statement}\label{sec:problem}

Let $\rX:=\set{\rX_1,\ldots,\rX_m}$ be a collection of $m$ \emph{independent} random variables each supported on the same ground set $E$ and $f$
be an integer-valued\footnote{We present our results for integer-valued functions for simplicity of exposition. All our results can easily be generalized to positive real-valued functions.} non-negative monotone submodular function $f: 2^{E} \rightarrow \IN_+$. 
We will refer to random variables $\rX_i$'s as items
and any set $\S \subseteq \rX$ as a set of items.
For any $i \in [m]$, we use $x_i \in E$ to refer to a realization of item (random variable) $\rX_i$ and define $X:=\set{x_1,\ldots,x_m}$ as the realization of $\rX$. 
We slightly abuse notation\footnote{Note that here $f: 2^{E} \rightarrow \IN_+$ is being extended to a function $f' : 2^{\rX} \rightarrow \IN_+$, but we chose to refer to $f'$ as $f$.} and extend $f$ to the ground set of items $\rX$ such that for any set $\S \subseteq \rX$, $f(\S) := f(\cup_{\rX_i \in \S} \rX_i)$: this definition means that for any
realization $S$ of $\S$, $f(S) = f(\cup_{x_i \in S} x_i)$. Finally, there is an integer-valued cost $c_i \in [C]$ associated with item $\rX_i \in \rX$. 



Let $Q := f(E)$. For any set of items $\rS \subseteq \rX$, we say that a realization $S$ of $\rS$ is \emph{feasible} iff $f(S) = Q$. 
We will assume that any realization $X$ of $\X$ is always feasible, i.e.\ $f(X) = Q$\footnote{One can ensure this by adding an item $\X_i$ to the ground set such that $f(x_i) = Q$ for all realizations $x_i$ of $\X_i$, but cost of this item is higher than the combined cost of all other items.}. We will say that a realization $X$ of $\X$ is \emph{covered} by a realization $S \subseteq X$ of $\S$
iff $S$ is feasible.
The goal in the stochastic submodular cover problem is to find a set of items $\rS \subseteq \rX$ with the minimum cost which gets realized to a feasible set.
In order to do so, if we include any item $\rX_i$ to $\rS$ we pay a cost $c_i$, and once included, $\rX_i$ would be realized to some $x_i \in E$ and is fixed from now on. Once a decision made regarding inclusion of an item in $\rS$, this item cannot be removed from $\rS$.

For any set of items $\S \subseteq \rX$, we define $\cost{\S}$ to be the total 
cost of all items in $\S$, i.e.\ $\cost{\S}= \sum_{i \in [m]} c_i \cdot \1[\rX_i \in \S]$, where $\1[\cdot]$ is 
an indicator function.
For any algorithm $\alg$, we refer to the total cost of solution $\S$ returned by $\alg$ on an instantiation $X$ of $\rX$ as the \emph{cost} of $\alg$ on $X$ denoted by $\cost{\alg(X)}$. We are interested in \emph{minimizing} the \emph{expected} cost
of the algorithm $\alg$, i.e., $\Ex_{X \sim \rX}\bracket{\cost{\alg{(X)}}}$. 

\begin{example}[\textbf{Stochastic Set Cover}]\label{exp:set-cover}
\vspace{1mm}
A canonical example of the stochastic submodular cover problem is the stochastic set cover problem.
Let $U$ be a universe of $n$ ``elements'' (not to be mistaken with ``items'') and $\rX = \set{\rX_1,\ldots,\rX_m}$ be a collection of $m$ random variables where each random variable $\rX_i$ is supported on subsets of $U$, i.e., 
realizes to some subset $T_i \subseteq U$. We refer to each random variable $\rX_i$ as a \emph{stochastic set}. In the stochastic set cover problem, the goal is to pick a smallest (or minimum weight) collection $\rS$ of items (or equivalently sets) in $\rX$ such that the realized sets in this collections cover the universe $U$.
\end{example}

We consider the following types of algorithms (sometimes referred to as policies in the literature) for the stochastic submodular cover problem: 
\begin{itemize}[itemsep=0pt,leftmargin=10pt]

\item \textbf{Non-adaptive:} 
A non-adaptive algorithm simply picks a fixed ordering of items in $\rX$ and insert the items one by one to $\rS$ until the realization $S$ of $\rS$ become feasible.


\item \textbf{Adaptive:} An adaptive algorithm on the other hand picks the next item to be included in $\rS$ adaptively based on the realization of previously chosen items. 
In other words, the choice of each item to be included in $\rS$ is now a function of the realization of items already in $\rS$.



\item \textbf{$\bm{r}$-round adaptive:} We define $r$-round adaptive algorithms as an ``interpolation'' between 
the above two extremes. For any integer $r \geq 1$, an $r$-round adaptive algorithm chooses the items to be included in $\rS$ in $r$ rounds of adaptivity: In each round $i \in [r]$, the algorithm 
chooses a \emph{threshold} $\tau_i \in \IN_+$ and an ordering over items, and then inserts the items one by one according to this ordering to $\rS$ until 
for the realized set $S$, $f(S) \geq \tau_i$. Once this round finishes, the algorithm decides on an ordering over
the remaining items \emph{adaptively} based on the current realization.


\end{itemize}

\noindent
In above definitions, a non-adaptive algorithm corresponds to case of $r=1$ round adaptive algorithm (with $\tau_1 = Q$) and 
a (fully) adaptive algorithm corresponds to the case of $r = m$ (here $\tau_i$ is irrelevant and can be thought as being zero). 

\paragraph{Adaptivity gap.} We use $\OPT$ to refer to the optimal adaptive algorithm for the stochastic submodular cover problem, i.e., an adaptive algorithm with minimum expected cost. We use the 
expected cost of \OPT as the main benchmark against which we compare the cost of other algorithms. In particular, we define \emph{adaptivity gap} as the ratio between the expected cost of the best non-adaptive algorithm
for the submodular cover problem and the expected cost of \OPT. Similarly, for any integer $r$, we define the \emph{$r$-round adaptivity gap} for $r$-rounds adaptive algorithms in analogy with above definition. 

\begin{remark}\label{rem:information-theoretic}
\emph{The notion of ``best'' non-adaptive or $r$-round adaptive algorithm defined above allow \emph{unbounded computational} power to the algorithm. Hence, the only limiting factor of the algorithm is the \emph{information-theoretic} 
barrier caused by the \emph{uncertainty} about the underlying realization.} 
\end{remark}

\subsection{Our Contributions}\label{sec:results}

In this paper, we establish \emph{tight} bounds (up to logarithmic factor) on the $r$-round adaptivity gap of the stochastic submodular cover problem for any integer $r \geq 1$. Our main result 
is an $r$-round adaptive algorithm (for any integer $r \geq 1$) for the stochastic submodular cover problem. 

\begin{result}[\textbf{Main Result}]\label{res:r-round-upper}
	For any integer $r \geq 1$ and any monotone submodular function $f$, there exists an $r$-round adaptive algorithm for the stochastic submodular cover problem for function $f$
	and set of items $\X$ with cost of each item bounded by $C$ that incurs expected cost $O(Q^{1/r} \cdot \log{Q} \cdot \log(mC))$ times the expected cost of the optimal adaptive algorithm. 
\end{result}
\noindent
A corollary of Result~\ref{res:r-round-upper} is that the $r$-round adaptivity gap of the submodular cover problem is $\Ot(Q^{1/r})$. This implies that using only $O\paren{\frac{\log{Q}}{\log\log{Q}}}$ rounds of 
adaptivity, one can reduce the cost of the algorithm to within \emph{poly-logarithmic} factor of the optimal adaptive algorithm. In other words, one can ``harness'' the (essentially) full power of adaptivity, in 
only logarithmic number of rounds.

Various stochastic covering problems can be cast as
submodular cover problem, including the stochastic set cover problem and the stochastic covering integer programs studied previously in the literature~\cite{GoemansVo06, GolovinKr10, Deshpande+14}. 
As such, Result~\ref{res:r-round-upper} directly extends to these problems as well. In particular, as a (very) special case of Result~\ref{res:r-round-upper}, we obtain that the 
adaptivity gap of the stochastic set cover problem is $\Ot(n)$ (here $n$ is the size of the universe), improving upon the $O(n^2)$ bound of Goemans and Vondrak~\cite{GoemansVo06} and settling
an open question in their work regarding the adaptivity gap of this problem (an $\Omega(n)$ lower bound was already shown in~\cite{GoemansVo06}).

We further prove that the $r$-round adaptivity gaps in Result~\ref{res:r-round-upper} are almost \emph{tight} for any $r \geq 1$. 

\begin{result}\label{res:r-round-lower}
	For any integer $r \geq 1$, there exists a monotone submodular function $f:2^{E} \rightarrow \IN_+$, in particular a coverage function, with $Q:= f(E)$ such that  
	the expected cost of any $r$-round adaptive algorithm for the submodular cover problem for function $f$, i.e., the stochastic set cover problem, is $\Omega(\frac{1}{r^3} \cdot Q^{1/r})$ times the expected cost of the optimal adaptive algorithm. 
\end{result}
Result~\ref{res:r-round-lower} implies that the $r$-round adaptivity gap of the submodular cover problem is $\Omega(\frac{1}{r^3} \cdot Q^{1/r})$, i.e., within poly-logarithmic factor of the upper bound in Result~\ref{res:r-round-upper}. 
An immediate corollary of this result is that $\Omega(\frac{\log{Q}}{\log\log{Q}})$ rounds of adaptivity are necessary for reducing the cost of the algorithms to within logarithmic factors of the optimal adaptive algorithm. We further point out that
interestingly, the optimal adaptive algorithm in instances created in Result~\ref{res:r-round-lower} only requires $r+1$ rounds; as such, Result~\ref{res:r-round-lower} in fact is proving a lower bound on the gap between
the cost of $r$-round and $(r+1)$-round adaptive algorithms. 

We remark that our algorithm in Result~\ref{res:r-round-upper} is \emph{polynomial time} (for polynomially-bounded item costs), while the lower bound in Result~\ref{res:r-round-lower} holds 
again algorithms with unbounded computational power (see Remark~\ref{rem:information-theoretic}).

\subsection{Related Work}\label{sec:related}
The problem of submodular cover was perhaps first studied by \cite{Wolsey82}, who showed that 
a \emph{greedy} algorithm achieves an approximation ratio of $\log(Q)$.
Subsequent to this there has been a lot of work on this problem 
in various settings \cite{GolovinKr10, AzarGa11, Azar+09, Im+16, Deshpande+14, Grammel+16, Kambadur+17}.
To our knowledge, the question of adaptivity in stochastic covering problems was
first studied in \cite{GoemansVo06} for the special case of 
stochastic set cover and covering integer programs.
It was shown that the adaptivity gap of this problem is $\Omega(n)$,
where $n$ is the size of the universe to be covered.
A non-adaptive algorithm for this problem
with an adaptivity gap of $O(n^2)$
was also presented.

Subsequently there has been a lot of work on stochastic set cover
and the more general stochastic submodular cover problem in the fully adaptive setting.
A special case of stochastic set cover was studied by \cite{Liu+08} in the adaptive setting,
and an \emph{adaptive greedy algorithm} was studied\footnote{The paper originally claimed
an approximation ratio of $\log(n)$ for this algorithm, however, the claim was later retracted by the authors 
due to an error in the original analysis \cite{Parthasarathy18}}.
In \cite{GolovinKr10} the notion of ``adaptive submodularity" was defined
for adaptive optimization,
which demands that given any partial realization of items, the marginal function 
with respect to this realization remains monotone submodular.
This paper also presented an \emph{adaptive greedy algorithm} for the problem 
of stochastic submodular cover, and stochastic submodular maximization subject to cardinality constraints.\footnote{It was originally claimed that this algorithm achieves an approximation ratio of $\log(Q)$ where $Q$ is 
the desired coverage, however, the claim was later retracted due to an error in the analysis \cite{NanSa17}. 
The authors have claimed an approximation ratio of $\log^2(Q)$ since then.} 
In \cite{Im+16} a more general version of stochastic submodular cover problem
was studied in the fully adaptive setting,
and their results imply the best-possible approximation ratio of $\log(Q)$ 
for stochastic submodular cover.
In \cite{Deshpande+14} an \emph{adaptive dual greedy} algorithm was presented
for this problem.  It was also shown that 
the \emph{adaptive greedy algorithm} of \cite{GolovinKr10} achieves an approximation ratio of $k \log (P)$, where $P$ is 
the maximum function value any item can contribute, and $k$ is the maximum support size 
of the distribution of any item.
There has also been work on this problem
when the realization of items can be correlated, unlike our setting where the 
realization of each item is independent.
In this setting, \cite{Kambadur+17} gives an adaptive algorithm 
which achieves an approximation ratio of $\log(Qs)$, where $Q$ is the desired 
coverage, and
$s$ denote the support size of the joint distribution 
of these correlated items.
In the case of independent realizations this quantity
will typically be exponential in the number of items.
In \cite{Grammel+16} a similar result was shown for a slightly different algorithm.

The question of adaptivity has also been studied for a related problem of \emph{stochastic submodular maximization}
 subject to cardinality constraints \cite{Asadpour+08}. 
The goal in this problem is to find a set of items with cardinality at most $k$,
 so as to maximize the expected value of a stochastic submodular function.
This paper showed that a 
non-adaptive greedy algorithm for this problem achieves an approximation ratio of $(1-\frac{1}{e})^2$
with respect to an optimal adaptive algorithm.
This result was later generalized to stochastic submodular maximization subject to matroid constraints
\cite{AsadpourNa16}.
In \cite{Gupta+17}, the adaptivity gap of
stochastic submodular maximization 
subject to a variety of \emph{prefix-closed constraints} was studied
under the setting where
the distribution of each item is Bernoulli.
This class of prefix-closed constraints includes matroid and knapsack constraints among others.
It was shown that there is a non-adaptive algorithm that
achieves an approximation ratio of $1/3$
with respect to an optimal adaptive algorithm.
In \cite{Hellerstein+15}, the problem of
stochastic submodular maximization 
was also studied under various types of constraints,
including knapsack constraints. 
An approximation ratio of $\tau$ for this problem under
 knapsack constraint was given, where $\tau$
is the smallest probability of any element in the ground set being 
realized by any item.  
The question of adaptivity has also been studied for other 
stochastic problems such as 
stochastic packing, knapsack, matching etc. (see, e.g.~\cite{Dean+05,Dean+08, YamaguchiMa18, Blum+15, Assadi+17, Assadi+16} and references therein).



There has also been a lot of work under the framework of $2$-stage or multi-stage stochastic programming \cite{Shapiro+09, SwamySh12, Charikar+05, ShmoysSw04}.
In this framework, one has to make sequential decisions in a stochastic environment, and there is a parameter 
$\lambda$, such that the cost of making the same decision increases by a factor $\lambda$
after each stage.
The stochastic program in each stage is defined 
in terms of the expected cost in the later stages.
The central question in these problems is-- when can we find good solutions to this
complex stochastic program, either by directly solving it or by 
finding approximations to it?
This largely depends on the complexity of the stochastic program at hand.
For example, if the distribution of the environment is explicitly given, then one might be 
able to solve the stochastic program exactly by using 
integer programming, and this question becomes largely \emph{computational} in nature.
This is fundamentally different than the \emph{information theoretic} question we consider in this paper.

Aside from the stochastic setting, algorithms with limited adaptivity have been studied across a wide spectrum of areas in computer science including in sorting and selection (e.g.~\cite{Valiant75,Cole86,BravermanMW16}), multi-armed bandits (e.g.~\cite{PerchetRCS15,Agarwal+17}), algorithms design (e.g.~\cite{BalkanskiSi18,Emamjomeh16,EneN18,BalkanskiRS18}), among others; we refer the interested reader to these papers and references therein for more details.

\begin{remark} 
\emph{Our study of $r$-round adaptive algorithm for submodular cover is reminiscent of a recent work of Chakrabarti and Wirth~\cite{ChakrabartiW16} on multi-pass streaming algorithms 
for the set cover problem. They showed that allowing additional passes over the input in the streaming setting (similar-in-spirit to more rounds of adaptivity) 
can significantly improve the performance of the algorithms and established tight pass-approximation tradeoffs that are similar (but not identical) to $r$-round adaptivity gap bounds
in Results~\ref{res:r-round-upper} and Results~\ref{res:r-round-lower}. In terms of techniques, our upper bound result---our main contribution---is almost entirely disjoint from the techniques in~\cite{ChakrabartiW16} (and works for the more general problem of submodular cover, whereas the results in~\cite{ChakrabartiW16} are specific to set cover), while
our lower bound uses similar instances as~\cite{ChakrabartiW16} but is based on an entirely different analysis. }
\end{remark}

\subsection{Organization}
In \Sec{sec:prelim} we present some preliminaries for our problem. In \Sec{sec:technical} we present 
a technical overview of our main results. In \Sec{app:select} we present a non-adaptive selection algorithm 
that will be used to prove our upper bound result in \Sec{sec:upper}.
We present the lower bound result in \Sec{sec:lower}.

\section{Preliminaries}\label{sec:prelim}

\paragraph{Notation.} 
Throughout this paper we will use symbols $S, T,$ and $R$ to denote subsets of the ground set $E$, and use symbols $A$ and $B$ to denote subsets of $[m]$, i.e., indices of items. We will also use symbols $\S, \rT$ and $\R$ to denote subsets of $\X$ which realize to subsets $S, T$ and $R$ of the ground set $E$.

\medskip
\noindent
{\bf Submodular Functions:}\label{sec:submodular}
Let $E$ be a finite ground set and $\IN_+$ be the set of non-negative integers. For any set function $f: 2^{E} \rightarrow \IN_+$, and any set $S \subseteq E$, 
we define the \emph{marginal contribution} to $f$ as the set function $f_S: 2^{E} \rightarrow \IN_+$ such that for all $T \subseteq E$, 
\begin{align*}
	f_S(T) = f(S \cup T) - f(S). 
\end{align*}

When clear from the context, we abuse the notation and for $e \in E$, we use $f(e)$ and $f_S(e)$ instead of $f(\set{e})$ and $f_S(\set{e})$. 

A set function $f :2^{E} \rightarrow \IN_+$ is submodular iff for all $S \subseteq T \subseteq E$ and $e \in E$: $f_S(e) \geq f_T(e)$. 
Function $f$ is additionally monotone iff $f(S) \leq f(T)$. Throughout the paper, we solely focus on monotone submodular functions 
unless stated explicitly otherwise. 

We use the following two well-known facts about submodular functions throughout the paper. 

\begin{fact}\label{fact:marginal-greedy}
	Let $f(\cdot)$ be a monotone submodular function, then:
	\begin{align*}
		\forall S,T \subseteq E \qquad f(S) \leq f(T) + \sum_{e \in S \setminus T} f_T(e). 
	\end{align*} 
\end{fact}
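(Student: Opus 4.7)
The plan is to prove this standard inequality by combining monotonicity with a telescoping argument over the elements of $S \setminus T$, then applying submodularity at each step. The result is essentially saying that the value of $S$ cannot exceed the value obtained by starting from $T$ and greedily adding elements of $S \setminus T$ using the (larger) marginals measured against $T$ alone.

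First I would use monotonicity to upper bound $f(S)$ by $f(S \cup T)$, since $S \subseteq S \cup T$. This reduces the problem to bounding $f(S \cup T) - f(T)$ by the claimed sum. Next I would fix an arbitrary ordering $e_1, \ldots, e_k$ of the elements of $S \setminus T$ and define the chain $T_0 = T$ and $T_i = T \cup \{e_1, \ldots, e_i\}$ for $i \in [k]$, noting that $T_k = S \cup T$. The difference $f(T_k) - f(T_0)$ then telescopes as
\[
f(S \cup T) - f(T) \;=\; \sum_{i=1}^{k} \bigl( f(T_i) - f(T_{i-1}) \bigr) \;=\; \sum_{i=1}^{k} f_{T_{i-1}}(e_i).
\]

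For each term, since $T = T_0 \subseteq T_{i-1}$, the submodularity condition $f_S(e) \geq f_T(e)$ for $S \subseteq T$ (applied with $T$ in the role of the smaller set and $T_{i-1}$ in the role of the larger set) yields $f_{T_{i-1}}(e_i) \leq f_T(e_i)$. Summing these inequalities gives
\[
f(S \cup T) - f(T) \;\leq\; \sum_{i=1}^{k} f_T(e_i) \;=\; \sum_{e \in S \setminus T} f_T(e),
\]
and combining with $f(S) \leq f(S \cup T)$ from the first step completes the proof.

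There is no real obstacle here: the only thing to be careful about is matching the paper's definition of marginal contribution (where $f_T(e)$ denotes $f(T \cup \{e\}) - f(T)$) and invoking submodularity in the correct direction, namely that marginals are non-increasing as the conditioning set grows.
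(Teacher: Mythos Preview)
Your proof is correct and is the standard argument for this well-known inequality. The paper itself states this as a \emph{Fact} in the preliminaries without providing a proof, so there is nothing to compare against; your argument via monotonicity followed by a telescoping sum over $S \setminus T$ with submodularity applied term-by-term is exactly the canonical derivation.
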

\begin{fact}\label{fact:marginal-submodular}
	Let $f(\cdot)$ be a monotone submodular function, then for any $S \subseteq E$, $f_S(\cdot)$ is also monotone submodular. 
\end{fact}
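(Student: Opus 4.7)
The plan is to unwind both defining properties directly from the definition $f_S(T) := f(S \cup T) - f(S)$, reducing each to the corresponding property of $f$ applied to sets that contain $S$. First I would verify monotonicity of $f_S$. For any $A \subseteq B \subseteq E$ we have $S \cup A \subseteq S \cup B$, so the monotonicity of $f$ yields $f(S \cup A) \leq f(S \cup B)$, and subtracting the constant $f(S)$ from both sides gives $f_S(A) \leq f_S(B)$, as required. Note non-negativity of $f_S$ also follows immediately from monotonicity of $f$ by setting $A = \emptyset$.

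Next I would verify submodularity of $f_S$. Fixing $A \subseteq B \subseteq E$ and any element $e \in E$, the desired inequality $(f_S)_A(e) \geq (f_S)_B(e)$ unfolds to
\begin{align*}
\bigl(f(S \cup A \cup \{e\}) - f(S)\bigr) - \bigl(f(S \cup A) - f(S)\bigr) \;\geq\; \bigl(f(S \cup B \cup \{e\}) - f(S)\bigr) - \bigl(f(S \cup B) - f(S)\bigr),
\end{align*}
in which the $f(S)$ terms cancel and the statement collapses to $f_{S \cup A}(e) \geq f_{S \cup B}(e)$. Since $S \cup A \subseteq S \cup B$, this is precisely the submodularity of $f$ applied to the pair of sets $(S \cup A, S \cup B)$ and the element $e$, so it holds by hypothesis.

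There is no substantive obstacle to expect here: the argument is a pure definition-chase, and the only mild edge case is $e \in S$, in which $S \cup A \cup \{e\} = S \cup A$ and $S \cup B \cup \{e\} = S \cup B$, so both sides of the submodularity inequality for $f_S$ vanish and the claim is trivially true. Combining the two verifications shows that $f_S$ is both monotone and submodular as a set function on $2^E$, completing the proof.
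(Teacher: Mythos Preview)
Your proof is correct and is the standard definition-chase argument. The paper itself does not supply a proof for this fact (it is stated as a well-known background result without justification), so there is nothing to compare against; your write-up would serve perfectly well as a proof.
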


\section{Technical Overview}\label{sec:technical}

We give here an overview of the techniques used in our upper and lower bound results.

\subsection{Upper Bound on $r$-round Adaptivity Gap} 
In this discussion we focus mainly on our non-adaptive ($r=1$) algorithm, which already deviates significantly 
from the previous work of Goemans and Vondrak~\cite{GoemansVo06}. A non-adaptive algorithm simply picks a permutation of items and realize them one by one in a set $S$ until $f(S) = Q$.
Hence, the ``only'' task in designing a non-adaptive algorithm is to find a ``good'' ordering of items, that is, an ordering such that its prefix that covers $Q$ has a low expected cost.

Consider the following algorithmic task: In the setting of stochastic submodular cover problem, suppose we are given
a (ordered) set $\rS$ of stochastic items. Can we pick a low-cost (ordered) set $\rT$ of stochastic items non-adaptively (without looking at a realization of $\rS$ or $\rT$) 
so that the coverage of $\rS \cup \rT$ is sufficiently larger than $\rS$, i.e., $\Ex\bracket{f_{\rS}(\rT)}$ is large? Assuming we can do this, we can use this primitive to find sets with large coverage non-adaptively and iteratively, by starting from the empty-set and using this primitive to increase the coverage further repeatedly.

Recall that in the non-stochastic setting, the greedy algorithm is precisely solving this problem,
i.e., finds a set $T$ such that $\frac{f_{S}(T)}{\cost{T}} \geq \frac{Q - f(S)}{\cost{\OPT}}$, where with a slight abuse of notation, $\OPT$ here denotes the optimal non-stochastic cover of $f(E)$. 
This suggests that one can always find a ``low'' cost set $T$ with a large marginal contribution to $S$. 
For the stochastic problem, however, it is not at all clear whether there always 
exists a ``low cost'' (compared to adaptive $\OPT$) $\rT$ whose expected marginal contribution to $\S$ is large. This is because there are many different realizations possible for 
$\rS$, and each realization $S$, in principle may require a \emph{dedicated} set of items $\rT(S)$ to achieve a large value $\Ex\bracket{f_S(\rT(S)) \mid S}$. As such, while adaptive \OPT can 
first discover the realization $S$ of $\rS$ and based on that choose $\rT(S)$ to increase the expected coverage, a non-adaptive algorithm needs to instead pick $\cup_{S \in \rS} \rT(S)$, which can have a much larger cost (but the same marginal contribution). 
This suggests that cost of non-adaptive algorithm can potentially grow with the size of all possible realizations of $S$. We point out that this task remains challenging 
even if all remaining inputs other than $\rS$ are non-stochastic, i.e., always realize to a particular item.  

Nevertheless, it turns out that no matter the size of the set of all realizations of $\rS$, one can always find a set of stochastic items $\rT$ such that 
$\Ex\bracket{f_{\rS}(\rT)} = \Omega(1) \cdot \Ex\bracket{Q-f(\rS)}$ while $\cost{\rT} = \Ot(Q) \cdot \Ex\bracket{\cost{\OPT}}$, i.e., achieve a marginal contribution proportional to $\Ex\bracket{Q-f(\rS)}$ while paying cost
which is $\Ot(Q)$ times larger than $\OPT$ (here $\OPT$ corresponds to an optimal adaptive algorithm corresponding the residual problem of covering $Q-f(\rS)$). Compared to the non-stochastic setting,
this cost is $\Ot(Q)$ times larger than the analogous cost in the non-stochastic setting (see Example~\ref{exp:deficit}). This part is one of the main technical ingredients of our paper (see Theorem~\ref{thm:select}). We briefly describe the main ideas behind this proof. 

The idea behind our algorithm is to \emph{sample} several realizations $S_1,\ldots,S_{\Xi}$ from $\rS$ and pick
a low cost dedicated set $\rT_i$ for each $S_i$ such that $\Ex\bracket{f_{S_i}(\rT_i)}$ is large (here, the randomness is only on realizations of $\rT_i$). This step is quite similar to
solving the non-adaptive submodular maximization problem with knapsack constraint for which we design a new algorithm based on an adaptation of Wolsey's LP~\cite{Wolsey82} (see Theorem~\ref{thm:detsel}
and discussion before that for more details and comparison with existing results). This allows us to bound the cost of each set $\rT_i$ by $O(\Ex\bracket{\cost{\OPT}})$. The final (ordered) set returned
by this algorithm is then $\rT := \rT_1 \cup \ldots \cup \rT_\Xi$. 
The 
ordering within items of $\rT$ does not matter.

The \emph{main} step of this argument 
is however to bound the value of $\Xi$, i.e., the number of samples, by $O(Q)$. This step is done by bounding the total 
contribution of sets $\rT_1,\ldots,\rT_{\Xi}$ on their own, i.e., $\Ex\bracket{f(\rT_1 \cup \ldots \cup \rT_\Xi)}$ independent of the set $\rS$. 
The intuition is that if we choose, say $\rT_1$, with respect to some realization $S$ of $\rS$, but $\rT_1$ does not have a marginal contribution 
to most realizations $S'$ of $\rS$, then this means that by picking another set $\rT_{2}$, the set $\rT_1 \cup \rT_2$ needs to have a coverage larger than both $\rT_1$ and $\rT_2$. 
As a result, if we repeat this process sufficiently many times, we should eventually be able to increase $\Ex\bracket{f_{\rS}(\rT)}$, simply because otherwise $f(\rT) > Q$, a contradiction. 

We now use this primitive to design our non-adaptive algorithm as follows: we keep adding set of items to the ordering using the primitive above in
\emph{iterative phases}. In each phase $p$, we run the above primitive multiple times to find a set $\rS_p$ with $\Ex\bracket{Q-f(\rS_p) \mid \event_{p-1}} = o(1)$, where $\event_{p-1}$ is the event that
the realization of items picked in previous phases of the algorithm did not cover $Q$ entirely. We further bound the cost of the set $\rS_p$ with the expected cost of $\OPT$ conditioned on the event $\event_{p-1}$, i.e.,
$\Ex\bracket{\cost{\OPT} \mid \event_{p-1}}$. Notice that this quantity can potentially be much larger than the expected cost of $\OPT$. However, since the probability that in the permutation returned by the non-adaptive algorithm, 
we ever need to realize the sets in $\rS_{p}$ is bounded by $\Pr\paren{\event_{p-1}}$, we can \emph{pay} for the cost of these sets in expectation. By repeating these phases, we can reduce the probability 
of not covering $Q$ \emph{exponentially fast} and finalize the proof. 

We then extend this algorithm to an $r$-round adaptive algorithm for any $r \geq 1$. 
For simplicity, let us only mention the extension to $2$ rounds (extending to $r$ is then straightforward). We spend the first round to find a (ordered) set $\rS$ with $f(S) \geq Q - \sqrt{Q}$ with high probability for any realizations $S$ of $\rS$. 
We extend our main primitive above to ensure that if $\Ex\bracket{Q-f(\rS)} \geq \sqrt{Q}$, then we can find a set $\rT$ with $\Ex\bracket{f_{\rS}(\rT)} = \Omega(1) \cdot \Ex\bracket{Q-f(\rS)}$ and
 $\cost{\rT} = \Ot(\sqrt{Q}) \cdot \Ex\bracket{\cost{\OPT}}$ (as opposed to $O(Q)$ in the original statement). This is achieved by the fact that when the deficit $Q - f(\rS)$ is sufficiently large 
then the rate of coverage per cost is higher, as 
opposed to when the deficit $Q - f(\rS)$ is very small.
Precisely, we exploit the fact that the gap of $Q - f(\rS)$ is sufficiently large 
to reach the contradiction in the original argument with only $O(\sqrt{Q})$ sets  $\rT_1,\rT_2,\ldots$. We then run the previous algorithm using this primitive by setting the threshold $\tau_1 = Q - \sqrt{Q}$. 
 In the next round, we simply run our previous algorithm on the function $f_{S}(\cdot)$ where $S$ is the realization in the first round.
As $f_S(\cdot)$ has maximum value at most $O(\sqrt{Q})$, by the previous argument we only need to pay $\Ot(\sqrt{Q})$ times expected cost of $\OPT$, hence our total cost is $\Ot(\sqrt{Q}) \cdot \Ex\bracket{\cost{\OPT}}$. 
Extending this approach to $r$-round algorithms is now straightforward using similar ideas as the thresholding greedy algorithm for set cover (see, e.g.~\cite{CormodeKW10}).

\subsection{Lower Bound on Adaptivity Gap}
 We prove our lower bound for the stochastic set cover problem, a special case of stochastic submodular cover problem (see Example~\ref{exp:set-cover}). 
Let us first sketch our lower bound for two
round algorithms. Let $\SS:=\set{U_1,\ldots,U_k}$ be a collection of $k = \poly{(n)}$ sets to be determined later (recall that $n$ is the size of the universe $U$ we aim to cover). 
Consider the following instance of stochastic set cover: there exists a single stochastic set $\rT$ which realizes to one set chosen uniformly at random
from sets $\overline{U_1},\ldots,\overline{U_k}$, i.e., complements of the sets in $\SS$. We further have $k$ additional 
stochastic sets where $\rT_i$ realizes to $U_i \setminus \set{e}$ for $e$ chosen uniformly at random from $U_i$.
Finally, for any element $e \in U$, we have a set $\rT_e$ with only one realization which is the singleton set $\set{e}$ (i.e., $\rT_e$ always covers $e$). 

Consider first the following adaptive strategy: pick $\rT$ in the first round and see its realization, say, $\overline{U_i}$. Pick $\rT_i$ in the second round and see its realization, say $U_i \setminus \set{e}$. Pick $\rT_e$ in the third round. 
This collection of sets is $\paren{U \setminus U_i} \cup \paren{U_i \setminus \set{e}} \cup (\set{e}) = U$, hence it is a feasible cover. As such, in only $3$ rounds of adaptivity, we were able to find a solution with cost only $3$. 

A two-round algorithm is however one round short of following the above strategy. One approach to remedy this would be try to make a ``shortcut''  by picking more than 
one sets in each round of this process, e.g., pick the set $\rT_i$ also in the first round. 
However, it is easy to see that as long as we do not pick $\Omega(k)$ sets in the first round, or $\Omega(\card{U_i})$ sets in the second round, we have a small chance of making such a shortcut. We are not done yet as it is possible
that the algorithm covers the universe using entirely different sets (i.e., do not follow this strategy). To ensure that cannot help either, we need the sets in $U_1,\ldots,U_k$ to have ``minimal'' intersection; this in turns
limits the size of each set $U_i$ and hence the eventual lower bound we obtain using this argument.

We design a family of instances that allows us to extend the above argument to $r$-round adaptive algorithms. 
We construct these instances using the \emph{edifice} set-system of Chakrabarti and Wirth~\cite{ChakrabartiW16}
that poses a ``near laminar'' property, i.e., any two sets are either subset-superset of one another or have ``minimal'' intersection. We remark that this set-system was originally introduced by~\cite{ChakrabartiW16} for designing 
multi-pass streaming lower bounds for the set cover problem. While the instances we create in this work are similar to the instances of~\cite{ChakrabartiW16}, the proof of our lower bound is entirely different (lower bound of
\cite{ChakrabartiW16} is proven using a reduction in communication complexity).

\section{The Non-Adaptive Selection Algorithm}
\label{app:select}

We introduce a key primitive of our approach in this section for solving the following task: 
Suppose we have already chosen a subset $\rS \subseteq \rX$ of items but we are not aware of the realization of these
items; our goal is to non-adaptively add another set $\rT$ to $\rS$ to increase its expected coverage. 
Formally, given any monotone submodular function $g: 2^{E} \> \IN_+ $, let $Q_g := g(E)$ be the required coverage on $g$.
Also, for any realization $S$ of $\rS$, we use $\Delta(S) := Q_g - g(S)$ to refer to the \emph{deficit} in covering $Q_g$, and denote by $\Delta := \Ex\bracket{\Delta(\rS)}$ the expected deficit of set $\rS$. 
Our goal is now to add (still non-adaptively) a ``low-cost'' (compared to adaptive \OPT) set $\rT $ to $\rS$ to decrease the \emph{expected} deficit. It is easy to see that such a primitive would be helpful for finding 
sets with ``large'' coverage non-adaptively and iteratively, by starting from the empty-set and use this primitive to reduce the deficit further by picking another set and then repeat the process starting from this set. 

Let us start by giving an example which shows some of the difficulty of this task. 

\begin{example}\label{exp:deficit}
\vspace{1mm}
	Consider an instance of stochastic set cover: there exists a single set, say $\rX_1$ which realizes to $U \setminus \set{\estar}$ for an element $\estar$ chosen uniformly at random from $U$
	and $n$ singleton sets $\rX_2, \cdots \rX_{n+1}$, each covering a unique element in $U$. If we have already chosen $\rX_1$, and want to chose more sets in order to decrease the expected deficit, then it is easy to see that
	even though the cost of $\OPT$ is only $2$, no collection of $o(n)$ sets can decrease the expected deficit by one. This should be contrasted with the non-stochastic setting in which there always exists
	a \emph{single} set that reduces a deficit of $\Delta$ by $\Delta/\cost{\OPT}$. 
\end{example}

We are now ready to state our main result in this section. 
\begin{theorem}
\label{thm:select}
Let $\rX$ be a collection of items, and let $g$ be any monotone submodular function such that $ g(X) = Q_g$
for every realization $X$ of $\X$.
Let $\rS \subseteq \rX$ be any subset of items and define $\Delta:=\Ex\bracket{Q_g - g(\rS)}$. Given any parameter $\alpha \geq Q_g/\Delta$, there is a \emph{randomized non-adaptive} algorithm
that outputs a set $\rT \subseteq \rX \setminus \rS$ such that cost of $\rT$ is $O(\alpha) \cdot \Ex\bracket{\cost{\OPT}}$ in expectation over the randomness of the algorithm and 
$\Ex\bracket{Q_g - g(\rS \cup \rT)} \leq 5\Delta/6$ over the randomness of the algorithm and realizations of $\rS$ and $\rT$. 
Here $\OPT$ is an optimal fully-adaptive algorithm for the stochastic submodular cover problem with the function $g$\footnote{Throughout this paper we will abuse notation by refering to an optimal 
fully-adaptive algorithm for different problem instances using the same notation $\OPT$. The specific problem instance will be clear from context.}.
\end{theorem}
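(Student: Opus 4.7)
The plan is to construct $\rT$ as a union of $\Xi = O(\alpha)$ ``hypothesis-driven'' sets of items, one per independently \emph{simulated} realization of $\rS$. The key primitive, which I would obtain from an LP-based non-adaptive selection subroutine in the spirit of Wolsey (to be made formal as Theorem~\ref{thm:detsel}), is a randomized algorithm $\mathcal{A}$ with the following guarantee: on any fixed realization $s$ of $\rS$, $\mathcal{A}$ outputs a set $\mathcal{A}(s) \subseteq \rX \setminus \rS$ of items whose realization satisfies $\Ex\bracket{g_s(\mathcal{A}(s))} \geq \tfrac12 \bigl(Q_g - g(s)\bigr)$ while $\cost{\mathcal{A}(s)} = O(\Ex\bracket{\cost{\OPT}})$. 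The overall algorithm draws $\Xi := \lceil 3\alpha \rceil$ i.i.d.\ samples $S_1,\ldots,S_\Xi$ from the distribution of $\rS$ (purely in simulation; the actual realization of $\rS$ is never consulted), sets $\rT_i := \mathcal{A}(S_i)$, and returns $\rT := \rT_1 \cup \cdots \cup \rT_\Xi$. The cost bound is immediate from linearity of expectation: $\Ex\bracket{\cost{\rT}} \leq \sum_i \Ex\bracket{\cost{\rT_i}} = O(\Xi) \cdot \Ex\bracket{\cost{\OPT}} = O(\alpha) \cdot \Ex\bracket{\cost{\OPT}}$.

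The heart of the proof is the deficit bound, which I would establish via a symmetrization (exchangeability) argument. Couple $\rS$ with the simulated samples by setting $X_0 := \rS$ and $X_i := S_i$ for $i \in [\Xi]$; since $X_0,\ldots,X_\Xi$ are i.i.d.\ and the per-sample randomness of $\mathcal{A}$ is independent across $i$, the pairs $\{(X_i,\mathcal{A}(X_i))\}_{i=0}^\Xi$ form an exchangeable collection. Writing $\rT^{(\Xi)} := \mathcal{A}(X_1) \cup \cdots \cup \mathcal{A}(X_\Xi)$, monotonicity together with the subroutine's guarantee applied to $X_0$ gives
\[
\Ex\bracket{g\bigl(X_0 \cup \mathcal{A}(X_0) \cup \rT^{(\Xi)}\bigr)} \;\geq\; \Ex\bracket{g\bigl(X_0 \cup \mathcal{A}(X_0)\bigr)} \;\geq\; Q_g - \tfrac12 \Delta.
\]
Expanding the left side as $\Ex\bracket{g(X_0 \cup \rT^{(\Xi)})} + \Ex\bracket{g_{X_0 \cup \rT^{(\Xi)}}(\mathcal{A}(X_0))}$ and using $g_{X_0 \cup \rT^{(\Xi)}}(\mathcal{A}(X_0)) \leq g_{\rT^{(\Xi)}}(\mathcal{A}(X_0))$ by submodularity, the target bound $\Ex\bracket{Q_g - g(\rS \cup \rT)} \leq 5\Delta/6$ reduces to showing $\Ex\bracket{g_{\rT^{(\Xi)}}(\mathcal{A}(X_0))} \leq \Delta/3$.

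Exchangeability makes this last step clean: for every $i \in \{0,1,\ldots,\Xi\}$, $\Ex\bracket{g_{\bigcup_{j \neq i}\mathcal{A}(X_j)}(\mathcal{A}(X_i))} = \Ex\bracket{g_{\rT^{(\Xi)}}(\mathcal{A}(X_0))}$. Summing over $i$ and applying the standard ``leave-one-out'' submodular bound
\[
\sum_{i=0}^{\Xi} g_{\bigcup_{j \neq i} T_j}(T_i) \;\leq\; \sum_{i=0}^{\Xi} g_{\bigcup_{j < i} T_j}(T_i) \;=\; g\Bigl(\bigcup_{i=0}^{\Xi} T_i\Bigr) \;\leq\; Q_g,
\]
which follows from the telescoping identity together with the submodular marginal inequality (since $\bigcup_{j<i}T_j \subseteq \bigcup_{j \neq i}T_j$), yields $(\Xi+1)\Ex\bracket{g_{\rT^{(\Xi)}}(\mathcal{A}(X_0))} \leq Q_g$. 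Since $\Xi + 1 \geq 3\alpha \geq 3 Q_g/\Delta$, this gives $\Ex\bracket{g_{\rT^{(\Xi)}}(\mathcal{A}(X_0))} \leq Q_g/(\Xi+1) \leq \Delta/3$, as required.

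The main obstacle is precisely the phenomenon illustrated by Example~\ref{exp:deficit}: a set $\mathcal{A}(S_i)$ tailor-made for the simulated realization $S_i$ may have essentially zero marginal contribution against the \emph{actual} realization of $\rS$, so no single $\rT_i$ -- and indeed no ``instance-specific'' union argument -- can provably reduce the deficit. The symmetrization above circumvents this by shifting to a global accounting: if the $\rT_i$'s were individually unhelpful against $\rS$, exchangeability would force them to cover a large amount of the ground set on their own, and this is capped at $Q_g$; it is precisely this capping that forces the sample size $\Xi = \Theta(Q_g/\Delta)$. A secondary and more routine obstacle is constructing the primitive $\mathcal{A}$ with its per-sample cost guarantee against the expected cost of adaptive $\OPT$, which I would handle by adapting Wolsey's LP to the stochastic setting as outlined in the technical overview and formalized in Theorem~\ref{thm:detsel}.
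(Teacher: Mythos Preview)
Your proposal is correct, and the strategy is genuinely different from the paper's. Both arguments hinge on the same global accounting idea---the union of the $\rT_i$'s is capped at $Q_g$, so they cannot all be useless against $\rS$---but you extract this in a cleaner way.

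The paper proceeds by contradiction: assuming $\Ex[g_\rS(\rT)]<\Delta/6$, it argues \emph{sequentially} via its Claim~\ref{clm:increase-coverage} that each $\rT_i$ must then contribute at least $\Delta/6$ in marginal value over $\rT_{<i}$, and the telescoping sum $\sum_i \Ex[g_{\rT_{<i}}(\rT_i)]=\Ex[g(\rT)]$ overshoots $Q_g$. Your argument instead introduces the ghost sample $\mathcal{A}(X_0)$, observes that $\{\mathcal{A}(X_i)\}_{i=0}^{\Xi}$ is exchangeable, and uses the leave-one-out inequality $\sum_i g_{\bigcup_{j\neq i}T_j}(T_i)\le g(\bigcup_i T_i)\le Q_g$ to bound the single term $\Ex[g_{\rT}(\mathcal{A}(X_0))]$ directly. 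This is more symmetric and avoids the contradiction wrapper; it also makes the role of $\alpha\ge Q_g/\Delta$ (as the required sample count) completely transparent. The paper's sequential view, on the other hand, is slightly closer to how one would \emph{derandomize} or refine the construction, since it tracks progress iteration by iteration.

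One small point: the primitive the paper actually proves (Theorem~\ref{thm:detsel}) yields $\Ex[g_s(\mathcal{A}(s))]\ge \tfrac13(Q_g-g(s))$, not $\tfrac12$; the constant $\tfrac13$ is tied to the factor $2$ in the constraint of LP~(\ref{eqn:lp}), which in turn comes from Claim~\ref{clm:constraint-holds}. With $\tfrac13$ in place of $\tfrac12$, your computation requires $\Ex[g_{\rT}(\mathcal{A}(X_0))]\le \Delta/6$, hence $\Xi+1\ge 6Q_g/\Delta$, i.e., $\Xi=6\alpha$---exactly the paper's choice. So the discrepancy is only in the stated constants, not the argument.
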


The goal in Theorem~\ref{thm:select}, is to select a set of items that can decrease the deficit of a \emph{typical} realization $S$ of $\rS$ (i.e., the expected deficit).
In order to do so, we first design a non-adaptive algorithm that finds a low-cost set that can decrease the deficit of a \emph{particular} realization $S$ of $\rS$. This step is closely related 
to solving a stochastic submodular maximization problem subject to a knapsack constraint. Indeed, when costs of all the items are the same, i.e., when we want to minimize the number of items in the solution, 
one can use the algorithm of~\cite{Asadpour+08} (with some small modification) for stochastic submodular maximization subject to cardinality constraint for this purpose. 
Also, when the random variables $\rX_i$'s have binary realizations,
i.e.\ take only two possible values, then one can use the algorithm of \cite{Gupta+17} for this purpose.
However, we are not aware of a solution for the knapsack constraint of the problem in its
general form
with the bounds required in our algorithms, and hence we present an algorithm for this task as well.  The \emph{main step} of our argument is however on how to use this algorithm to prove Theorem~\ref{thm:select}, i.e., 
move from per-realization guarantee, to the expectation guarantee. 

\subsection{A Non-Adaptive Algorithm for Increasing Expected Coverage}\label{sec:maximization} 

We start by presenting a non-adaptive algorithm that picks a low-cost (compared to the \emph{expected} cost of \OPT) set of items \emph{deterministically},
while achieving a constant factor of \emph{coverage} of \OPT. 
For any set $A \subseteq [m]$, i.e., the set of indices of stochastic items, and any realization $X$ of $\rX$, we define $X_A := \set{x_i \mid i \in A}$, i.e, the realization of all items corresponding to indices in $A$.

\begin{theorem}\label{thm:detsel}
There exists a non-adaptive algorithm that takes as input a
set of items $\X$, a monotone submodular function $f$, and a parameter $\bQ$ such that $f(X) = \bQ$ for any realization $X$ of $\X$, and outputs a set $A \subseteq [m]$
such that $(i)$ $\c(\rX_A) \leq 3\cdot\Ex\bracket{\c(\OPT)}$ 
and $(ii)$ $\Ex_{X_A \sim \rX}\bracket{f(X_A)} \geq {\bQ}/{3}$.
Here, $\OPT$ is the optimum adaptive algorithm for submodular cover on $\X$ with function $f$ and parameter $Q = \bQ$.
\end{theorem}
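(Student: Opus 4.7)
The plan is to design a polynomial-time non-adaptive algorithm via an LP-relaxation-and-greedy-rounding approach, adapting Wolsey's LP for the deterministic submodular cover problem. I would first pass to the \emph{expected-coverage} surrogate $g : 2^{[m]} \to \IR_+$ defined by $g(A) := \Ex_{X \sim \rX}[f(X_A)]$. Since a non-negative linear combination of monotone submodular functions is itself monotone submodular, $g$ inherits these properties on the ground set $[m]$; and the hypothesis $f(X) = \bQ$ for every realization gives $g([m]) = \bQ$. The task then reduces to finding $A \subseteq [m]$ with $g(A) \geq \bQ/3$ and $\c(\rX_A) \leq 3 \Ex[\c(\OPT)]$, a statement now about the deterministic submodular function $g$.

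Next I would write down the Wolsey-style covering LP on $g$,
\[
\min \sum_i c_i y_i \quad \text{s.t.} \quad \sum_i y_i \bigl(g(A+i) - g(A)\bigr) \geq \bQ - g(A) \;\; \forall A \subseteq [m],\ \ y_i \geq 0,
\]
and prove the key lemma $L \leq \Ex[\c(\OPT)]$, where $L$ is the LP optimum. I would exhibit the feasible certificate $y_i^\ast := \Pr[i \in \OPT(X)]$ (probability over the realization $X$ and any internal randomness of $\OPT$). Its cost is exactly $\sum_i c_i y_i^\ast = \Ex[\c(\OPT)]$. For feasibility I combine (i) pointwise submodularity on each realization (Fact~\ref{fact:marginal-greedy}), which gives $\bQ - f(X_A) \leq \sum_{i \in \OPT(X)} f_{X_A}(X_i)$, with (ii) the independence structure: since the $\rX_i$'s are independent and $\OPT$'s decision whether to include $i$ is finalized \emph{before} $\rX_i$ is realized, $\1[i \in \OPT(X)]$ is conditionally independent of $\rX_i$ given $X_A$. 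Conditioning on $X_A$ and integrating over the product distribution decouples the two factors and recovers the LP constraint from the submodularity bound.

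With $L \leq \Ex[\c(\OPT)]$ in hand, the algorithm is Wolsey's greedy rounding specialised to $g$: start with $A = \emptyset$; at each step add the index $i$ that maximises the marginal-to-cost ratio $(g(A+i) - g(A))/c_i$; halt the first time $g(A) \geq \bQ/3$. The standard LP-based analysis shows that at each iteration the best ratio is at least $(\bQ - g(A))/L$, so the deficit $\bQ - g(A)$ contracts by a factor $(1 - c_{i_k}/L)$ per step, giving $\bQ - g(A_k) \leq \bQ \prod_{j \leq k}(1 - c_{i_j}/L) \leq \bQ \, e^{-\c(\rX_{A_k})/L}$. Since we only need to reach $\bQ/3$ (so $\bQ - g(A) \geq 2\bQ/3$ until we halt), and after accounting for the overshoot of the final item chosen, the total cost at termination is at most $3L \leq 3 \Ex[\c(\OPT)]$.

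The main obstacle I expect is step (ii) of the LP upper bound when $A \neq \emptyset$: for the base case $A = \emptyset$ the decoupling is immediate, but in general both $\1[i \in \OPT(X)]$ and $f_{X_A}(X_i)$ are functions of the full realization $X$ and can correlate through $X_A$. The resolution leverages the product structure of $\rX$ together with a careful ordering of the randomness: after conditioning on $X_A$, the decision about including any $i \notin A$ depends only on $X_j$ for $j \notin A \cup \{i\}$ and is therefore independent of $X_i$, which is precisely what is needed to take the expectation inside and recover the decoupled form of the constraint from the pointwise submodularity bound.
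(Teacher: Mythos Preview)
Your overall plan mirrors the paper's, but the LP feasibility step has a real gap. You correctly observe that, conditioned on $X_A$, the indicator $\1[i\in\OPT(X)]$ is independent of $x_i$; however, this only yields
\[
\Ex_X\bigl[\1[i\in\OPT(X)]\cdot f_{X_A}(x_i)\bigr]=\Ex_{X_A}\Bigl[\Pr[i\in\OPT\mid X_A]\cdot \Ex_{x_i}[f_{X_A}(x_i)]\Bigr],
\]
which is an expectation of a product of two functions of $X_A$ and does \emph{not} factor into $y_i^\ast\cdot(g(A{+}i)-g(A))$. A two-item counterexample already breaks your constraint: take set cover on $U=\{a,b\}$, $\bQ=2$, unit costs, $X_1\in\{\{a,b\},\{a\}\}$ uniformly, $X_2\equiv\{b\}$. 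Then $\OPT$ picks $X_1$ and adds $X_2$ only if needed, so $y_2^\ast=\tfrac12$; with $A=\{1\}$ one computes $g(\{1\})=\tfrac32$, $g(\{1,2\})-g(\{1\})=\tfrac12$, hence the LHS of your constraint is $\tfrac14$ while the RHS is $\bQ-g(\{1\})=\tfrac12$. The correlation is exactly the one you flagged as the ``main obstacle'': when $X_A$ already covers a lot, both $\Pr[i\in\OPT\mid X_A]$ and the marginal $f_{X_A}(x_i)$ tend to be small, so the product is positively correlated through $X_A$.

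The paper's remedy is to decouple via an \emph{independent copy}: it proves the pointwise inequality $f(X_A)+f(X'_A)+\sum_{i\notin A} f_{X'_A}(x_i)\,w_i(X)\geq\bQ$ for two independent realizations $X,X'$, so that $f_{X'_A}(x_i)$ depends only on $x_i$ once $X'$ is fixed and is therefore genuinely independent of $w_i(X)$. Averaging over $X$ and then $X'$ yields a relaxed LP with right-hand side $\bQ-2F(A)$ rather than $\bQ-F(A)$. The greedy analysis then uses the bound $F_{A_{<i}}(j_i)/c_{j_i}\geq(\bQ-2F(A_{<i}))/P$, which is why the stopping threshold is $\bQ/3$ (so that $\bQ-2F(A)\geq\bQ/3$) and the final cost bound comes out to $3P$. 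Your exponential-decay argument for the greedy phase would need the same adjustment once the LP is corrected.
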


As argued before, Theorem~\ref{thm:detsel} can be interpreted as an algorithm for submodular maximization subject to {knapsack constraint}. 

To prove Theorem~\ref{thm:detsel}, we design a simple greedy algorithm (similar to the greedy algorithm for submodular maximization) and analyze it using a linear programming (LP) relaxation
in the spirit of Wolsey's LP \cite{Wolsey82} defined in the following section. 

\subsection*{Extension of Wolsey's LP for Stochastic Submodular Cover}
Let us define the function $F: 2^{[m]} \> \IN_+$ as follows: for any $A \subseteq [m]$,
\begin{align}
 F(A) := \Ex_{X_A \sim \rX}\bracket{f(X_A)}. \label{eq:F}
\end{align}
As we assume in the lemma statement that $\bQ := \Ex_{X \sim \X }[f(X)]$, we have $F([m]) = \bQ$ as well. 
For any $B \subseteq [m]$, we further define the marginal contribution function $F_B: 2^{[m]} \> \IN_+$ where $F_B(A) := F(A \cup B) - F(B)$ for all $A \subseteq [m] \setminus B$. 
The following proposition is straightforward. 
\begin{proposition}
\label{prop:exp-submodular}
Function $F$ is a monotone submodular function.
\end{proposition}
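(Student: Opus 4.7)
The plan is to verify monotonicity and submodularity pointwise (realization-by-realization) and then lift to $F$ via linearity of expectation, crucially using that the $\rX_i$'s are independent.

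\textbf{Monotonicity.} For any $A \subseteq B \subseteq [m]$ and any joint realization $X = (x_1,\dots,x_m)$ of $\rX$, the sets $X_A = \{x_i : i \in A\}$ and $X_B = \{x_i : i \in B\}$ satisfy $X_A \subseteq X_B$, so monotonicity of $f$ gives $f(X_A) \le f(X_B)$. Taking expectations over $X \sim \rX$ and using linearity yields $F(A) \le F(B)$.

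\textbf{Submodularity.} I would use the diminishing-returns formulation: for every $A \subseteq B \subseteq [m]$ and every $i \in [m] \setminus B$, I want to show
\begin{align*}
F(A \cup \{i\}) - F(A) \;\ge\; F(B \cup \{i\}) - F(B).
\end{align*}
Fix any such $A, B, i$. For any joint realization $X$, pointwise submodularity of $f$ applied to the chain $X_A \subseteq X_B$ with the added element $x_i$ gives
\begin{align*}
f(X_A \cup \{x_i\}) - f(X_A) \;\ge\; f(X_B \cup \{x_i\}) - f(X_B).
\end{align*}
Now I would take $\Ex_{X \sim \rX}[\cdot]$ on both sides. The left-hand side equals $F(A \cup \{i\}) - F(A)$ because $f(X_A \cup \{x_i\}) = f(X_{A \cup \{i\}})$, and similarly for the right-hand side. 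So after taking expectations, the claimed inequality follows.

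\textbf{Where independence enters (and what the main subtlety is).} Strictly speaking, $F(A)$ is defined via $\Ex_{X_A \sim \rX}[f(X_A)]$, i.e., the expectation only depends on the marginal distribution of the coordinates in $A$. The identification $\Ex_{X_A}[f(X_A)] = \Ex_{X}[f(X_A)]$ (so that the two expectations on both sides of the pointwise inequality above live in the same probability space and can be combined by linearity) is exactly where independence of $\rX_1,\dots,\rX_m$ is used: the marginal of $(\rX_i)_{i \in A}$ under the product measure coincides with the projection of the joint distribution. With this identification in hand, linearity of expectation does the rest, and no further calculation is required. I do not anticipate any real obstacle; the only step to be careful about is keeping the probability spaces consistent so that applying linearity of expectation to the pointwise inequality is legal.
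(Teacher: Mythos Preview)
Your proof is correct and is exactly the paper's one-line argument (``$F$ is a convex combination of submodular functions, one for each realization of $\rX$'') spelled out in full. One minor correction: independence is \emph{not} needed for this proposition --- the identity $\Ex_{X_A}[f(X_A)] = \Ex_{X}[f(X_A)]$ holds for any joint distribution on $\rX$, simply because $f(X_A)$ depends only on the $A$-coordinates and integrating out the remaining coordinates yields the marginal on $A$ by definition; so your highlighted ``main subtlety'' is a non-issue here.
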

\begin{proof}
$F$ is a convex combination of submodular functions, one for each realization of $\X$.
\end{proof}

We will use a linear programming (LP) relaxation in the spirit of Wolsey's LP \cite{Wolsey82} for the submodular
cover problem (when applied to the function $F$). Consider the following linear programming relaxation:
\begin{tbox}
\vspace{-15pt}
\begin{align}
&P =   \min_{y \in [0,1]^m}  \sum_{i =1}^m c_i \cdot y_i	\notag \\
&\text{s.t.} \sum_{i \in [m] \setminus A}  F_{A}(i) \cdot y_i \geq \bQ  - 2F(A), \quad \forall A \subseteq [m] \label{eqn:lp}
\end{align}
\end{tbox}

The difference between LP~(\ref{eqn:lp}) and Wolsey's LP is in RHS of the constraint which is $\bQ - F(A)$ in case Wolsey's LP. In the non-stochastic setting, 
one can prove that Wolsey's LP lower bounds the value of optimum submodular cover for function $F$. To extend this result to the stochastic case (for the function $f$) however, it suffices to modify the constraint 
as in LP~(\ref{eqn:lp}), as we prove in the following lemma. 


\begin{lemma}
\label{lem:lower-bound-opt}
The cost of an optimal adaptive algorithm \OPT for submodular cover on function $f$ is lower bounded by the optimal cost $P$ of LP~(\ref{eqn:lp}), i.e. $P \leq \Ex\bracket{\c{(\OPT)}}$.
\end{lemma}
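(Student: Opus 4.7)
The plan is to exhibit a feasible solution $y$ to LP~(\ref{eqn:lp}) constructed from the optimal adaptive algorithm $\OPT$, whose cost is at most $\Ex[\c(\OPT)]$. The natural candidate is $y_i := \Pr[i \in B^*]$ for each $i \in [m]$, where $B^* \subseteq [m]$ denotes the (random) set of indices of items queried by $\OPT$ on a random realization of $\rX$. By linearity of expectation, $\sum_i c_i y_i = \sum_i c_i \Pr[i \in B^*] = \Ex[\c(\OPT)]$, so only feasibility remains to be verified.

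Fix any $A \subseteq [m]$. Since $\OPT$ always returns a feasible cover, $f(X_{B^*}) = \bQ$ deterministically, and hence by monotonicity $f(X_A \cup X_{B^*}) = \bQ$ as well. Applying Fact~\ref{fact:marginal-greedy} with $S = X_A \cup X_{B^*}$ and $T = X_A$ yields the pointwise bound $\bQ - f(X_A) \leq \sum_{i \in B^* \setminus A} f_{X_A}(x_i)$, and taking expectations gives
\[
\bQ - F(A) \;\leq\; \sum_{i \notin A} \Ex\bigl[\1[i \in B^*] \cdot f_{X_A}(\rX_i)\bigr]. \qquad(\star)
\]
If $\1[i \in B^*]$ were independent of $f_{X_A}(\rX_i)$ for every $i$, then $(\star)$ would already imply the stronger Wolsey-style constraint $\sum F_A(i) y_i \geq \bQ - F(A)$. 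These random variables are however correlated in general---because $\OPT$'s decision to pick item $i$ may depend on the realizations of items in $A$---and the extra $-F(A)$ slack built into LP~(\ref{eqn:lp}) is designed precisely to absorb this adaptive correlation.

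Concretely, it will suffice to show that $\sum_i \mathrm{Cov}(\1[i \in B^*], f_{X_A}(\rX_i)) \leq F(A)$, because then $(\star)$ rearranges into $\sum_{i \notin A} F_A(i) y_i \geq \bQ - 2F(A)$, as required. My plan for bounding this total covariance is to partition the event $\{i \in B^*\}$ into two sub-events: $E_i^b$, on which $\OPT$ queries item $i$ before touching any item of $A$, and $E_i^a$, on which $\OPT$ has queried some item of $A$ prior to $i$. On $E_i^b$, the indicator depends only on realizations of items outside $A \cup \{i\}$, and is therefore independent of both $\rX_A$ and $\rX_i$, contributing zero to the covariance. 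For the residual ``after'' contribution $\Theta^a := \sum_i \Ex[\1[E_i^a] \cdot f_{X_A}(\rX_i)]$, I would replace $f_{X_A}(\rX_i)$ with $f_{X_{A \cap B^*_{<i}}}(\rX_i)$---no smaller by monotone submodularity, since $A \cap B^*_{<i} \subseteq A$---and then carefully account along $\OPT$'s execution trace so as to bound $\Theta^a$ by $\Ex[f(X_{A \cap B^*})] \leq F(A)$.

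The hard part will be precisely this last bound on $\Theta^a$. It is also the place where the factor of $2$ in LP~(\ref{eqn:lp}) is genuinely needed: instances akin to Example~\ref{exp:deficit} already witness that the natural assignment $y_i = \Pr[i \in B^*]$ can violate the tighter Wolsey-style constraint $\sum F_A(i) y_i \geq \bQ - F(A)$, so the $F(A)$ slack coming from adaptive correlation cannot be avoided under this LP-rounding approach.
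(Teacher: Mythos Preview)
Your setup is fine: defining $y_i=\Pr[i\in B^*]$, the pointwise inequality $(\star)$, and the reduction to the covariance bound $\sum_{i\notin A}\mathrm{Cov}(\1[i\in B^*],f_{X_A}(\rX_i))\le F(A)$ are all correct. The before/after split is also sound, and your claim that the ``before'' sub-event contributes zero covariance is right.

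The gap is in your plan for the ``after'' term. First, the substitution $f_{X_A}(\rX_i)\le f_{X_{A\cap B^*_{<i}}}(\rX_i)$ points the wrong way: it \emph{enlarges} $\Theta^a$, so it cannot be used to prove an upper bound on $\Theta^a$. Second, and more seriously, the target inequality $\Theta^a\le F(A)$ is simply false. Take a deterministic instance with two items: item~$1$ covers $\{1\}$ and item~$2$ covers $\{2,3,4\}$, so $\bQ=4$ and $\OPT$ must pick both, say in order $1$ then $2$. With $A=\{1\}$ we have $E_2^a$ always, $\Theta^a=f_{x_1}(x_2)=3$, but $F(A)=f(x_1)=1$. (The LP constraint is still satisfied here because the true covariance is zero; it is only your intermediate step that breaks.) The telescoping you sketch also does not chain: the marginals $f_{X_{A\cap B^*_{<i}}}(x_i)$ are contributions of items $i\notin A$ relative to subsets of $X_A$, and these do not sum to anything like $f(X_{A\cap B^*})$.

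The paper avoids this entire correlation analysis by a decoupling trick: it proves the pointwise inequality
\[
f(X_A)+f(X'_A)+\sum_{i\notin A}f_{X'_A}(x_i)\cdot w_i(X)\ \ge\ \bQ
\]
for \emph{two independent} realizations $X,X'$, where $w_i(X)=\1[i\in\OPT(X)]$. Taking expectation over $X$, the factor $f_{X'_A}(x_i)$ depends on $X$ only through $x_i$, while $w_i(X)$ is independent of $x_i$ because $\OPT$ commits to querying item $i$ before seeing its realization; hence $\Ex_X[f_{X'_A}(x_i)\,w_i(X)]=\Ex_X[f_{X'_A}(x_i)]\cdot w_i$. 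A second expectation over $X'$ then yields the LP constraint directly. The extra $F(A)$ of slack in the LP is paid for by the duplicate term $f(X_A)+f(X'_A)$, not by any covariance accounting.
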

\begin{proof}
For a realization $X$ of $\X$ and any $i \in [m]$, define an indicator random variable $w_i(X)$
that takes value $1$ iff $\OPT$ chooses $\X_i$ on the realization $X$, i.e.
\[
w_i(X) = \1[ \X_i \in \OPT(X)].
\]
Let $w_i$ be the probability that $\OPT$ chooses $\X_i$, i.e.,
\[
w_i = \Pr_{X \sim \rX} \paren{w_i(X) = 1} = \Ex\bracket{w_i(X)}.
\]
We have that, 
\begin{align*}
	\Ex\bracket{\cost{\OPT}} &= \Ex_{X}\bracket{\sum_{i=1}^{m} \1[ \X_i \in \OPT(X)] \cdot c_i}  = \sum_{i=1}^{m} w_i \cdot c_i
			\,. 
\end{align*}
In the following, we prove that $w:=(w_1,\ldots,w_m)$ is a feasible solution to LP~(\ref{eqn:lp}), which by above equation would immediately imply that $P \leq \Ex\bracket{\cost{\OPT}}$. 

Clearly $w \in [0,1]^{m}$, so it suffices to prove that the constraint holds for any set $A \subseteq [m]$. The main step in doing so is the following claim. 
\begin{claim}\label{clm:constraint-holds}
	For any set $A \subseteq [m]$, and any two realizations $X$ and $X'$ of $\rX$: 
	\begin{align*}
		f(X_A) + f(X'_A) + \sum_{i \in [m] \setminus A} f_{X'_A}(x_i) \cdot w_i(X) \geq \bQ. 
	\end{align*}
\end{claim}
\begin{proof}
	Recall that we assume $f(X) = \bQ$ always, and hence $f(\OPT(X)) = \bQ$ as well. Moreover, for any $i \in \OPT(X)$, $w_i(X) = 1$ and for $i \in [m] \setminus \OPT(X)$, $w_i(X) = 0$.
	We further define the sets: 
	\begin{align*}
		B:= \OPT(X) \cap A \qquad \text{and} \qquad C:= \OPT(X) \setminus B. 
	\end{align*}
	We have, 
	\begin{align*}
		f(X_A) + f(X'_A) + \sum_{i \in [m] \setminus A} f_{X'_A}(x_i) \cdot w_i(X)  &= f(X_A) + f(X'_A) + \sum_{x_i \in C} f_{X'_A}(x_i) \\
		&\!\!\!\!\!\!\Geq{Fact~\ref{fact:marginal-greedy}} f(X_A) + f(X'_A \cup C) \tag{by submodularity} \\
		&\geq f(X_B) + f(X_C) \tag{by monotonicity as $X_B \subseteq X_A$} \\
		&= f(X_B \cup X_C) = \bQ \tag{by submodularity and since $X_B \cup X_C = \OPT(X)$},
	\end{align*}
	which finalizes the proof.
	\Qed{Claim~\ref{clm:constraint-holds}}
	
\end{proof}

Fix any set $A \subseteq [m]$. We first take an expectation over all realizations of $X$ in LHS of Claim~\ref{clm:constraint-holds}:
\begin{align*}
	\bQ &\Leq{Claim~\ref{clm:constraint-holds}} \Ex_{X}\Bracket{f(X_A) + f(X'_A) + \sum_{i \in [m] \setminus A} f_{X'_A}(x_i) \cdot w_i(X)}  \\
	&= \Ex_{X}\bracket{f(X_A)} + f(X'_A)  + \sum_{i \in [m] \setminus A} \Ex_{X}\bracket{f_{X'_A}(x_i) \cdot w_i(X)} \\
	&= \Ex_{X}\bracket{f(X_A)} + f(X'_A) 	 + \sum_{i \in [m] \setminus A} \Ex_{X}\bracket{f_{X'_A}(x_i)} \cdot \Ex_{X}\bracket{w_i(X)}, 
\end{align*}
as random variables $f_{X'_A}(\rX_i)$ and $w_i(\rX)$ are independent since the choice of $\rX_i$ by $\OPT$ is independent of what $\rX_i$ realizes to. We further point out that $\Ex_{X}\bracket{f(X_A)}$ in the RHS of 
last equation above is equal to $F(A)$ by definition in Eq~(\ref{eq:F}) and $\Ex_{X}\bracket{w_i(X)} = w_i$. 
 
We further take an expectation over all realizations of $X'$ in the RHS above:
\begin{align*}
	\bQ &\leq \Ex_{X'}\Bracket{F(A) + f(X'_A) + \sum_{i \in [m] \setminus A} \Ex_{X}\bracket{f_{X'_A}(x_i)} \cdot w_i} \\
	&\!\!\!\!\Eq{Eq~(\ref{eq:F})} F(A) + F(A) + \sum_{i \in [m] \setminus A} \Ex_{X'}\Ex_{X}\bracket{f_{X'_A}(x_i)} \cdot w_i \\
	&= 2 \cdot F(A) + \sum_{i \in [m] \setminus A} F_{A}(i) \cdot w_i 
		\,, 
\end{align*}
as $F_A(i) = \Ex_{X'}\Ex_{X}\bracket{f(X'_A \cup X_i) - f(X'_A)}$. Rewriting the above equation, we obtain that the constraint associated with set $A$ is satisfied by $w$. 
This concludes the proof that $w$ is a feasible solution. 
\Qed{Lemma~\ref{lem:lower-bound-opt}}

\end{proof}

\subsection*{The Non-Adaptive-Greedy Algorithm}
We now design an algorithm, namely $\nagreedy$, based on ``the greedy algorithm'' (for submodular optimization) applied to the function $F$ in the last section and then use LP~(\ref{eqn:lp}) to analyze it. We emphasize
that the use of the LP is only in the analysis and not in the algorithm.

\begin{tbox}
\label{algo:non-adaptive}
{$\nagreedy(\rX,f,\bQ)$. Given a monotone submodular function $f$, the set of stochastic items $\rX$, and a parameter $\bQ = f(X)$ for all $X$, outputs a set $A$ of (indices of) stochastic items.} 
\begin{enumerate}
\item \textbf{Initialize:} Set $A \leftarrow \emptyset$ and $F$ be the function associated to $f$ in Eq~(\ref{eq:F}).
\item \textbf{While} $F(A) < \bQ/3$ \textbf{do:}
\begin{enumerate}
\item Let $j^* \leftarrow \argmax_{j \in [m]} {F_{A}(j)}/{c_j}$. 
\item Update $A \leftarrow A \cup \set{j^*}$. 
\end{enumerate} 
\item \textbf{Output:} $A$.
\end{enumerate}
\end{tbox}


It is clear that the set $A$ output by $\nagreedy$ achieves $\Ex_{X_A}\bracket{f(X_A)} = F(A) \geq \bQ/3$ (as $F([m]) = \bQ$, the termination condition would always be satisfied eventually). 
We will now bound the cost paid by the greedy algorithm in terms of the optimal value $P$ of LP~(\ref{eqn:lp}). 

\begin{lemma}\label{lem:bound-A}
	$\cost{\rX_A} \leq 3 P$. 
\end{lemma}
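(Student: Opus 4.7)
The plan is to analyze the greedy algorithm by invoking the LP constraint from LP~(\ref{eqn:lp}) at the \emph{current} set $A_{t-1}$ of each iteration, and then bound the per-step cost in terms of the marginal gain. Let $i_1, i_2, \ldots, i_T$ be the items chosen by $\nagreedy$ in this order, and let $A_t := \{i_1,\ldots,i_t\}$ (with $A_0 = \emptyset$). For the termination condition to not yet be met at the start of step $t$ (for $t = 1, \ldots, T$), we have $F(A_{t-1}) < \bar{Q}/3$, and hence
\[
\bar{Q} - 2 F(A_{t-1}) \;>\; \bar{Q}/3.
\]

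The first key step is a density lower bound at each iteration. Let $y^*$ be an optimal solution to LP~(\ref{eqn:lp}), so that $\sum_{i=1}^{m} c_i y^*_i = P$ and $y^*$ satisfies every constraint. Applying the constraint associated to $A = A_{t-1}$,
\[
\sum_{j \in [m]\setminus A_{t-1}} F_{A_{t-1}}(j)\cdot y^*_j \;\geq\; \bar{Q} - 2 F(A_{t-1}).
\]
Bounding the left-hand side by $\bigl(\max_j F_{A_{t-1}}(j)/c_j\bigr) \cdot \sum_j c_j y^*_j = \bigl(\max_j F_{A_{t-1}}(j)/c_j\bigr) \cdot P$, and combining with the inequality $\bar{Q} - 2F(A_{t-1}) > \bar{Q}/3$ above, yields
\[
\max_{j} \frac{F_{A_{t-1}}(j)}{c_j} \;\geq\; \frac{\bar{Q} - 2 F(A_{t-1})}{P} \;>\; \frac{\bar{Q}}{3P}.
\]
Since $\nagreedy$ picks $i_t$ to maximize this density ratio, we obtain $F_{A_{t-1}}(i_t)/c_{i_t} > \bar{Q}/(3P)$, i.e.,
\[
c_{i_t} \;\leq\; \frac{3P}{\bar{Q}} \cdot F_{A_{t-1}}(i_t) \;=\; \frac{3P}{\bar{Q}}\cdot \bigl(F(A_t) - F(A_{t-1})\bigr).
\]

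The second step is a telescoping sum over all iterations. Summing the per-step bound from $t = 1$ to $T$ gives
\[
\cost{\rX_A} \;=\; \sum_{t=1}^{T} c_{i_t} \;\leq\; \frac{3P}{\bar{Q}} \sum_{t=1}^{T}\bigl(F(A_t) - F(A_{t-1})\bigr) \;=\; \frac{3P}{\bar{Q}}\cdot F(A_T) \;\leq\; 3P,
\]
where the last inequality uses $F(A_T) \leq F([m]) = \bar{Q}$ (guaranteed by the assumption that $f(X) = \bar{Q}$ for every realization). This establishes the claimed bound. Combining with Lemma~\ref{lem:lower-bound-opt}, which gives $P \leq \Ex[\cost{\OPT}]$, will yield condition $(i)$ of Theorem~\ref{thm:detsel}; condition $(ii)$ is immediate from the termination criterion of the algorithm.

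There is no real obstacle here beyond correctly verifying that the modified right-hand side $\bar{Q} - 2F(A_{t-1})$ of LP~(\ref{eqn:lp}) (rather than the Wolsey-style $\bar{Q} - F(A_{t-1})$) is precisely what is needed: at termination we only require $F(A_T) \geq \bar{Q}/3$, and the factor of $2$ in the LP combined with this relaxed stopping rule is exactly what allows us to bootstrap the per-step density lower bound $\bar{Q}/(3P)$ uniformly across \emph{all} iterations, including the last one. This is what keeps the cost bounded by $3P$ without any additional additive term for the final step.
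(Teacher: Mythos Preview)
Your proof is correct and rests on the same key ingredient as the paper's: the per-iteration density bound $F_{A_{t-1}}(i_t)/c_{i_t} \geq (\bar{Q}-2F(A_{t-1}))/P$, obtained by plugging the optimal LP solution into the constraint for $A=A_{t-1}$. The difference is only in how the total cost is aggregated. The paper splits the sum into the first $T-1$ items and the last item, bounding the former by $P$ (via $F(A_{T-1}) < \bar{Q}/3$ and the telescoping identity) and the latter by $2P$ (via $c_{i_T} \leq F_{A_{T-1}}(i_T)\cdot P/(\bar{Q}-2F(A_{T-1}))$ together with $F_{A_{T-1}}(i_T)\leq \bar{Q}-F(A_{T-1})$), for a total of $3P$. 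You instead observe that the uniform bound $c_{i_t}\leq (3P/\bar{Q})(F(A_t)-F(A_{t-1}))$ holds for \emph{every} iteration including the last, so a single telescoping step gives $\cost{\rX_A}\leq (3P/\bar{Q})F(A_T)\leq 3P$. Your route is slightly cleaner in that it avoids the case split on the final item; the paper's route makes more explicit where each piece of the factor $3$ comes from.
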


To prove Lemma~\ref{lem:bound-A} we need some definition. 
Let the sequence of items picked by the greedy algorithm be $j_1 , j_2 , j_3 \cdots $,
where $j_i$ is the index of the item picked in iteration $i$. Moreover, for any $i$, define $A_{<i}:=\set{j_1,\ldots,j_{i-1}}$, i.e., the set of items chosen before iteration $i$.
We first prove the following bound on the ratio of coverage rate to costs in each iteration.

\begin{lemma}
\label{lem:greedy-progress}
In each iteration $i$ of the non-adaptive greedy algorithm we have, 
\[
\frac{F_{A_{<i}}(j_i)}{c_{j_i}} \geq \frac{\bQ - 2 F(A_{<i})}{P},
\]
where $P$ is the optimal value of LP~(\ref{eqn:lp}).
\end{lemma}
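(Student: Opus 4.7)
The plan is to use the LP~(\ref{eqn:lp}) relaxation to bound the greedy choice in each iteration, following the classic LP-based charging argument for greedy submodular algorithms. The key is that the greedy rule picks the item with the best ``bang per buck'' ratio $F_{A_{<i}}(j)/c_j$, while the LP constraint corresponding to the current set $A_{<i}$ provides a lower bound on how much ``bang'' a feasible fractional solution must provide.

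First I would fix iteration $i$ and let $y^* = (y^*_1,\dots,y^*_m)$ denote an optimal solution to LP~(\ref{eqn:lp}), so $\sum_i c_i y^*_i = P$. Applying the LP constraint with $A = A_{<i}$ gives
\[
\sum_{j \in [m] \setminus A_{<i}} F_{A_{<i}}(j) \cdot y^*_j \;\geq\; \bQ - 2F(A_{<i}).
\]
Next, since the greedy rule sets $j_i = \argmax_{j} F_{A_{<i}}(j)/c_j$, for every $j \in [m] \setminus A_{<i}$ we have the inequality $F_{A_{<i}}(j) \leq (F_{A_{<i}}(j_i)/c_{j_i}) \cdot c_j$. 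Substituting this bound into the LP inequality yields
\[
\bQ - 2F(A_{<i}) \;\leq\; \frac{F_{A_{<i}}(j_i)}{c_{j_i}} \sum_{j \in [m] \setminus A_{<i}} c_j \cdot y^*_j \;\leq\; \frac{F_{A_{<i}}(j_i)}{c_{j_i}} \cdot P,
\]
and rearranging gives exactly the claimed inequality.

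One small technicality to handle is that the greedy step is only meaningful when $F(A_{<i}) < \bQ/3$ (the algorithm's termination condition), so $\bQ - 2F(A_{<i}) > \bQ/3 > 0$, meaning the right-hand side is strictly positive and hence some index $j$ must have $F_{A_{<i}}(j) > 0$ (otherwise the LP constraint could not be satisfied). This ensures $F_{A_{<i}}(j_i) > 0$ so the ratio in the greedy rule is well-defined and the inequality is non-vacuous. I expect no serious obstacles here: the argument is entirely a combination of the LP feasibility constraint evaluated at the current greedy set with the greedy maximality property, and the role of Lemma~\ref{lem:lower-bound-opt} is saved for the downstream bound on $\cost{\rX_A}$ in Lemma~\ref{lem:bound-A} rather than being needed for this particular per-iteration lemma.
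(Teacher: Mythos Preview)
Your proposal is correct and follows essentially the same argument as the paper: both combine the LP constraint at $A=A_{<i}$ with the greedy maximality property $F_{A_{<i}}(j) \le (F_{A_{<i}}(j_i)/c_{j_i})\,c_j$ and the bound $\sum_j c_j y^*_j \le P$. The only cosmetic difference is that the paper phrases the same chain of inequalities as a proof by contradiction, whereas you give it directly; the content is identical.
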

\begin{proof}
Fix any iteration $i$. Recall that in each iteration, we pick the item $j_i \in \argmax_{j \in [m]} {F_{\A_{<i}}(j)}/{c_j}$. 
Suppose towards a contradiction that in some iteration $i$:
\begin{align}
	\forall j \in [m] \qquad \frac{{F_{\A_{<i}}(j)}}{{c_j}} < \frac{\bQ - 2F(A_{<i})}{P}. \label{eqn:assume-greedy-progress}
\end{align}
\noindent
Let $y^*$ be an optimal solution to LP~(\ref{eqn:lp}), then by the constraint of the LP for set $A_{<i}$ we have
\begin{align*}
\bQ - 2 F(A_{<i}) &\leq \sum_{j \in [m] \setminus A_{<i}}  F_{A_{<i}}(j)	 \cdot  y^*_j\\
			&\!\!\!\! \Le{Eq~(\ref{eqn:assume-greedy-progress})} \sum_{j \in [m] \setminus A_{<i}} y^*_j \cdot c_j \cdot \frac{\bQ - 2 F(A_{<i})}{P} \\
			& \leq  \frac{\bQ - 2 F(A_{<i})}{P} \cdot \sum_{j \in [m] } y^*_j c_j  = \bQ - 2 F(A_{<i}),
\end{align*}
where the last equality is because by definition $\sum_{j \in [m] } y^*_j c_j = P$. By above equation, $\bQ - 2 F(A_{<i})  < \bQ - 2 F(A_{<i})$, a contradiction. 
\Qed{Lemma~\ref{lem:greedy-progress}}

\end{proof}

\begin{proof}[Proof of Lemma~\ref{lem:bound-A}]
Fix any iteration $i$ in the algorithm where $F(A_{<i}) \leq \bQ/3$. By Lemma~\ref{lem:greedy-progress}, 
\begin{align}
	F_{A_{<i}}(j_i) \Geq{Lemma~\ref{lem:greedy-progress}} c_{j_i} \cdot \frac{\bQ - 2 F(A_{<i})}{P} \geq c_{j_i} \cdot \frac{\bQ}{3P}. \label{eq:coverage-rate}
\end{align}
\noindent
Let $k$ be the first index where $F_{A_{<k}} < \bQ/3$ but $F_{A_{<k+1}} \geq \bQ/3$ (i.e., the iteration the algorithm terminates). Note that $\cost{\rX_A} = \sum_{i=1}^{k}c_{j_i}$. 
We start by bounding the first $k-1$ terms in $\cost{\rX_A}$: 
\begin{align*}
	\bQ/3 > F({A_{<k}}) &= \sum_{i=1}^{k-1} F_{A_{<i}} (j_i) \Geq{Eq~(\ref{eq:coverage-rate})} \sum_{i=1}^{k-1} c_{j_i} \cdot \frac{\bQ}{3P}  \\
	&\implies \sum_{i=1}^{k-1} c_{j_i} < P. 
\end{align*}
Now consider the last term in $\cost{A}$, i.e., $c_{j_k}$. Again, by Lemma~\ref{lem:greedy-progress}, we have, 
\begin{align*}
	c_{j_k} &\Leq{Lemma~\ref{lem:greedy-progress}} \frac{F_{A_{<k}}(j_k) \cdot P}{\bQ - 2 F(A_{<k})} \leq  \frac{\paren{\bQ - F(A_{<k})} \cdot P}{\bQ - 2 F(A_{<k})} \leq 2P,
\end{align*}
using the fact that $F(A_{<k}) < \bQ/3$. As such, $\cost{\rX_A} \leq 3P$ finalizing the proof. 
\Qed{Lemma~\ref{lem:bound-A}}

\end{proof}

Theorem~\ref{thm:detsel} now follows immediately from Lemma~\ref{lem:bound-A} and Lemma~\ref{lem:lower-bound-opt} as $P \leq \Ex\bracket{\cost{\OPT}}$.

\subsection{Proof of Theorem~\ref{thm:select}}

We use the algorithm in Theorem~\ref{thm:detsel} to present the following algorithm for reducing the expected deficit of any given set $\rS$ in Theorem~\ref{thm:select}. 

\begin{tbox}
\label{algo:select}
{$\Select(\rX, g, \rS, \alpha)$. Given a collection of indices $\X$, a monotone submodular function $g$ with $g(X) = Q_g$ for every $X \sim \X$, collection of items $\S$ with expected deficit $\Delta = \E[Q_g -g(\S)]$, picks a set $\rT$ of items to decrease the expected deficit.} 
\begin{enumerate}
\item Let $\Xi := 6\alpha$. 
\item \textbf{For} $i = 1,  \cdots , \Xi$ \textbf{do}: 
\begin{enumerate}
\item Sample a realization $S_i \sim \S$. 
\item $\rT_i \leftarrow \nagreedy(\rX \setminus \rS, g_{S_i},\Delta(S_i))$ (recall that $\Delta(S_i) = Q_g - g(S_i)$).
\end{enumerate}
\item \textbf{Return} all items in the sets $\rT:= \rT_1 \cup \rT_2 \cdots \cup \rT_{\Xi}$.
\end{enumerate}
\end{tbox}

The $\Select$ algorithm repeatedly calls the $\nagreedy$ algorithm for samples drawn 
from realizations of the set $\rS$.
By Fact~\ref{fact:marginal-submodular}, for any realization $S_i$ of $\rS$, $g_{S_i}(\cdot)$ is also a monotone submodular function. 
Moreover, by the assumption that $g(X) = Q_g$ always, we have that $g_{S_i}(X \setminus S_i) = Q_g - f(S_i)$ always as well. Hence, the parameters given to function $\nagreedy$ in $\Select$ are valid. 
 
We first bound the expected cost of \Select.

\begin{claim}\label{clm:select-cost}
	$\Ex\bracket{\cost{\rT}} = O(\alpha) \cdot \Ex\bracket{\cost{\OPT}}$. 
\end{claim}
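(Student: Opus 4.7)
The plan is to bound $\Ex[\c(\rT)]$ by the union bound $\c(\rT) \leq \sum_{i=1}^{\Xi} \c(\rT_i)$, and then to bound each $\Ex[\c(\rT_i)]$ by $O(1) \cdot \Ex[\c(\OPT)]$. Since $\Xi = 6\alpha$, this will yield the desired $O(\alpha) \cdot \Ex[\c(\OPT)]$ bound.

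To bound $\Ex[\c(\rT_i)]$ I would proceed in two steps. First, condition on the sample $S_i$. The call $\nagreedy(\rX \setminus \rS, g_{S_i}, \Delta(S_i))$ satisfies the preconditions of Theorem~\ref{thm:detsel}: by Fact~\ref{fact:marginal-submodular}, $g_{S_i}$ is monotone submodular, and since $g(X) = Q_g$ for every realization $X$ of $\X$, we have $g_{S_i}(X \setminus S_i) = Q_g - g(S_i) = \Delta(S_i)$ for every realization of $\rX \setminus \rS$. Thus Theorem~\ref{thm:detsel} gives the deterministic bound
\begin{align*}
\c(\rT_i) \;\leq\; 3 \cdot \Ex\bracket{\c(\OPT_i) \,\big\vert\, S_i},
\end{align*}
where $\OPT_i$ is an optimal fully-adaptive algorithm for the residual stochastic submodular cover instance on items $\rX \setminus \rS$ with function $g_{S_i}$ and target $\Delta(S_i)$.

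The main step is then to show that $\Ex_{S_i}\bracket{\Ex[\c(\OPT_i) \mid S_i]} \leq \Ex[\c(\OPT)]$. For each realization $S$ of $\rS$, I would exhibit a feasible adaptive strategy $\mathcal{A}_S$ for the residual problem built from $\OPT$: simulate $\OPT$ on the original instance under the conditioning $\rS = S$; whenever $\OPT$ chooses an item $\rX_j \in \rS$, supply its realization from $S$ for free; whenever $\OPT$ chooses an item $\rX_j \in \rX \setminus \rS$, realize it fresh and pay $c_j$. Since $\OPT$ produces a feasible cover of $g$ on every realization, the items in $\OPT \setminus \rS$ combined with $S$ form a set of realized items with $g$-value $Q_g$, which by definition of $g_S$ means $\OPT \setminus \rS$ is feasible for the residual problem. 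The expected cost of $\mathcal{A}_S$ is at most $\Ex[\c(\OPT) \mid \rS = S]$, and by optimality of $\OPT_i$, $\Ex[\c(\OPT_i) \mid S_i = S] \leq \Ex[\c(\mathcal{A}_S)] \leq \Ex[\c(\OPT) \mid \rS = S]$. Averaging over $S \sim \rS$ yields the claimed inequality.

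Combining the two steps gives $\Ex[\c(\rT_i)] \leq 3 \Ex[\c(\OPT)]$ for every $i \in [\Xi]$, and summing over $i$ along with the union bound on $\c(\rT)$ finishes the argument. The only real obstacle is the simulation step: one must verify that supplying $\OPT$ with the realization $S$ of $\rS$ ``for free'' does not alter feasibility and that the independence between realizations of items in $\rS$ and items in $\rX \setminus \rS$ allows the conditional expectations to collapse back into the unconditional $\Ex[\c(\OPT)]$ after averaging over $\rS$.
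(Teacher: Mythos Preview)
Your proposal is correct and follows essentially the same approach as the paper's proof: decompose $\cost{\rT}$ as $\sum_{i=1}^{\Xi}\cost{\rT_i}$, apply Theorem~\ref{thm:detsel} conditioned on each sampled $S_i$ to bound $\cost{\rT_i}$ by three times the optimal cost of the residual instance, dominate that residual optimum by $\Ex[\cost{\OPT}\mid \rS=S_i]$, and then uncondition. The paper states the key domination step in a single parenthetical remark (``the cost of $\Ex_X[\cost{\OPT(X)}\mid \rS=S_i]$ can only be larger than the cost of $\OPT$ on the instance $(\tilde Q,\X\setminus\rS)$''); your simulation argument spells out exactly why that inequality holds and is, if anything, a cleaner justification than the paper gives.
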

\begin{proof}
	Cost of $\rT$ is the cost of the sets $\rT_1,\ldots,\rT_{\Xi}$ chosen by $\nagreedy$ on $g_{S_i}$ for each of the $\Xi$ realizations of $\rS$. By Theorem~\ref{thm:detsel}, we can bound the cost of each $\rT_i$ 
	using $\OPT$ conditioned on realization $S_i$ for $\rS$ (as we consider $g_{S_i}$). As such, 
	\begin{align*}
		\Ex\bracket{\cost{\rT}} 
		&  = \sum_{i=1}^{\Xi} \Ex_{S_i \sim \rS}\bracket{\cost{\rT_i}} \\
		& \Leq{(a)} \sum_{i=1}^{\Xi} \Ex_{S_i \sim \rS}\bracket{3 \cdot \Ex_{X}\bracket{\cost{\OPT(X)} \mid \rS = S_i}} \\
		&= \sum_{i=1}^{\Xi} 3 \cdot \Ex_{S_i \sim \rS}\Ex_{X \sim \rX \mid S_i}\bracket{\cost{\OPT(X)}} \\
		&= 3\Xi \cdot \Ex_{X}\bracket{\cost{\OPT(X)}}
			\,,
	\end{align*}
	where the inequality $(a)$ follows from \Thm{thm:detsel} because even though the $\OPT$ used in \Thm{thm:detsel} is an optimal algorithm on the problem instance $(\tilde{Q}, \X\setminus \S)$, but the cost of $\Ex_{X}\bracket{\cost{\OPT(X)} \mid \rS = S_i}$ can only be larger than 
the cost of $\OPT$ on the instance $(\tilde{Q}, \X\setminus \S)$.
	The bound now follow from the value of $\Xi = 6\alpha$. 
\end{proof}

We now prove that the expected deficit of $f(\rS \cup \rT)$ is dropped by at least a $\Delta/6$ factor. The following lemma is at the heart of the proof. 
\begin{lemma}\label{thm:select-deficit}
	$\Ex\bracket{\Delta(\rS \cup \rT)} \leq 5\Delta/6$. 
\end{lemma}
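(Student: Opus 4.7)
I would prove the lemma via a per-iteration deficit-contraction argument. Define
\[
\phi_k := \Ex\!\left[Q_g - g(\rS \cup \rT_1 \cup \cdots \cup \rT_k)\right]
= \Ex\!\left[\Delta(\rS \cup \rT_{\leq k})\right],
\]
so $\phi_0 = \Delta$ and the target inequality is $\phi_\Xi \leq 5\Delta/6$. The goal is to establish the per-step contraction
\[
\phi_k \;\leq\; \phi_{k-1}\Bigl(1 - \tfrac{1}{3\alpha}\Bigr)
\qquad \text{for every } k \in [\Xi].
\]
Given the contraction, using $\Xi = 6\alpha$ yields
$\phi_\Xi \leq \Delta\bigl(1-\tfrac{1}{3\alpha}\bigr)^{6\alpha} \leq \Delta\,e^{-2} < 5\Delta/6$, which is the desired bound.

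\textbf{Reduction to a per-step marginal bound.} Since $\phi_{k-1} - \phi_k = \Ex[g_{\rS \cup \rT_{<k}}(\rT_k)]$, the contraction reduces to showing
\[
\Ex\!\left[g_{U_{k-1}}(\rT_k)\right] \;\geq\; \frac{\Ex[\Delta(U_{k-1})]}{3\alpha},
\qquad U_{k-1} := \rS \cup \rT_{<k}.
\]
Here $\rT_k = \nagreedy(\rX \setminus \rS,\, g_{S_k},\, \Delta(S_k))$ is determined by a \emph{fresh} sample $S_k \sim \mathcal{D}$ that is independent of $U_{k-1}$ and has the same marginal distribution as $\rS$. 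By Theorem~\ref{thm:detsel} applied to the (monotone submodular) function $g_{S_k}$ on $\rX \setminus \rS$, we have the per-sample design guarantee $\Ex[g_{S_k}(\rT_k) \mid S_k] \geq \Delta(S_k)/3$, and hence $\Ex[g_{S_k}(\rT_k)] \geq \Delta/3$.

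\textbf{Key step: transfer the guarantee from $S_k$ to $U_{k-1}$.} The heart of the argument is to exploit the exchangeability of $\rS$ and $S_k$ (both i.i.d.\ copies of $\mathcal{D}$) together with $\rS \subseteq U_{k-1}$, in order to convert the ``for its own sample'' guarantee $\Ex[g_{S_k}(\rT_k)] \geq \Delta/3$ into an ``on the current state'' guarantee $\Ex[g_{U_{k-1}}(\rT_k)] \geq \Ex[\Delta(U_{k-1})]/(3\alpha)$. I plan to condition on $U_{k-1}$, integrate the per-sample guarantee of Theorem~\ref{thm:detsel} against the joint distribution of $(S_k,\rT_k)$, and then invoke the fact that $\rS$ is a copy of $S_k$ embedded inside $U_{k-1}$. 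The loss factor $\Delta/Q_g = 1/\alpha$ is exactly the ``price'' of $\rT_k$ being designed against an independent sample rather than the true realization; the factor $3$ comes from the $1/3$ in Theorem~\ref{thm:detsel}. The number of samples $\Xi = 6\alpha$ is calibrated so that the geometric contraction $(1-1/(3\alpha))^{6\alpha} \leq e^{-2}$ lies comfortably below $5/6$.

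\textbf{Main obstacle.} The hard part is the transfer step: naive applications of submodularity to compare $g_{U_{k-1}}(\rT_k)$ and $g_{S_k}(\rT_k)$ do not directly yield a lower bound on the former, since $A \mapsto g_A(\rT_k)$ is only monotone non-increasing (and in fact supermodular) in $A$, not submodular. I expect to handle this by splitting $\Ex[g_{U_{k-1}}(\rT_k)]$ according to whether the current deficit $\Delta(U_{k-1})$ is comparable to $\Delta(S_k)$ or much smaller, and by averaging against the exchangeable coupling of $(\rS, S_k)$; the design guarantee of $\rT_k$ together with $g_{U_{k-1}}(\rT_k) \geq g(\rT_k) - g(U_{k-1} \cap \rT_k)$ (from Fact~\ref{fact:marginal-submodular}) should produce the required $\Omega(\Delta(U_{k-1})/\alpha)$ lower bound in expectation, closing the proof.
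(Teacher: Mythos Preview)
Your per-step contraction $\phi_k \leq \phi_{k-1}(1-\tfrac{1}{3\alpha})$ is \emph{false} as stated, and the ``transfer step'' you flag as the main obstacle is not a step at all---it simply does not hold. Take $\rS$ to be deterministic, always equal to some fixed $S$. Then every sample $S_k$ equals $S$, so every call $\nagreedy(\rX\setminus\rS,g_{S_k},\Delta(S_k))$ returns the \emph{same} set $\rT_1$ of item indices, and hence $\rT_{\leq k}=\rT_1$ for all $k\geq 1$. Consequently $\phi_k=\phi_1$ for every $k\geq 1$, and your contraction fails already at $k=2$ whenever $\phi_1>0$. (The lemma itself is fine here because the design guarantee gives $\phi_1\leq 2\Delta/3\leq 5\Delta/6$; but your stronger target $\phi_\Xi\leq \Delta e^{-2}$ is unattainable.) More generally, whenever the distribution of $\rS$ is concentrated, the sets $\rT_k$ overlap heavily and later iterations contribute nothing to the deficit---there is no uniform geometric decay of $\phi_k$. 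Neither the exchangeable coupling of $(\rS,S_k)$ nor the inequality $g_{U_{k-1}}(\rT_k)\geq g(\rT_k)-g(U_{k-1}\cap\rT_k)$ rescues this, because after conditioning on $U_{k-1}$ the copy $\rS$ is fixed and no longer exchangeable with the fresh $S_k$ that generated $\rT_k$.

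The paper's argument is structurally different and sidesteps this issue entirely. It does \emph{not} track $\phi_k$. Instead it argues by contradiction: assume $\Ex[g_{\rS}(\rT)]<\Delta/6$, so by monotonicity $\Ex[g_{\rS}(\rT_{<i})]<\Delta/6$ for every $i$. Since $\rS$ has the same law as $S_i$ and is independent of $\rT_{<i}$, this reads $\Ex[g_{S_i}(\rT_{<i})]<\Delta/6$. Subtracting from the design guarantee $\Ex[g_{S_i}(\rT_i)]\geq\Delta/3$ and applying monotonicity and submodularity (drop $S_i$ on both sides) yields $\Ex\big[g_{\rT_{<i}}(\rT_i)\big]>\Delta/6$: each $\rT_i$ adds more than $\Delta/6$ to $g(\rT_{<i})$ \emph{with $\rS$ removed}. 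Summing over $i$ gives $\Ex[g(\rT)]>\Xi\cdot\Delta/6=\alpha\Delta\geq Q_g$, a contradiction. The key idea you are missing is to use $g(\rT_{\leq i})$---the value of the selected items \emph{alone}---as the potential, rather than the deficit $\Delta(\rS\cup\rT_{\leq i})$; this is exactly what makes the per-step increment uniform.
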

\begin{proof}
We start by introducing the notation needed in the proof. 
It is useful to note that the randomness in $\rT_i$
is due to two sources: (1) the sample $S_i \sim \S$
which determines which sets are \emph{indexed} by $\rT_i$;
and (2) the randomness in the \emph{realization} of the sets indexed by $\rT_i$.
For any realization $S$ of $\rS$, we use $\rT_i(S)$ to denote the set $\rT_i$ chosen (deterministically now by $\nagreedy$)
conditioned on $\rS = S$ (this corresponds to ``fixing'' the first source of randomness above). 
We use the notation $\rT_{\leq i}$ to denote the 
collection $\rT_1 \cup \cdots \cup \rT_i$ of sets selected 
in iterations $1$ through $i$,
and $S_{\leq i}$ to denote the tuple of realizations
$(S_1, \cdots, S_i)$ (we define $\rT_{<i}$ and $S_{<i}$ analogously, where $\rT_{<1} = S_{<1} = \emptyset$).
We also denote by $\rT_{\leq i} (S_{\leq i})$ the sets selected 
in iterations $1$ to $i$ given $S_{\leq i}$.

Consider any $i \in [\Xi]$. For a realization $S_i \sim \rS$, we are computing $\nagreedy$ on $g_{S_i}$ with parameter $\bQ = \Delta(S_i)$. As such, by Theorem~\ref{thm:detsel},
for the set $\rT_i(S_i)$ returned, we have $\Ex_{X}\bracket{g_{S_i}(\rT_i(S_i))} \geq \bQ/3 = \Delta(S_i)/3$. Consequently, 
\begin{align}
\label{eqn:detsel}
\Ex_{S_i \sim \S} \Ex_{X}[ g_{S_i} \big(\rT_i(S_i)\big)  ] \geq \Ex_{S_i \sim \S} [\frac{\Delta(S_i)}{3}] = \frac{\Delta}{3}.
\end{align}

We now use this equation to argue that adding each set $\rT_i$ can decrease the expected deficit. Before that, let us briefly touch upon the difficulty in proving this statement and the intuition behind the proof. 
In $\Select$, we first pick a realization $S_i$ of $\rS$ and then add ``enough'' sets to $\rT_i$ to (almost) cover the deficit \emph{introduced by} $S_i$. This corresponds to Eq~(\ref{eqn:detsel}). 
However, our goal is to decrease the \emph{expected} deficit of $\rS$ (not a deficit of a single realization). As such, the quantity of interest is in fact the following instead: 
\begin{align}
	\Ex_X\bracket{g_{\rS}(\rT_i)} = \Ex_{S_i \sim \rS} \Ex_{S'_i \sim \rS} \Ex_{X}\bracket{g_{S'_i}(\rT_i(S_i))}, \label{eq:compare}
\end{align}
i.e., the marginal contribution of $\rT_i(S_i)$ (chosen by picking a set $S_i$) to a ``typical'' set $S'_i \sim \rS_i$ (not exactly the set $S_i$). 
The set $\rT_i$ we picked in this step is \emph{not necessarily} covering the deficit introduced by $S'_i$ as well (in the context of the stochastic set cover problem, think of $S_i$ and $S'_i$ as covering a completely different set 
of elements and $\rT_i$ being a deterministic set covering $U \setminus S_i$). As such, it is not at all clear that picking the set $\rT_i$ should make  ``any progress'' towards reducing the expected deficit. 

The way we get around this difficulty is to additionally consider the marginal contribution of the sets $\rT_1,\ldots,\rT_{\Xi}$ to \emph{each other}. If $\rT_1$ cannot decrease the expected deficit of most realizations $S$ chosen 
from $\rS$, then this means that by picking another set $\rT_{2}(S)$ (for a realization $S$ of $\rS$), the set $\rT_1 \cup \rT_2$ needs to have a coverage larger than both $\rT_1$ and $\rT_2$ individually (in the context of the set cover problem, 
since $\rT_1$ is ``useless'' in covering deficit created by $S$, and $\rT_2$ can cover this deficit, this means that $\rT_1$ and $\rT_2$ should not have many elements in common typically). We formalize this intuition in the 
following claim (compare Eq~(\ref{eqn:term-condition}) in this claim with Eq~(\ref{eq:compare})). 

\begin{claim}
\label{clm:increase-coverage}
Suppose at the start of iteration $i$ the following holds
\begin{equation}
\label{eqn:term-condition}
\Ex_{S_i \sim \S} \Ex_{S_{<i} \sim \S} \Ex_X[  g_{S_i}\big( \rT_{<i}(S_{<i})  \big)]  < \frac{\Delta}{6}.
\end{equation}
Then,
\[
\Ex_{S_{\leq i} \sim \S} \Ex_X\bracket{g_{T_{<i}(S_{<i})}(T_i(S_i))}  > \frac{\Delta}{6}.
\]

\end{claim}
\begin{proof}
By subtracting \Eqn{eqn:term-condition} from \Eqn{eqn:detsel}, and using linearity of expectation we get that:
\begin{align}
\frac{\Delta}{6} & <  \Ex_{S_i \sim \S } \Ex_{S_{< i} \sim \S }\Ex_X [ g_{S_i} \big(\rT_i(S_i)\big)  -  g_{S_i}\big( \rT_{< i}(S_{ < i}) \big) ]	\notag \\
			& = \Ex_{S_i \sim \S } \Ex_{S_{< i} \sim \S }\Ex_X [ g\big(\rT_i(S_i) \cup S_i \big)  -  g\big( \rT_{< i}(S_{< i}) \cup S_i \big) ] \notag \\
		& \leq \Ex_{S_i \sim \S } \Ex_{S_{< i} \sim \S }\Ex_X [ g \big(\rT_{\leq i}(S_{\leq i}) \cup S_i \big)  -  g\big( \rT_{< i}(S_{<  i}) \cup S_i \big) ] \tag{by monotonicity} \\
		& \leq \Ex_{S_i \sim \S} \Ex_{S_{< i} \sim \S }\Ex_X [ g \big(\rT_{\leq i}(S_{\leq i}) \big)  -  g \big( \rT_{< i}(S_{< i}) \big) ] \tag{by  submodularity as $T_{< i}(S_{< i}) \subseteq T_{\leq i}(S_{\leq i})$} \\
		&= \Ex_{S_{\leq i} \sim \S }\Ex_X \bracket{g_{T_{<i}(S_{<i})}(T_i(S_i))},
\end{align}
finalizing the proof. 
\Qed{Claim~\ref{clm:increase-coverage}}

\end{proof}

Suppose towards a contradiction that $\Ex\bracket{\Delta(\rS \cup \rT)} > 5\Delta/6$. This implies that, 
\begin{align*}
	5\Delta/6 &< \Ex\bracket{Q_g - g(\rS \cup \rT)} = \Ex\bracket{Q_g - g(\rS) - g_{\rS}(\rT)} \\
	&\implies \Ex_{S \sim \rS}\Ex_{X}\bracket{g_S(\rT)} < \Delta/6. 
\end{align*}
By monotonicity of $f$ and since $\rT = \rT_1 \cup \ldots \cup \rT_\Xi$, this implies that for all $i \in [\Xi]$, 
\begin{align*}
	\Delta/6 > \Ex_{S \sim \rS}\Ex_{X}\bracket{g_S(\rT_{\leq i})} = \Ex_{S_i \sim \S} \Ex_{S_{<i} \sim \S} \Ex_X[  g_{S_i}\big( \rT_{<i}(S_{<i})  \big)]. 
\end{align*}
Hence, we can apply Claim~\ref{clm:increase-coverage} to obtain that for any $i \in [\Xi]$: 
\[
\Ex_{S_{\leq i} \sim \S} \Ex_X\bracket{g_{T_{<i}(S_{<i})}(T_i(S_i))}  > \frac{\Delta}{6}.
\]
As such, by linearity of expectation and above equation,  
\begin{align*}
	\Ex_{S_{\leq \Xi} \sim \S} \Ex_X\bracket{g(\rT(S_{\leq \Xi}))} 
	&= \sum_{i=1}^{\Xi} \Ex_{S_{\leq i} \sim \S} \Ex_X\bracket{g_{T_{<i}(S_{<i})}(\rT_i(S_i))}  \\
	&> \Xi \cdot \frac{\Delta}{6} = 6 \alpha \cdot \frac{\Delta}{6} \\
	& \geq Q_g = \Ex_{X}[g(\X)]. 
\end{align*}
where the last inequality follows due to the condition that $\alpha \geq Q_g/\Delta$. The above is a contradiction as $\rT \subseteq \X$ and $g$ is monotone. Hence, $\Ex\bracket{\Delta(\rS \cup \rT)} \leq 5\Delta/6$, finalizing the proof. 
\Qed{Lemma~\ref{thm:select-deficit}}

\end{proof}

Theorem~\ref{thm:select} now follows immediately from Claim~\ref{clm:select-cost} and Lemma~\ref{thm:select-deficit}.

\section{Algorithms for the Stochastic Submodular Cover Problem}\label{sec:upper}

In this section, we present our main algorithmic result which formalizes Result~\ref{res:r-round-upper}.

\begin{theorem}\label{thm:r-round-upper}
	Let $E$ be a ground-set of items, $f:2^{E} \rightarrow \IN_+$ be a monotone submodular function with $Q := f(E)$,
	and $\rX:=\set{\rX_1,\ldots,\rX_m}$ be a collection of $m$ stochastic items with support in $E$. Let $c_i \in [C]$ be the integer-valued cost of item $\X_i$.
	For any integer $r \geq 1$, there exists an $r$-round adaptive algorithm for the stochastic submodular cover problem for function $f$ 
	and items $\X$ with expected cost $O(r \cdot Q^{1/r} \cdot \log{Q} \cdot \log (mC))$ times the expected cost of the optimal adaptive algorithm. 
\end{theorem}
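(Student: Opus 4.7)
The plan is to iteratively apply the $\Select$ primitive from \Thm{thm:select} inside an $r$-round framework that shrinks the residual deficit geometrically across rounds. Define thresholds $\tau_k := \lceil Q^{1-k/r}\rceil$ for $k = 0,1,\ldots,r-1$ and $\tau_r := 0$; the goal of round $k$ is to bring the residual deficit from at most $\tau_{k-1}$ down to at most $\tau_k$. If $S_{k-1}$ is the realization of items chosen in rounds $1,\ldots,k-1$, then round $k$ operates on the residual monotone submodular function $g := f_{S_{k-1}}$, which by Fact~\ref{fact:marginal-submodular} is monotone submodular and satisfies $Q_g \leq \tau_{k-1}$ by the inductive success of prior rounds.

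Within round $k$, the ordering for the round is built non-adaptively as follows: starting with the empty partial selection $\rS$, repeatedly call $\Select(\X\setminus \rS, g, \rS, \alpha)$ with $\alpha := Q^{1/r}$ and append the returned items to $\rS$. The precondition $\alpha \geq Q_g/\Delta$ of \Thm{thm:select} holds as long as the expected residual deficit $\Delta$ stays above $\tau_k$, since $Q_g \leq \tau_{k-1} = Q^{1/r}\cdot \tau_k$. By \Thm{thm:select} each call multiplies $\Delta$ by $5/6$ at an expected cost of $O(Q^{1/r}) \cdot \Ex\bracket{\cost{\OPT}}$ (with $\OPT$ denoting the optimal adaptive algorithm on the residual instance, whose cost is dominated by the global $\Ex\bracket{\cost{\OPT}}$). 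Thus $O(\log(\tau_{k-1}/\tau_k)) = O((\log Q)/r)$ calls suffice to drive the expected deficit below $\tau_k/2$, and Markov's inequality then bounds the probability of exceeding $\tau_k$ by $1/2$ for one such \emph{attempt}. The round's threshold is set to $Q_g - \tau_k$, halting item realization once the coverage deficit drops to $\tau_k$.

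To boost the $1/2$ success probability, each round is split into $\Theta(\log(mC))$ consecutive \emph{phases}: phase $p+1$ is planned (still non-adaptively within the round) using the posterior distribution of $\rX$ conditioned on the event $\event_p$ that the first $p$ phases all failed, and the phases' item-lists are concatenated to form the round's permutation. This drives the per-round failure probability down to $1/\poly(mC)$; combined with the trivial worst-case cost bound of $mC$, the tail contribution is $O(1)$ and negligible. The per-phase expected cost is controlled via the identity $\Pr(\event_p)\cdot \Ex\bracket{Z \mid \event_p} = \Ex\bracket{Z \cdot \1[\event_p]}$ for $Z := \cost{\OPT}$, together with the pointwise bound $\sum_p \1[\event_p] \leq O(\log(mC))$, giving
\begin{align*}
    \sum_p \Pr(\event_p) \cdot O\bigl(Q^{1/r}(\log Q)/r\bigr) \cdot \Ex\bracket{\cost{\OPT} \mid \event_p} \;\leq\; O\bigl(Q^{1/r}(\log Q)(\log(mC))/r\bigr) \cdot \Ex\bracket{\cost{\OPT}}.
\end{align*}
Summing over the $r$ rounds yields the claimed $O(r\cdot Q^{1/r}\cdot \log Q\cdot \log(mC))$-approximation; the last round, with $\tau_r = 0$, is handled identically, the hypothesis $\alpha \geq Q_g/\Delta$ being satisfied whenever $\Delta \geq 1$.

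The main obstacle in formalizing this plan is the cost accounting across phases: a naive bound would multiply the per-phase cost by $1/\Pr(\event_p)$ (which is exponentially small in $p$), but the identity above replaces this blow-up with just the $O(\log(mC))$ factor coming from the pointwise indicator sum. A secondary technical point is that \Thm{thm:select} remains applicable under the posterior conditioned on $\event_p$, which is valid because $\event_p$ depends only on items already fixed in prior phases; the remaining items therefore stay mutually independent and preserve the independence structure required by the stochastic submodular cover setup.
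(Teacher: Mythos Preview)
Your proposal is correct and follows essentially the same architecture as the paper: the same round thresholds $\tau_k\approx Q^{(r-k)/r}$, the same $\Reduce$-style subroutine that concatenates $\Theta(\log(mC))$ phases, each phase consisting of iterated calls to $\Select$ on the posterior conditioned on ``previous phases failed'', and the same cost accounting that pairs the $\Pr(\event_p)$ factor (probability of reaching phase $p$'s items) with $\Ex[\cost{\OPT}\mid\event_p]$ to recover the unconditional $\Ex[\cost{\OPT}]$ per phase.

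Two small remarks. First, your choice $\alpha=Q^{1/r}$ only guarantees the precondition $\alpha\ge Q_g/\Delta$ while $\Delta\ge\tau_k$, so you cannot push the expected deficit strictly below $\tau_k/2$ as needed for the Markov step; the paper takes $\alpha=2Q^{1/r}$ for exactly this reason, and you should too. Second, your observation that $O((\log Q)/r)$ calls to $\Select$ per phase suffice (since the deficit only needs to shrink by a $Q^{1/r}$ factor within a round) is sharper than the paper's blanket $\Lambda=\Theta(\log Q)$; with it your per-round bound is $O(Q^{1/r}\cdot(\log Q)/r\cdot\log(mC))$, so the $r$ in the final $O(r\cdot Q^{1/r}\cdot\log Q\cdot\log(mC))$ actually cancels---you prove something slightly stronger than stated.
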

\noindent
Theorem~\ref{thm:r-round-upper} immediately implies that the $r$-round adaptivity gap of the stochastic submodular cover problem is $\Ot(Q^{1/r})$. 
The rest of this section is devoted to the proof of Theorem~\ref{thm:r-round-upper}. 
\paragraph{Overview.} The underlying strategy behind our algorithm is as follows: in each round of the algorithm, reduce the \emph{deficit} of the currently realized set $T$ chosen in the previous rounds (i.e., the quantity $Q - f(T)$)
by a factor of roughly $Q^{1/r}$. This suggests that after $r$ rounds the deficit should reach zero, hence we obtain a submodular cover. 
In order to do so, the algorithm needs to specify an ordering of items \emph{without} knowing the realizations of these items in advance (i.e., non-adaptively). 
This step is itself done by running the algorithm in Theorem~\ref{thm:select} over multiple iterative \emph{phases} to reduce the deficit without knowing realization of any chosen items in this round. 
We now present our algorithm in details, starting with its main component for reducing the deficit in each round. 

\subsection{The $\Reduce$ Subroutine}\label{sec:subroutine}

Let $\rT_k$ be the items selected by the $r$-round adaptive algorithm in rounds up to (and including) $k$,
and $T_k$ be their realization. In round $k$, the algorithm creates an ordering of all the available items and sets a threshold $\tau_k := Q - Q^{(r-k)/r}$ for coverage in this round: after deciding on an ordering
of the items non-adaptively, the algorithm picks items according to this ordering one by one until the total coverage of the function reaches $\tau_k$. In this section, we design an 
algorithm, namely $\Reduce$, which returns an \emph{ordered set} 
$\rS \subseteq \X \setminus \rT_{k-1}$ in round $k$
such that items in $\rS$ are enough to reach the coverage
threshold for this round with high probability.
If there are items that are not included in $\rS$ by $\Reduce$, we will simply add them 
at the end of $\rS$ in any arbitrary order.

The input to the function $\Reduce$ in round $k$ is the set of items $\rX \setminus \rT_{k-1}$, and the function marginal $f_{T_{k-1}} $; by Fact~\ref{fact:marginal-submodular}, $f_{T_{k-1}}$ is also a monotone submodular function. The execution of $\Reduce$ is 
partitioned over $\Gamma := O(\log{(mC)})$ \emph{phases}, where in each phase, the algorithm picks a new  set of items to be added to the (ordered) set returned by it. 
The final set of items returned by $\Reduce$ are ordered in increasing order of the phases (with arbitrary ordering in each phase). 

For any phase $p \in [\Gamma]$, we define $\S_p$ as the ordered set of items selected in phase $1$ up to (and including) $p$. Let $Q_{k} := Q- f(T_{k-1})$; this is the \emph{deficit} of the set $T_{k-1}$ with respect to function $f$. 
For any set $\rS$ of items, we define the following event $\event_k(\S)$:
\begin{align}
	\event_k(\S) := \1[Q_{k} - f_{T_{k-1}}(\S) \geq Q_{k} /Q^{1/r}].
\end{align}
Intuitively speaking, $\event_k(\S)$ happens if the set of items $\S$ cannot cover most of $Q_{k}$ yet. 

In each phase, $\Reduce$ makes $\Lambda := O(\log{Q})$ calls to $\Select$ subroutine (\Thm{thm:select}). 
Each call in phase $p$ is to increase the coverage of the set $\S_{p-1}$ to eventually achieve a larger coverage in $\S_{p}$. 
Instead of passing $\S_{p-1}$ directly to $\Select$, we instead pass the set $\tS_{p-1} := \rS_{p-1} \mid \event_k(\rS_{p-1})$ which is a set of items that has the same distribution as 
$\rS_{p-1}$ conditioned on the event $\event_k(\rS_{p-1})$ (i.e., we only consider such realizations of $\rS_{p-1}$ where $\event_k(\S_{p-1})$ occurs). We show in Claim~\ref{clm:select} that the performance of $\Select$ remains 
the same in this case also (simply because in $\Select$ we only access the distribution of input sets by sampling from it and hence we can sample from $\tS_{p-1}$ instead of $\rS_{p-1}$). 
This step is required to ensure that we can indeed achieve a larger coverage with higher probability across phases as we are ``focusing'' on realizations that are ``bad'' in previous phases, i.e.,
cannot cover a large fraction of $Q_{k}$. Formally, we prove that the $\Pr\paren{\event_k(\rS_{p})} \leq 1/2 \cdot \Pr\paren{\event_{k}(\rS_{p-1})}$, hence after $\Gamma = \Theta(\log{(mC)})$ phases,
the probability of this bad event reduces to $1/(mC)^{O(1)}$ and we can move on to the next round. We present the pseudo-code of $\Reduce$ algorithm below.

\begin{tbox}
\label{algo:reduce}
$\Reduce(\X,f_{T_{k-1}})$: Given a set $\X$ of items and a monotone submodular function $f_{T_{k-1}}$, outputs an ordered set of items $\S$ to be used in round $k$ of the $r$-round adaptive algorithm.
\begin{enumerate}
\item \textbf{Initialize:}  Set $\Lambda \leftarrow 12 \log(Q) $, and $\Gamma \leftarrow 2 \log{(mC)}$ .   

\item Set $\S_0 \leftarrow \emptyset$.
\item \textbf{For} phases $p = 1,  \cdots ,  \Gamma$ \textbf{do:} 
\begin{enumerate}
\item Set $\R_0 \leftarrow \emptyset$ and let $\tS_{p-1} := \S_{p-1} \mid \event_k(\S_{p-1})$. 
\item \textbf{For} iterations $i = 1, \cdots , \Lambda$ \textbf{do:}
\begin{enumerate}
\item $\R_i \leftarrow  \R_{i-1} \cup \Select( \rX \setminus \{\R_{i-1} \cup \S_{p-1}\}, f_{T_{k-1}}, \R_{i-1} \cup \tS_{p-1} , 2Q^{1/r}) $.
\end{enumerate}

\item $\S_p \leftarrow \S_{p-1} \cup \R_\Lambda$.

\end{enumerate}

\item \textbf{Return} the set $\S_{\Gamma}$, ordered according to the order in which items were added to $\S_{\Gamma}$.


\end{enumerate}
\end{tbox}

Before analyzing $\Reduce$ we need the following straightforward extension of Theorem~\ref{thm:select}.

\begin{claim}[Extension of Theorem~\ref{thm:select}]
\label{clm:select}
Let $f_T$ be any monotone submodular function, for some $T \subseteq E$, such that $Q' := Q - f(T)$.
Let $\rS \subseteq \rX$ be any subset of items, and $\event$ be an event which is a function of $\S$ and $\tS := \S \mid \event$. 
Let $\Delta := \Ex[Q' - f_T(\tS )]$, then $\Select$, given parameter $\alpha \geq Q'/\Delta$,
and $6 \alpha$ samples from $\tS$, outputs a set $\R \subseteq \rX \setminus \rS$ such that cost of $\R$ is $O(\alpha) \cdot \Ex\bracket{\cost{\OPT} \vert \event}$ in expectation over the randomness of the algorithm and 
$\Ex\bracket{Q' - f_T(\tS \cup \R)} \leq 5\Delta/6$ over the randomness of the algorithm and realizations of $\tS$ and $\R$. 
\end{claim}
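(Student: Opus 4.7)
My plan is to prove Claim~\ref{clm:select} by essentially re-running the arguments for Theorem~\ref{thm:select} with $\tS$ substituted in place of $\rS$ throughout, and verifying that each step remains valid when the input to $\Select$ is a conditional distribution rather than the original one. The crucial observation is that $\Select$ is a \emph{sampling-based} algorithm: the only way it accesses the distribution of its input set is by drawing independent samples from it. Hence, replacing ``$6\alpha$ samples from $\rS$'' by ``$6\alpha$ samples from $\tS$'' produces an execution that is mathematically identical to running $\Select$ on the distribution $\tS$. This reduction allows us to pretend, for the duration of the proof, that $\tS$ was the original input and to simply invoke the statements already established for Theorem~\ref{thm:select} with $\Delta := \Ex[Q' - f_T(\tS)]$ as the expected deficit.

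Concretely, I would first redo the cost bound of Claim~\ref{clm:select-cost}. For each sampled $S_i \sim \tS$, the call to $\nagreedy$ on $f_{T,S_i}$ with parameter $\Delta(S_i) = Q' - f_T(S_i)$ yields a set $\rT_i$ whose cost, by Theorem~\ref{thm:detsel}, is at most $3\,\Ex_{X}[\cost{\OPT(X)} \mid \rS = S_i]$ in expectation. Summing over the $6\alpha$ iterations and using that $\event$ is determined by $\rS$, so that
\begin{equation*}
\Ex_{S_i \sim \tS} \Ex_{X \mid S_i}\bracket{\cost{\OPT(X)}} \;=\; \Ex_{X}\bracket{\cost{\OPT(X)} \,\bigm|\, \event},
\end{equation*}
yields the desired bound $O(\alpha)\cdot \Ex[\cost{\OPT} \mid \event]$. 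The only subtlety is that $\OPT$ here is the optimal adaptive policy for covering $f$ on $\X$, and conditioning on $\rS = S_i$ (which already implies $\event$) is exactly what the proof of Claim~\ref{clm:select-cost} uses.

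Next, for the deficit bound, I would repeat the argument of Lemma~\ref{thm:select-deficit} verbatim but with $\tS$ in place of $\rS$. The key identities go through unchanged: equation~(\ref{eqn:detsel}) now reads $\Ex_{S_i \sim \tS}\Ex_X[g_{S_i}(\rT_i(S_i))] \geq \Delta/3$ directly from Theorem~\ref{thm:detsel} applied to each sample; the inductive argument embodied in Claim~\ref{clm:increase-coverage} relies only on monotonicity and submodularity of $f_T$ (Fact~\ref{fact:marginal-submodular}), which is preserved; and the final contradiction uses the assumption $\alpha \geq Q'/\Delta$, which matches the claim's hypothesis. This gives $\Ex[Q' - f_T(\tS \cup \R)] \leq 5\Delta/6$ as required.

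The main potential obstacle is bookkeeping around the conditional distribution: one must verify that all expectations and independence arguments (particularly the factorization used in the proof of Lemma~\ref{lem:lower-bound-opt}, if one needed to re-derive Theorem~\ref{thm:detsel}) remain valid when $\rS$ is replaced by $\tS$. Since $\event$ is a function of $\rS$ alone, and since the remaining items $\rX \setminus \rS$ are independent of $\rS$ (and hence of $\event$), these independence properties are preserved, so no additional argument is needed beyond the substitution. Thus the claim follows directly from this ``conditional re-instantiation'' of the original proof of Theorem~\ref{thm:select}.
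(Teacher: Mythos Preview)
Your proposal is correct and follows essentially the same approach as the paper: the paper's own argument for Claim~\ref{clm:select} is a brief remark that $\Select$ only accesses $\rS$ through sampling, so substituting $\tS$ for $\rS$ re-instantiates Theorem~\ref{thm:select} in the conditional world, with the one caveat (which you also address) that the loss of independence among items of $\tS$ is harmless because the proof of Theorem~\ref{thm:select} never uses such independence. Your write-up is in fact more detailed than the paper's, explicitly tracing through both the cost bound and the deficit bound, but the underlying idea is identical.
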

\noindent
Claim~\ref{clm:select} can be proven as follows: in $\Select$ we only need samples from the distribution $\rS$, hence by sampling from the distribution of $\tS$ instead we obtain the same result conditioned on event $\event$. One should be careful though, as the items in $\tS$ are no longer independent 
due to the conditioning on $\event$. However, $\Select$ does not require
independence between items in $\S$ and we can simply use $\tS$
instead of $\rS$.

We start by bounding the cost of the sets returned by $\Reduce$ in each phase. Note that not all these sets are going to be chosen by the $r$-round algorithm in round $k$ (as we may cover $\tau_k$ before reaching these sets and move on to next round)
and hence this cost is \emph{not} a lower bound on cost of the $r$-round algorithm. 

\begin{claim}\label{clm:reduce-cost}
	For any $p \in [\Gamma]$, $\Ex\bracket{\cost{\rS_p \setminus \rS_{p-1}}} = O(Q^{1/r} \cdot \log{Q}) \cdot \Ex[\c(\OPT) | \event_k(\S_{p-1})]$. 
\end{claim}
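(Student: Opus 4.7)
The plan is to decompose $\rS_p \setminus \rS_{p-1} = \R_\Lambda$ into the $\Lambda = 12 \log Q$ increments $\R_i \setminus \R_{i-1}$ produced by the successive calls to $\Select$ inside phase $p$, bound the expected cost of each increment via \Clm{clm:select}, and then sum over $i$.

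By the pseudocode of $\Reduce$, phase $p$ performs exactly $\Lambda$ invocations of $\Select$; the $i$-th invocation is called on input items $\rX \setminus (\R_{i-1} \cup \rS_{p-1})$, function $f_{T_{k-1}}$, input set $\R_{i-1} \cup \tS_{p-1}$ (where $\tS_{p-1} = \rS_{p-1} \mid \event_k(\rS_{p-1})$), and parameter $\alpha = 2 Q^{1/r}$. Since $\rS_p \setminus \rS_{p-1} = \R_\Lambda = \bigcup_{i=1}^{\Lambda} (\R_i \setminus \R_{i-1})$, by linearity of expectation it suffices to show that $\Ex\bracket{\cost{\R_i \setminus \R_{i-1}}} = O(Q^{1/r}) \cdot \Ex\bracket{\cost{\OPT} \mid \event_k(\rS_{p-1})}$ for every $i \in [\Lambda]$.

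The key observation that makes this step clean is that the cost guarantee of \Clm{clm:select} is \emph{unconditional} on the side hypothesis $\alpha \geq Q'/\Delta$: tracing the proof of \Thm{thm:select}, the cost of $\Select$ is driven entirely by its $\Xi = 6\alpha$ invocations of $\nagreedy$ on independently drawn samples of the input set, each of which by \Thm{thm:detsel} costs at most $3 \cdot \Ex\bracket{\cost{\OPT} \mid \cdot}$; the hypothesis $\alpha \geq Q'/\Delta$ enters only later, in the \emph{deficit reduction} step (cf.\ \Lem{thm:select-deficit}), not in the cost bound itself. Applying \Clm{clm:select} with $\alpha = 2Q^{1/r}$ and event $\event_k(\rS_{p-1})$ therefore gives $\Ex\bracket{\cost{\R_i \setminus \R_{i-1}}} = O(\alpha) \cdot \Ex\bracket{\cost{\OPT} \mid \event_k(\rS_{p-1})} = O(Q^{1/r}) \cdot \Ex\bracket{\cost{\OPT} \mid \event_k(\rS_{p-1})}$, regardless of how small the expected deficit inside phase $p$ has become after several iterations.

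Summing over the $\Lambda = 12 \log Q$ iterations yields the desired bound $\Ex\bracket{\cost{\rS_p \setminus \rS_{p-1}}} = O(Q^{1/r} \log Q) \cdot \Ex\bracket{\cost{\OPT} \mid \event_k(\rS_{p-1})}$. The main subtle point I expect to need to verify carefully is that the conditioning structure is consistent across all $\Lambda$ calls: every $\R_i$ is built using samples drawn from $\tS_{p-1}$ rather than from $\rS_{p-1}$ directly, so all expectations in the analysis are already implicitly taken under the conditioning on $\event_k(\rS_{p-1})$. Since \Clm{clm:select} was stated precisely to accommodate this kind of conditional argument, the bound on each call already carries the correct conditioning and no further work is required beyond the summation.
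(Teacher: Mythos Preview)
Your proposal is correct and follows essentially the same approach as the paper's proof: decompose $\rS_p \setminus \rS_{p-1}$ into the $\Lambda = O(\log Q)$ calls to $\Select$, bound each call's cost by $O(Q^{1/r}) \cdot \Ex[\cost{\OPT} \mid \event_k(\rS_{p-1})]$ via \Clm{clm:select}, and sum. Your explicit observation that the cost bound of $\Select$ holds independently of the hypothesis $\alpha \geq Q'/\Delta$ (since that hypothesis is used only for the deficit-reduction guarantee, not the cost guarantee) is a valid and useful clarification that the paper leaves implicit.
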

\begin{proof}
We call $\Select$ with the parameter $2Q^{1/r}$ for $O(\log{Q})$ iterations. By \Clm{clm:select}, cost of each iteration of phase $p$ is at most
$O(Q^{1/r})$ times the expected cost of $\OPT$ conditioned on $\event_k(\S_{p-1})$.
Hence, total cost of phase $p$ is $\Ex[\cost{\rS_p \setminus \rS_{p-1}}] = O(Q^{1/r} \cdot \log{Q}) \cdot \Ex[\cost{\OPT} \vert{ \event_k(\S_{p-1})}]$. 
\end{proof}
We now prove the main property of the $\Reduce$ subroutine, i.e., that the sets returned by it can cover the required threshold $\tau_k$ with high probability. 

\begin{lemma}
\label{lem:reduce-r-round}
Suppose $\rS_{\Gamma} := \Reduce(\rX,f_{T_{k-1}})$. Then,   
\[
\Pr(\event_k(\S_{\Gamma})) \leq 1/(mC)^2, 
\]
with respect to the randomness of the algorithm and the realizations of $\S_\Gamma$.
\end{lemma}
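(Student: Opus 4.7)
The plan is to establish per-phase contraction: for each $p \in [\Gamma]$, I will show
\begin{equation*}
\Pr[\event_k(\S_p)] \;\leq\; \tfrac{1}{2} \cdot \Pr[\event_k(\S_{p-1})],
\end{equation*}
so that induction over the $\Gamma = 2\log(mC)$ phases immediately gives $\Pr[\event_k(\S_\Gamma)] \leq (1/2)^\Gamma \leq 1/(mC)^2$. A first useful observation is the nesting $\event_k(\S_p) \subseteq \event_k(\S_{p-1})$: since $\S_{p-1} \subseteq \S_p$ and $f_{T_{k-1}}$ is monotone, whenever the deficit of $\S_p$ exceeds the threshold $Q_k/Q^{1/r}$ the deficit of $\S_{p-1}$ must as well. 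Consequently $\Pr[\event_k(\S_p)] = \Pr[\event_k(\S_p) \mid \event_k(\S_{p-1})] \cdot \Pr[\event_k(\S_{p-1})]$, and it suffices to bound the conditional probability by $1/2$.

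To bound that conditional probability, fix a phase $p$ and work under the conditioning on $\event_k(\S_{p-1})$, so that $\tS_{p-1}$ has its intended distribution. Define $\Delta_i := \Ex[Q_k - f_{T_{k-1}}(\R_{i-1} \cup \tS_{p-1})]$ for $i = 1, \ldots, \Lambda+1$ and iterate Claim~\ref{clm:select} over the $\Lambda$ calls of $\Select$ inside the phase, viewing $\R_{i-1} \cup \S_{p-1}$ as the ``set'' of that claim and $\event_k(\S_{p-1})$ as the event (which is indeed a function of that set, since it depends only on $\S_{p-1}$). The parameter $\alpha = 2Q^{1/r}$ supplied in each call is a valid choice whenever $\Delta_i \geq Q_k/(2Q^{1/r})$. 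At $i = 1$ this holds because the conditioning forces $Q_k - f_{T_{k-1}}(\S_{p-1}) \geq Q_k/Q^{1/r}$ pointwise, so $\Delta_1 \geq Q_k/Q^{1/r}$. For later iterations, if $\Delta_i$ ever drops below $Q_k/(2Q^{1/r})$ then Markov already delivers the conditional-probability bound we want; otherwise Claim~\ref{clm:select} applies and yields $\Delta_{i+1} \leq (5/6)\,\Delta_i$. Iterating, $\Delta_{\Lambda+1} \leq (5/6)^\Lambda \Delta_1 \leq (5/6)^\Lambda Q_k$, and since $\S_p = \S_{p-1} \cup \R_\Lambda$, Markov's inequality gives
\begin{equation*}
\Pr[\event_k(\S_p) \mid \event_k(\S_{p-1})] \;\leq\; \frac{\Delta_{\Lambda+1}}{Q_k/Q^{1/r}} \;\leq\; (5/6)^\Lambda \cdot Q^{1/r}.
\end{equation*}
With $\Lambda = 12\log Q$, the factor $(5/6)^\Lambda$ decays as a polynomial in $Q^{-1}$ with an exponent strictly exceeding $1 \geq 1/r$, so the product is at most $1/2$ for $Q \geq 2$.

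Combining the monotonicity reduction with the per-phase halving and inducting over $p \in [\Gamma]$ yields the claimed bound $\Pr[\event_k(\S_\Gamma)] \leq (1/2)^\Gamma \leq 1/(mC)^2$. The most delicate point in filling in this sketch is the bookkeeping around the conditioning: I must verify that the distribution of $\R_{i-1} \cup \tS_{p-1}$ matches the input that Claim~\ref{clm:select} expects (including the fact that the $\R_{i-1}$ constructed in previous inner iterations already carries the correct conditional distribution, since each earlier Select call samples from $\tS_{p-1}$), and that the validity condition $\alpha \geq Q_k/\Delta_i$ is either maintained throughout all $\Lambda$ inner calls or else fails early, in which case the desired bound already follows from Markov. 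Everything else is a routine combination of submodularity, monotonicity, and Markov's inequality.
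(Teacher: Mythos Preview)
Your proposal is correct and follows essentially the same approach as the paper: both arguments show the per-phase halving $\Pr[\event_k(\S_p)\mid\event_k(\S_{p-1})]\le 1/2$ by iterating Claim~\ref{clm:select} over the $\Lambda$ inner calls to drive the conditional expected deficit below $Q_k/(2Q^{1/r})$ and then applying Markov, and both conclude by chaining over the $\Gamma$ phases. Your explicit case split (either $\Delta_i$ drops below the threshold early, or the $(5/6)$-contraction applies throughout) is equivalent to the paper's proof-by-contradiction, and your nesting observation $\event_k(\S_p)\subseteq\event_k(\S_{p-1})$ makes explicit the step the paper uses when writing $\Pr(\event_k(\S_\Gamma))$ as a product of conditional probabilities.
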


\begin{proof}
We prove that the probability 
of the event $\event_k(\S_p)$
decreases after each phase $p$ by a constant factor.
Fix a phase $p \in [\Gamma]$. For a realization $S$ we define deficit $\Delta(S) =  Q_{k}- f_{T_{k-1}}(S)$.
Recall that $\R_i$ is the set of items picked up to (and including) iteration $i$ in phase $p$
on calls to $\Select$ with parameter $\alpha = 2Q^{1/r}$.
By \Clm{clm:select} we know that each iteration reduces the expected deficit 
by a constant factor. More formally, fix an $\R_{i-1}$ selected up to iteration $i-1$.
If $\Ex  \bracket{\Delta(\R_{i-1} \cup \S_{p-1}) \vert  \event_{p-1}} \geq Q_{k} /2Q^{1/r}$,
then the condition of \Clm{clm:select} that $\alpha \geq Q'/\Delta$ is satisfied with 
$\Delta = \Ex \bracket{\Delta(\R_{i-1} \cup \S_{p-1}) \vert  \event_{p-1}}$,
 $\alpha = 2Q^{1/r}$, and $Q' = Q_{k}$.
We then have
\begin{align*}
\Ex &\bracket{\Delta(\R_{i} \cup \S_{p-1})  \vert  \event_k(\S_{p-1})}  \\
	  & \qquad\qquad\qquad\leq \frac{5}{6} \Ex\bracket{\Delta(\R_{i-1} \cup \S_{p-1}) \vert  \event_k(\S_{p-1})},
\end{align*}
where the above expectation is also over the randomness of the $\Select$ subroutine in iteration $i$, in addition of the realization of $\R_i \cup \S_{p-1}$.
Now, we will prove that $\Lambda$ iterations are enough to drop the expected deficit 
below $Q_{k}/2Q^{1/r}$.
Suppose for a contradiction that this is not the case, i.e.\ after $\Lambda $ iterations
we have that
\begin{align}
\label{eqn:cond-progress}
\Ex [ \Delta(\R_{\Lambda} \cup \S_{p-1}) \vert  \event_k(\S_{p-1})] \geq \frac{Q_{k}}{2Q^{1/r}}
	\,.
\end{align}
Due to the fact that $f_{T_{k-1}}$ is a monotone function, we have
\begin{align*}
\Ex [ \Delta(\R_{i} \cup \S_{p-1}) \vert  \event_k(\S_{p-1})] \geq \Ex [ \Delta(\R_{\Lambda} \cup \S_{p-1}) \vert  \event_k(\S_{p-1})]
	\,,
\end{align*}
for all $\R_i$.
Then using \Eqn{eqn:cond-progress} and the above equation, we can observe that the condition of \Clm{clm:select} that $\Ex  \bracket{\Delta(\R_{i} \cup \S_{p-1}) \vert  \event_k(\S_{p-1})} \geq Q_{k} /2Q^{1/r}$ is satisfied for every $\R_{i}$.
This implies that after $\Lambda$ iterations the expected deficit can be written as
\begin{align}
\Ex  [ \Delta(\R_\Lambda \cup \S_{p-1}) \vert \event_k(\S_{p-1})]  
&\leq \paren{\frac{5}{6}}^{\Lambda} \cdot \Ex  [ \Delta(\S_{p-1}) \vert \event_k(\S_{p-1})] \nonumber \\
&\leq \paren{\frac{5}{6}}^{12\log{Q}} \cdot Q_{k}   \qquad\qquad\qquad \tag{Recall that $\Lambda = 12 \log Q$} \nonumber\\
&<\frac{Q_{k}}{2Q}  \leq \frac{Q_{k}}{2Q^{1/r}} \label{eqn:contradict}
	\,.
\end{align}
\Eqn{eqn:cond-progress} and \Eqn{eqn:contradict} lead to a contradiction. Hence, we will have that 
\begin{align*}
\Ex [ \Delta(\S_p) \vert  \event_k(\S_{p-1})]  = \Ex  [ \Delta(\R_\Lambda \cup \S_{p-1}) \vert \event_k(\S_{p-1})] < \frac{Q_{k}}{2Q^{1/r}}
	\,.
\end{align*}
%
where again the expectation is over the randomness of the $\Select$ subroutine.
Now, using Markov's inequality we have that
\begin{align}
\label{eqn:reduce-first}
\Pr\Big(\event_k(\S_{p})\, &\Big\vert \, \event_k(\S_{p-1})\Big) = \Pr\Big(\Delta(\S_p)  \geq \frac{Q_{k}}{Q^{1/r}} \, \Big\vert \, \event_k(\S_{p-1}) \Big) \leq \frac{1}{2} 
\,,
\end{align}
where the above probability is both with respect to the realizations of $\S_p$ and the coins used by the 
algorithm to select $\S_p$.
Now, we have that 
\begin{align*}
\Pr(\event_{k}(\rS_{\Gamma})) &= \Pr(\event_k(\rS_1)) \cdot \Pi_{p=2}^\Gamma \Pr\paren{\event_k(\rS_i) \mid \event_k(\rS_{i-1})} \\
&\!\!\!\!\!\Leq{Eq~(\ref{eqn:reduce-first})} \paren{\frac{1}{2}}^{\Gamma-1} \leq \frac{1}{(mC)^2},
\end{align*}
by the choice of $\Gamma = \Theta(\log{(mC)})$, which proves the desired result.
\Qed{\Lem{lem:reduce-r-round}}

\end{proof}

\subsection{The $r$-Round Adaptive Algorithm}

We are now ready to present our $r$-round algorithm which is based on successive applications of the $\Reduce$ subroutine.

\begin{tbox}
\label{algo:adaptive-submodular}
$\KAdapt(\X, f, Q)$: Given a set of items $\X$, a monotone submodular function $f$, and the desired coverage value $Q$, outputs a set $\rT$ such that its realization $T$ is feasible. 
\begin{enumerate}
\item \textbf{Initialize:} Set $\rT_0 \leftarrow \emptyset, T_0 \leftarrow \emptyset$
\item \textbf{For} $k = 1,2,  \cdots , r$ \textbf{do:} 
\begin{enumerate}
\item Set threshold $\tau_k \leftarrow Q - Q^{(r-k)/r}$
\item $\rT \leftarrow \Reduce(\X \setminus \rT_{k-1}, f_{T_{k-1}}) $
\item  Add the remaining items $\X \setminus (\rT \cup \rT_{k-1})$ at the end of $\rT$ in any arbitrary order.

\item Observe the realizations $T'$ of the set of items $\rT '\subseteq \rT$ selected by running through the ordered set $\rT$ until a total coverage of $\tau_k$ is reached, i.e.\ $f(T_{k-1} \cup T') \geq \tau_k$
\item $\rT_k \leftarrow  \rT' \cup \rT_{k-1} $ and $T_{k} \leftarrow  T' \cup T_{k-1} $
\end{enumerate} 

\item \textbf{Return } $\rT_{r}$ with realization $T_r$ as the final answer. 
\end{enumerate}
\end{tbox}

We are now ready to prove \Thm{thm:r-round-upper} by analyzing the above algorithm.
The overall plan is to bound the cost of each round of the $r$-round algorithm.
In each round the algorithm selects an ordering returned by a call 
to $\Reduce$ and adds the remaining items at the 
end of this ordering. 
As argued earlier, not all the sets in the ordering
are going to be chosen by the $r$-round algorithm in round $k$.
We will use \Clm{clm:reduce-cost} and 
Lemma~\ref{lem:reduce-r-round} to bound the
expected cost of the items selected from the ordering in round $k$
in terms of the expected cost of $\OPT$.
%
In order to do so, we first lower bound the cost of $\OPT$.

\begin{claim}
\label{clm:bound-opt}
For any (possibly randomly chosen) collection $\S \subseteq \X$, and any event $\event$
which is a function of $\S$,
the expected cost of $\OPT$ can be lower bounded as
\[
\Ex[\cost{\OPT}] \geq \Pr(\event) \cdot \Ex[\cost{\OPT} \vert{\event}].
\]
\end{claim}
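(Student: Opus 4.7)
The plan is to observe that the claim is essentially an immediate consequence of the law of total expectation combined with the non-negativity of $\cost{\OPT}$. Concretely, I would first note that costs $c_i$ are non-negative integers, so $\cost{\OPT(X)} = \sum_i c_i \cdot \1[\X_i \in \OPT(X)] \geq 0$ for every realization $X$ of $\X$; hence $\Ex[\cost{\OPT} \mid \neg \event] \geq 0$ as well (and the conditional expectation is well-defined if $\Pr(\event) < 1$; otherwise the inequality is trivial).

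Next I would decompose by the law of total expectation with respect to the event $\event$ (which is a function of $\S \subseteq \X$, and hence a measurable event with respect to the randomness of $\X$):
\begin{align*}
\Ex[\cost{\OPT}] \;=\; \Pr(\event) \cdot \Ex[\cost{\OPT} \mid \event] \;+\; \Pr(\neg \event) \cdot \Ex[\cost{\OPT} \mid \neg \event].
\end{align*}
Dropping the second (non-negative) term on the right-hand side yields the desired inequality
\begin{align*}
\Ex[\cost{\OPT}] \;\geq\; \Pr(\event) \cdot \Ex[\cost{\OPT} \mid \event].
\end{align*}

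There is no real obstacle here: the only thing to be careful about is the case $\Pr(\event) = 0$, in which case the right-hand side is $0$ by convention and the inequality is vacuous. I would state the proof in two or three lines, emphasizing that the claim will be invoked later with $\event = \event_k(\S_{p-1})$ (defined in Section~\ref{sec:subroutine}), so that together with Claim~\ref{clm:reduce-cost} we can charge the per-phase cost $\Ex[\cost{\OPT} \mid \event_k(\S_{p-1})]$ against the unconditional $\Ex[\cost{\OPT}]$ after paying an extra factor of $1/\Pr(\event_k(\S_{p-1}))$, which will be controlled by Lemma~\ref{lem:reduce-r-round}.
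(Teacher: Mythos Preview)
Your proposal is correct and follows exactly the same approach as the paper: decompose $\Ex[\cost{\OPT}]$ via the law of total expectation with respect to $\event$ and its complement, then drop the non-negative term corresponding to $\neg\event$. Your added remarks on the degenerate cases $\Pr(\event)\in\{0,1\}$ and on how the claim is later applied are accurate but go slightly beyond what the paper bothers to state.
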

\begin{proof}
The expected cost of $\OPT$ can be written as
\begin{align*}
\Ex[\cost{\OPT}] &= \Pr(\event) \cdot \Ex[\cost{\OPT} \vert{\event}]  + \Pr(\neg \event) \cdot \Ex[\cost{\OPT} \vert{\neg \event}]  \\
		     &\geq \Pr(\event) \cdot \Ex[\cost{\OPT} \vert{\event}]
		     		     	\,.
\end{align*}
Note that the above also holds even if the collection $\S$ is itself randomly chosen.
\Qed{\Lem{clm:bound-opt}}

\end{proof}

We now prove the lemma bounding the expected cost of each round of $\KAdapt$.
We will define the notation $\cost{\Round_k}$ to be 
the total cost of all the items added to the feasible 
set in round $k$. More formally,
\[
\cost{\Round_k} := \cost{\rT_k \setminus \rT_{k-1}}
	\,.
\]
Now, we will provide a bound on $\Ex[\cost{\Round_k}]$.

\begin{lemma}
\label{lem:round-cost}
For any round $k\leq r$, given $T_{k-1}$, the expected cost 
paid by the $\KAdapt$ algorithm in round $k$ can bounded as
\begin{align*}
\Ex[&\cost{\Round_k} \vert T_{k-1}]  \leq  O(Q^{1/r} \log(Q) \log(mC)) \cdot \Ex[\c(\OPT) \vert T_{k-1}] 
	\,.
\end{align*}
%
\end{lemma}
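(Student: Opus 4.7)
The approach is to decompose the expected cost of round $k$ over the phases of $\Reduce$ and then bound each phase's contribution separately. The starting point is the structural observation that, given the inductive invariant $Q_k := Q - f(T_{k-1}) \leq Q^{(r-k+1)/r}$ (which is automatically enforced since round $k-1$ stops exactly at threshold $\tau_{k-1} = Q - Q^{(r-k+1)/r}$), an item from $\rS_p \setminus \rS_{p-1}$ can only be realized in round $k$ when the prefix $\rS_{p-1}$ fails to reach threshold $\tau_k = Q - Q^{(r-k)/r}$, i.e., when the residual deficit after $\rS_{p-1}$ exceeds $Q^{(r-k)/r}$. Since $Q_k \leq Q^{(r-k+1)/r}$ gives $Q_k/Q^{1/r} \leq Q^{(r-k)/r}$, this failure implies exactly the event $\event_k(\S_{p-1})$. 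Thus I will be able to write
\[
\cost{\Round_k} \;\leq\; \sum_{p=1}^{\Gamma} \cost{\rS_p \setminus \rS_{p-1}} \cdot \1[\event_k(\S_{p-1})] \;+\; (\text{cost of tail items consumed in round } k).
\]

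Next, I would bound each per-phase term. The key independence is that, given the choice of $\rS_{p-1}$ as a collection of items (not their realizations), the cost of phase $p$'s items is determined solely by the algorithm's internal randomness (sampling from the distribution $\tS_{p-1}$ and executing $\Select$), whereas $\1[\event_k(\S_{p-1})]$ is determined solely by the external realization of items in $\rS_{p-1}$. These are conditionally independent given $\rS_{p-1}$, so the conditional expectation factors. Applying \Clm{clm:select} per fixed $\rS_{p-1}$ (which is exactly what the proof of \Clm{clm:reduce-cost} does internally) yields $\Ex[\cost{\rS_p \setminus \rS_{p-1}} \mid \rS_{p-1}] \leq O(Q^{1/r}\log Q) \cdot \Ex[\cost{\OPT} \mid \event_k(\S_{p-1}), \rS_{p-1}]$. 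Multiplying by $\Pr[\event_k(\S_{p-1}) \mid \rS_{p-1}]$ converts the conditional expectation of $\cost{\OPT}$ into $\Ex[\cost{\OPT} \cdot \1[\event_k(\S_{p-1})] \mid \rS_{p-1}]$, and averaging over $\rS_{p-1}$ while dropping the indicator (formally via \Clm{clm:bound-opt}) gives
\[
\Ex[\cost{\rS_p \setminus \rS_{p-1}} \cdot \1[\event_k(\S_{p-1})] \mid T_{k-1}] \;\leq\; O(Q^{1/r}\log Q) \cdot \Ex[\cost{\OPT} \mid T_{k-1}].
\]

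The tail contribution is controlled by a crude worst-case estimate: the total cost of all items is at most $mC$, and tail items are consumed only when $\rS_\Gamma$ fails to reach $\tau_k$, which implies $\event_k(\S_\Gamma)$. By \Lem{lem:reduce-r-round}, this has probability at most $1/(mC)^2$, so the expected tail cost is at most $1/(mC)$, dominated by $\Ex[\cost{\OPT} \mid T_{k-1}] \geq 1$ (which holds whenever $Q_k > 0$ since item costs are positive integers). Summing the per-phase bounds over $p = 1, \ldots, \Gamma = O(\log(mC))$ then gives the claimed $O(Q^{1/r}\log Q \log(mC)) \cdot \Ex[\cost{\OPT} \mid T_{k-1}]$.

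The main obstacle will be executing the independence argument cleanly. A direct application of \Clm{clm:reduce-cost} only bounds the unconditional $\Ex[\cost{\rS_p \setminus \rS_{p-1}}]$ in terms of the conditional $\Ex[\cost{\OPT} \mid \event_k(\S_{p-1})]$, which can be arbitrarily larger than $\Ex[\cost{\OPT} \mid T_{k-1}]$ when $\event_k$ becomes rare in late phases; the whole point of carrying the indicator $\1[\event_k(\S_{p-1})]$ is to compensate for this blowup via a cancellation with the $\Pr(\event_k(\S_{p-1}))$ factor arising from \Clm{clm:bound-opt}. The delicate step is therefore to separate the algorithm's coin flips from item realizations so that the product $\cost{\rS_p \setminus \rS_{p-1}} \cdot \1[\event_k(\S_{p-1})]$ genuinely factors after conditioning on $\rS_{p-1}$, allowing Claims \ref{clm:reduce-cost} and \ref{clm:bound-opt} to be chained together into the desired per-phase bound.
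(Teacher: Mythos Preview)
Your proposal is correct and follows essentially the same approach as the paper: decompose the round-$k$ cost over the $\Gamma$ phases of $\Reduce$, observe that phase-$p$ items are charged only under $\event_k(\S_{p-1})$, bound each phase's cost via \Clm{clm:reduce-cost}, cancel the conditioning using \Clm{clm:bound-opt}, and handle the tail with \Lem{lem:reduce-r-round} and the crude $mC$ bound. The one place you are more careful than the paper is in justifying the factoring of $\Ex[\cost{\rS_p\setminus\rS_{p-1}}\cdot\1[\event_k(\S_{p-1})]]$ by conditioning on the choice of $\rS_{p-1}$ and separating the algorithm's sampling coins from the item realizations; the paper simply writes $\Pr(\event_k(\S_{p-1}))\cdot\Ex[\cost{\S_p\setminus\S_{p-1}}\mid T_{k-1}]$ without comment, so your added explanation is a genuine clarification rather than a different argument.
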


\begin{proof}
Recall that in round $k$ we call 
$\Reduce$ with parameter $f_{T_{k-1}} = f_{T_{k-1}}$ such that $Q_{k} = Q -  f(T_{k-1})$.
Also, recall that in phase $p$, $\Reduce$ adds items 
$\S_p \setminus \S_{p-1}$
to the ordering $\S_\Gamma$ returned by it.
Using \Clm{clm:reduce-cost} we have that
\begin{align*}
\Ex[&\cost{\S_p \setminus \S_{p-1}} \vert T_{k-1} ]  = O(Q^{1/r} \cdot \log{Q}) \cdot \Ex[\c(\OPT) \vert T_{k-1}, \event_k(\S_{p-1})]
		\,.
\end{align*}
Also, recall that while running through the ordered set
of round $k$ we select items 
from $\S_p \setminus \S_{p-1}$ only
if the realization is such that the items in $\S_{p-1}$ are not able to 
reach the required coverage threshold $\tau_k$.
More formally, we only pay for the cost of items in $\S_p \setminus \S_{p-1}$
when the event $\event_k(\S_{p-1})$ occurs.
Hence, we will pay the cost of phase $p$
items with probability $\Pr(\event_k(\S_{p-1}))$.
Also, in the case that all the items $\S_{\Gamma}$ returned by $\Reduce$ are not able 
to reach the required coverage threshold, we trivially bound the cost by 
$mC$. 
Since, $Q_k \leq Q^{(r-k+1)/r}$, 
this event happens with probability at most $\Pr(\event_k(\S_\Gamma))$ which is upper bounded by
$1/(mC)^2$ using \Lem{lem:reduce-r-round}.
Combining all this, we have that, given $T_{k-1}$,
\begin{align*}
&\Ex[\cost{\Round_k} \vert T_{k-1} ]  \\
&\qquad \leq \sum_{p=1}^{\Gamma}  \Pr\paren{\event_k(\S_{p-1})} \cdot  \Ex[\cost{\S_p \setminus \S_{p-1}} \vert T_{k-1}] + \Pr\paren{\event_k(\S_{\Gamma})} \cdot mC \\
& \qquad \Leq{\Clm{clm:reduce-cost}} \sum_{p=1}^{\Gamma} \Pr\paren{\event_k(\S_{p-1})} \cdot O\paren{Q^{1/r} \log (Q)} \cdot \Ex[\c(\OPT) \vert T_{k-1}, \event_k(\S_{p-1})] + \Pr\paren{\event_k(\S_{\Gamma})} \cdot mC
  \nonumber	\\
		& \qquad \Leq{\Clm{clm:bound-opt}} O\paren{Q^{1/r} \log (Q) \log (mC)} \Ex\bracket{\cost{\OPT} \vert T_{k-1}} + \Pr\paren{\event_k(\S_{\Gamma})} \cdot mC \\
		& \qquad \Leq{\Lem{lem:reduce-r-round}} O\paren{Q^{1/r} \log (Q) \log (mC)} \Ex\bracket{\cost{\OPT} \vert T_{k-1}} + \frac{1}{(mC)^2} \cdot mC \\
		&  \qquad =  O\paren{Q^{1/r} \log (Q) \log (mC)} \Ex\bracket{\cost{\OPT} \vert T_{k-1}}
		\,.
\end{align*}

\Qed{\Lem{lem:round-cost}}

\end{proof}

We are now ready to prove \Thm{thm:r-round-upper} which uses the above lemma
to give a combined bound on the cost of all the rounds.
\begin{proof}(of \Thm{thm:r-round-upper})
We will first divide the $\cost{\KAdapt}$ into the cost of each round.
\begin{align}
\label{eqn:cost-rounds}
\Ex[ \cost{\KAdapt}] &= \sum_{k=1}^r \Ex[\cost{\Round_k}] 
	\,,
\end{align}
where recall that $\cost{\Round_k} := \cost{\rT_k \setminus \rT_{k-1}}$
and $\rT_k$ are the items selected up to (and including) round $k$. Let $T_k$ be the realization of $\rT_k$. We first need to understand that there are two sources of randomness-- 1) due to the coins used by the algorithm to sample the realizations; 2) due to stochastic nature of items. 
We will first fix the randomness due to the coins used by the algorithm for sampling.
Once we fix the realization of coins used by the algorithm, the only randomness in the algorithm is 
due to the stochastic nature of items.
Then for any $k \leq r$, given a fixed realization of coins in rounds up to $k-1$, we have
\begin{align*}
\Ex[  \c(\Round_k)] &\Leq{\Lem{lem:round-cost}} O\paren{Q^{1/r} \log (Q) \log (mC)} \cdot \Ex_{T_{k-1} \sim \rT_{k-1}} \Ex [\cost{\OPT}\vert {T_{k-1}} ]  \\
		& = O\paren{Q^{1/r} \log (Q) \log (mC)}   \Ex[\c(\OPT)]
	\,,
\end{align*}
where the last equality is due to the fact that once we fix the randomness due to coins up to round $k-1$, then
the realizations $T_{k-1}$ form a partition over the space of all realizations $X$.
Since the choice of coins was arbitrary, we have that $\Ex[\c(\Round_k) ]\leq \Ot(Q^{1/r}) \c(\OPT)$.

Then, using \Eqn{eqn:cost-rounds} and the above, the total cost can be bounded as
\begin{align*}
&\cost{\KAdapt} =   O\paren{r Q^{1/r} \log (Q) \log (mC)} \Ex[\c(\OPT) ]
	\,. 
\end{align*}
\Qed{\Thm{thm:r-round-upper}}

\end{proof}

\begin{remark}
	We can implement the $r$-round algorithm in polynomial time as long as the costs are polynomially bounded, i.e., achieve a pseudo-polynomial time algorithm.
	Indeed, the only ``time consuming'' step of the algorithm is to sample from the conditional distribution $\S \vert \event$ for some event $\event$. 
	This is however is only needed as long as the $\Pr(\event) \geq 1/(mC)^{\Theta(1)}$. 
	Hence, one can use rejection sampling with 
	the total running time bounded by $\poly(QmC)$ to implement this step.
	The probability that we do not get the required number of
	 samples from the event $\event$ with
	$\Pr(\event)  \geq 1/mC$ after $\poly(QmC)$ trials is negligible, 
	and we can pay for the cost in case this bad event happens.
\end{remark}


\newcommand{\rV}{\ensuremath{\rv{V}}}

\newcommand{\esmall}[1]{\event_{\textnormal{\textsf{small}}}(#1)}
\newcommand{\ehit}[1]{\event_{\textnormal{\textsf{hit}}}(#1)}

\newcommand{\Uc}{\ensuremath{U_{cov}}}

\renewcommand{\estar}{\ensuremath{e^{\star}}}

\section{A Lower Bound for ${r}$-Round Adaptive Algorithms}
\label{sec:lower}

In this section, we prove a lower bound on the approximation ratio of any $r$-round adaptive algorithm for the submodular cover problem and formalize Result~\ref{res:r-round-lower}. 
We prove this lower bound for the stochastic set cover problem (see Example~\ref{exp:set-cover}) which is a special case of the stochastic submodular cover problem. 

\begin{theorem}\label{thm:r-round-lower}
	For any integer $r \geq 1$, any $r$-round adaptive algorithm for the stochastic set cover problem on instances with $m$ stochastic sets from a universe of size $n$ elements such that $m = n^{O(r)}$ has expected
	cost $\Omega(\frac{1}{r^3} \cdot n^{1/r})$ times the cost of the optimal adaptive algorithm. 
\end{theorem}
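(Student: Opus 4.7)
My plan is to construct a distribution of stochastic set cover instances with an $(r{+}1)$-level hierarchical structure based on the edifice set system of Chakrabarti and Wirth. The distribution will be such that fully adaptive algorithms can traverse the hierarchy level by level in $r{+}1$ rounds and pay only $O(r)$, whereas any $r$-round algorithm is one round short of the natural strategy and must ``guess ahead'' at some level, paying an $\Omega(n^{1/r})$ factor. By Yao's minimax principle, it suffices to lower-bound the expected cost of a deterministic $r$-round algorithm against this fixed prior.

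The instance is built recursively. Fix $k = \Theta(n^{1/r})$. The level-$0$ node is the universe $U$; a level-$\ell$ node $V$ has $k$ children, each a subset of $V$ of size $|V|/k$, with only small pairwise intersection between siblings (the Chakrabarti--Wirth edifice gives such a near-laminar hierarchy with $r$ levels and $n^{O(r)}$ total nodes). The stochastic items are: for each internal node $V$, a single item $\rT_V$ whose realization is $V \setminus V_c$ for a uniformly random child $V_c$ of $V$; together with a deterministic singleton $\rT_e$ for every $e \in U$. All random choices are independent across nodes. The canonical adaptive strategy realizes the item at the current node, identifies the active child from the single omitted subset, and recurses, finishing after $r{+}1$ rounds with cost $O(r)$.

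For the lower bound I would introduce a potential $\Phi_k$ measuring the depth of the deepest edifice node that the algorithm's posterior after $k$ rounds identifies as lying on the active root-to-leaf path with constant probability. The key structural lemma is that in any single round, the algorithm can increase $\Phi$ by at most one unless it pays $\Omega(n^{1/r})$: because of near-laminarity, any item corresponding to a sibling of the active branch contributes negligible marginal coverage, so to learn which child of the current node is active without first realizing the current node's branch item, the algorithm must include a constant fraction of all $k$ sibling items, at cost $\Omega(k) = \Omega(n^{1/r})$. A companion observation handles algorithms that try to cover a level-$\ell$ subtree directly by singletons: each such subtree has $n^{(r-\ell)/r} \geq n^{1/r}$ elements, so this route also costs $\Omega(n^{1/r})$. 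Since covering $U$ forces the algorithm to reach $\Phi_r \geq r$ with non-trivial probability, and $r$ rounds can afford only $r$ ``free'' unit advances from the starting potential $\Phi_0 = 0$, at least one round must skip a level of the hierarchy and thereby pay $\Omega(n^{1/r})$.

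The main obstacle I anticipate is robustness against algorithms that mix strategies across rounds: an algorithm might, for example, speculatively include a small handful of sibling items at many different levels, hoping to hit the active branch at just one of them. I plan to handle this via a charging argument in which each stochastic item included by the algorithm is charged to the level of its edifice node, and the near-laminar property implies that the charges concentrate on whichever level the algorithm skipped. The factor $1/r^3$ in the stated bound is expected to arise from three sources of slack: a pigeonhole over $r$ rounds to identify the skipped level, a Markov-style argument converting an $\Omega(1/r)$-probability high-cost event into expected cost, and $\Theta(1/r)$ losses per level from the edifice parameters when nesting $r$ deep.
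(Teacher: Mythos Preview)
Your proposal is correct and follows essentially the same approach as the paper: both build the hierarchical instance from the Chakrabarti--Wirth edifice, let adaptive $\OPT$ traverse the canonical root-to-leaf path in $r{+}1$ rounds at cost $O(r)$, and argue round by round that an $r$-round algorithm must at some level ``jump ahead'' without knowing the active child, paying $\Omega(n^{1/r})$. The paper does not package this as a potential argument, however; it instead defines explicit per-round events $\esmall{i}$ (round $i$ selects at most $N/8k$ sets) and $\ehit{i}$ (round $i$ selects no descendant of the next canonical vertex $v_{i+1}$), shows $\Pr\bigl(\ehit{i}\mid \esmall{\leq i},\ehit{<i}\bigr)\geq 1-\tfrac{1}{2k}$, and then proves that conditioned on all of $\esmall{*},\ehit{*}$ the final round must pay $\Omega(N/k)$ because the missing element $e^\star$ is still uniform in a leaf set nearly disjoint from everything chosen so far. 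One parameter difference worth noting: the paper takes branching factor $d=N^2=n^{2/r}$ rather than $n^{1/r}$, which is what makes the $\ehit{}$ bound go through cleanly with a per-round budget of $N/8k$; with branching only $n^{1/r}$ as in your sketch you would need a tighter per-round budget to keep the union bound across rounds under control. Your potential $\Phi$ and the proposed charging argument are a reasonable repackaging of exactly the $\esmall{}/\ehit{}$ event structure, and the three sources of $1/r$ slack you identify correspond to the per-round budget $N/(8k)$, the case split in the paper's final lemma, and the division by $\cost{\OPT}=k{+}1$.
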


\noindent
Theorem~\ref{thm:r-round-lower} formalizes Result~\ref{res:r-round-lower} as by definition, $Q = n$ in the stochastic set cover problem. 

\paragraph{Overview.} Consider first an instance of the stochastic set cover problem which was used in  \cite{GoemansVo06} for proving a $1$-round adaptivity gap. There exists a single stochastic set, say $\rT$, which realizes to $U \setminus \set{\estar}$
for $\estar$ chosen uniformly at random from $U$ (support of $\rT$ has $n$ sets). The remaining sets in this instance are $n$ singleton sets that each deterministically  realize to some unique element $e \in U$.
Solving such an instance adaptively with just two sets, and indeed even in two rounds of adaptivity, is trivial: choose the set $\rT$ and observe its realization in the first round; next choose the singleton set that covers $\estar$. However,
consider any non-adaptive algorithm for this problem: even though it is obvious that the set $\rT$ needs to be the first set in the ordering returned by the algorithm, there is no ``good'' choice for the ordering of the remaining sets 
as the algorithm is oblivious to the identity of $\estar$ at this point. It is then fairly easy to see that no matter what ordering the non-adaptive algorithm chooses, in expectation $\Omega(n)$ sets needs to be picked before it could
cover $\estar$ and hence the universe $U$. An adaptivity gap of $\Omega(n)$  now follows easily from this argument. 

Our main contribution in this section is to design a family of instances in this spirit that allows us to extend the above argument to $r$-round adaptive algorithms. Roughly speaking, these instances are constructed in a way that
at the beginning of each round, the algorithm has access to a set that covers a ``large'' portion of the remaining universe ``randomly'', but since the realization of this set is not known to the algorithm, unless it picks many more sets,
it would not be able to also cover the ``remainder of universe'' (left out by the realization of the aforementioned set). Morally speaking, this corresponds to replacing the set $\set{\estar}$ with larger subsets of $U$ in the above argument 
and then recurse on each subset individually.

The rest of this section is devoted to the proof of Theorem~\ref{thm:r-round-lower}. 
We start by introducing an algebraic construction of a set-system, named an \emph{edifice}, due to Chakrabarti and Wirth~\cite{ChakrabartiW16} and use it to introduce a family of ``hard'' instances for the stochastic set cover problem.
We then prove that any algorithm with limited rounds of adaptivity on these instances necessarily incurs a large cost compared to the optimal adaptive algorithm and prove Theorem~\ref{thm:r-round-lower}. 

\subsection*{Edifice Set-System}

An edifice over a universe $U$ of $n$ items is a collection of sets in which for any two sets, either one of them is a subset of the other, or the two sets have a small intersection. 
Formally:

\begin{definition}[Edifice Set-System~\cite{ChakrabartiW16}]\label{def:edifice}
	For integers $k \leq s \leq b \leq d$, 
	a $(s,b,k,d)$-edifice $\TT$ over a universe $U$ is a \emph{complete} \emph{$d$-ary $k$-level rooted tree} together with a collection of associated sets, satisfying the following properties: 
	\begin{enumerate}[label=(\Roman*)]
		\item Each vertex $v$ in $\TT$ is associated with a set $U_v \subseteq U$ such that the set associated to the root of $\TT$ is $U$, and $U_u \subseteq U_v$ if $u$ is a \emph{child} of $v$ in $\TT$.  
		\item If $v$ is a leaf of $\TT$, then $\card{U_v} = b$. 
		\item For each leaf $u$ and each node $v$ \emph{not an ancestor} of $u$ in $\TT$, $\card{U_u \cap U_v} \leq s$. 
	\end{enumerate}
	In this definition, we say that root is at level $1$ of the tree and the leaf-vertices are at level $k$
\end{definition}

Edifices are typically interesting when the parameter $s$ is small and parameter $b$ is large compared to the size of the universe, i.e., when we 
have large sets which are almost disjoint from each other in a recursive manner suggested by the tree-structure of an edifice. 
For our purpose, we are interested in edifices with parameters $r = k \approx s$ ($r$ is the number of rounds we want to prove the lower bound for), $b \approx n^{1/k}$, and $d = n^{O(1)}$ ($n$ is the number of elements in the universe).
The existence of such edifices follows from the results in~\cite{ChakrabartiW16} (see Theorem~{3.5}; see also RND-set systems in~\cite{AssadiK18} for a similar construction), which we summarize in the following proposition. 

\begin{proposition}[\!\!\cite{ChakrabartiW16}]\label{prop:edifice-exist}
	For infinitely many integers $N$ and any integer $k \geq 1$, there exists a $(4k, N ,k, N^{2})$-edifice over a universe $U$ of size $N^{k}$. 
\end{proposition}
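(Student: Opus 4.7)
The plan is to exhibit an explicit algebraic $(4k, N, k, N^2)$-edifice for every prime $N \ge 4k$; since there are infinitely many primes larger than $4k$ this handles the ``infinitely many $N$'' clause. I will then verify the three defining properties in turn, with property~(III) being the only one that requires an actual argument.

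Take the ground set $U := \mathbb{F}_N^k$ and coordinatize its elements as tuples $(x, y_1, \ldots, y_{k-1})$ with $x, y_1, \ldots, y_{k-1} \in \mathbb{F}_N$. Let $\TT$ be the complete $N^2$-ary rooted tree of depth $k$ (root at level $1$, leaves at level $k$), and label a vertex at level $\ell$ by an ordered tuple $(P_1, \ldots, P_{\ell-1})$ of linear polynomials $P_i(x) = a_i + b_i x \in \mathbb{F}_N[x]$ (so the root is the empty tuple). The $N^2$ children of such a vertex are obtained by appending any of the $N^2$ possible linear polynomials $P_\ell$, which gives branching factor $d = N^2$ as required. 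Associate to the vertex $v = (P_1, \ldots, P_{\ell-1})$ the set
\[
U_v := \{(x, y_1, \ldots, y_{k-1}) \in U : y_i = P_i(x) \text{ for every } i < \ell\}.
\]

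Properties~(I) and~(II) fall out immediately. The root corresponds to the empty tuple, so $U_{\text{root}} = U$; passing from $v$ to any child $u$ appends the single constraint $y_\ell = P_\ell(x)$, so $U_u \subseteq U_v$. At a leaf $v = (P_1, \ldots, P_{k-1})$ every $y$-coordinate is pinned, giving $U_v = \{(x, P_1(x), \ldots, P_{k-1}(x)) : x \in \mathbb{F}_N\}$, which has exactly $N = b$ elements since distinct values of $x$ produce distinct first coordinates.

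The heart of the argument is property~(III). Let $u = (P_1, \ldots, P_{k-1})$ be any leaf and $v = (Q_1, \ldots, Q_m)$ any non-ancestor of $u$ in $\TT$; the non-ancestry condition says $m \le k-1$ and there is some index $j \le m$ with $P_j \neq Q_j$. An element of $U_u$ necessarily has the form $(x, P_1(x), \ldots, P_{k-1}(x))$, and for it to also lie in $U_v$ we require $P_i(x) = Q_i(x)$ for all $i \le m$, and in particular $P_j(x) = Q_j(x)$. Since $P_j - Q_j$ is a nonzero polynomial of degree at most one over the field $\mathbb{F}_N$, it has at most one root, so at most a single $x \in \mathbb{F}_N$ contributes to $U_u \cap U_v$. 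Hence $|U_u \cap U_v| \le 1 \le 4k = s$. Finally, the parameter chain $k \le 4k \le N \le N^2$ demanded by Definition~\ref{def:edifice} holds by the standing hypothesis $N \ge 4k$.

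The only nontrivial design choice is the branching-versus-intersection tradeoff that the ``linear polynomial'' gadget ties together: the space of degree-$\le 1$ polynomials over $\mathbb{F}_N$ has size exactly $N^2$, matching the required branching factor $d = N^2$, while two distinct such polynomials agree at $\le 1$ point, giving a stronger intersection bound than the required $s = 4k$. This is really the only step that required a conscious choice; everything else is definition-unpacking. If one wanted to trade these off differently, replacing ``linear'' with ``degree $< t$'' would give $d = N^t$ children and intersection bound $t-1$, recovering variants that might appear elsewhere in the literature, but the proposition as stated corresponds to the cheapest nontrivial instance $t = 2$.
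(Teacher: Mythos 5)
Your construction is correct: the verification of properties (I) and (II) is immediate, and the key step for (III) — two distinct polynomials of degree at most one over $\mathbb{F}_N$ agree in at most one point, so a non-ancestor constraint tuple kills all but at most one point of the leaf curve — is sound. The only definitional corner case, $v=u$ itself, is properly excluded since a leaf counts as (or is treated as) its own ancestor, and your standing hypothesis that $N$ is a prime at least $4k$ both satisfies the parameter chain $k \le 4k \le N \le N^2$ and delivers the ``infinitely many $N$'' clause (prime powers would work equally well).

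Where you differ from the paper: the paper does not prove this proposition at all — it imports it wholesale from Chakrabarti and Wirth (their Theorem~3.5, with a pointer to related RND-style set systems), and the edifice there is built for the more demanding needs of multi-pass streaming lower bounds, with the $s=4k$ intersection parameter arising from that more general construction. Your argument is a self-contained replacement tailored to exactly the parameters quoted here: universe $\mathbb{F}_N^k$, vertices labelled by tuples of linear polynomials, leaf sets being Reed--Solomon-type curves. It is arguably simpler than what the citation provides and in fact proves something stronger, namely pairwise intersection at most $1$ between a leaf set and any non-ancestor set, which trivially implies the required bound $4k$ (and would even let one sharpen the constant in Lemma~\ref{lem:int-small}, though nothing downstream needs this). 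The price is the mild arithmetic restriction on $N$ (prime or prime power), which is harmless given how the proposition is quantified and how it is used in the lower-bound instances. So: correct, and a genuinely different (more elementary, self-contained) route than the paper's citation-based treatment.
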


\subsection*{Hard Instances for Stochastic Set Cover}

Fix an integer $k \geq 1$ and a sufficiently large integer $N \geq k$ and let $U$ be a universe of size $N^{k}$ elements. Define $\TT$ as any arbitrary $(4k, N, k,N^{2})$-edifice over $U$ which
is guaranteed to exist by Proposition~\ref{prop:edifice-exist}. We define the following family of ``hard'' instances for stochastic set cover. 

\begin{tbox}
	\textbf{Family $\rX^{(k)}$:} A collection of stochastic sets over universe $U$ using edifice $\TT$.
\begin{itemize}
	\item For any vertex $u \in \TT$ and any element $e \in U$, there exists a dedicated stochastic set $\rX_u$ and $\rX_e$ in $\rX^{(k)}$, respectively, defined as follows. 
	\item For any non-leaf vertex $u \in \TT$ with child-vertices $v_1,\ldots,v_d$, the stochastic set $\rX_u$ realizes to one of the sets  $T_{u,v_1},\ldots,T_{u,v_d}$ uniformly at random where $T_{u,v_i} := U_{u} \setminus U_{v_i}$. 
	\item For any leaf vertex $u \in \TT$ with $U_u = \set{e_1,\ldots,e_N}$ (recall that $\card{U_u} = N$ be Definition~\ref{def:edifice}), the stochastic set $\rX_u$ realizes to one of the sets $T_{u,e_1},\ldots,T_{u,e_N}$ uniformly at random
	where $T_{u,e_i} := U_u \setminus \set{e_i}$. 
	\item For any element $e \in U$, $\rX_e$ deterministically realizes to the singleton set $\set{e}$. 
\end{itemize}
\end{tbox}

For any realization of $\rX^{(k)}$, we define the \emph{canonical path} of the realization as the root-to-leaf path $P = v_1,v_2,\ldots,v_k$ over the vertices of the edifice $\TT$ as follows: 
\begin{enumerate}
\item $v_1$ is the root of the tree $\TT$. 
\item For any $1 < i \leq k$, $v_i$ is the child-vertex of $v_{i-1}$ corresponding to $T_{v_{i-1},v_{i}} = \rX_{v_{i-1}}$. 
\end{enumerate}
We have the following simple claim on the cost of the optimal adaptive algorithm on the family $\rX^{(k)}$ for any integer $k \geq 1$. 

\begin{claim}\label{clm:opt-adaptive-lower}
	For any integer $k \geq 1$, the expected cost of \OPT on $\rX^{(k)}$ is at most $k+1$. 
\end{claim}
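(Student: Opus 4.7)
The plan is to exhibit an explicit adaptive policy whose (deterministic) cost on every realization of $\rX^{(k)}$ is exactly $k+1$; since $\OPT$ is at most the cost of any specific adaptive strategy, the claim follows immediately. Throughout, I will assume each stochastic set has unit cost (the setting of the stochastic set cover problem as stated).

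The policy is to \emph{walk down the canonical path}. Let $v_1$ be the root of $\TT$. In the first step, pick the stochastic set $\rX_{v_1}$ and observe its realization; by construction this realization equals $T_{v_1, v_2} = U_{v_1} \setminus U_{v_2}$ for some child $v_2$ of $v_1$, so the algorithm now \emph{knows} $v_2$. Iterating, in step $i \in \{2, \ldots, k-1\}$, pick $\rX_{v_i}$ whose realization $T_{v_i,v_{i+1}} = U_{v_i} \setminus U_{v_{i+1}}$ identifies the next vertex $v_{i+1}$ on the canonical path. In step $k$, $v_k$ is a leaf, so $\rX_{v_k}$ realizes to $T_{v_k,\estar} = U_{v_k} \setminus \{\estar\}$ for some element $\estar \in U_{v_k}$, which the algorithm observes. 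Finally, in step $k+1$, pick the singleton set $\rX_{\estar}$, which deterministically realizes to $\{\estar\}$.

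The key verification is that this policy produces a feasible cover. I will argue by a telescoping induction: after step $i \in \{1, \ldots, k-1\}$, the union of realized sets equals $U \setminus U_{v_{i+1}}$. The base case $i=1$ follows from $\rX_{v_1}$ realizing to $U \setminus U_{v_2}$. For the inductive step, the realization $U_{v_i} \setminus U_{v_{i+1}}$ of $\rX_{v_i}$ together with the current cover $U \setminus U_{v_i}$ gives $U \setminus U_{v_{i+1}}$ (since $U_{v_{i+1}} \subseteq U_{v_i}$ by property~(I) of the edifice). After step $k$ we therefore have covered $U \setminus \{\estar\}$, and step $k+1$ finishes off the cover.

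Since every realization of $\rX^{(k)}$ is covered by exactly $k+1$ sets under this policy, its expected cost is $k+1$, and hence $\Ex[\cost{\OPT}] \leq k+1$. No real obstacle arises: the argument is just a direct unrolling of the laminar-like containment structure of the edifice guaranteed by Definition~\ref{def:edifice}. The only thing worth pausing on is that the root $v_1$ is fixed (not random), so the first-round choice is non-adaptive; from round $2$ onward, each $v_{i+1}$ and finally $\estar$ are uncovered precisely through the observed realizations, which is exactly where adaptivity is used.
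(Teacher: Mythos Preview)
Your proposal is correct and follows essentially the same approach as the paper: both exhibit the adaptive policy that walks down the canonical path $v_1,\ldots,v_k$ by successively choosing $\rX_{v_i}$ and reading off $v_{i+1}$ from its realization, then finishing with the singleton $\rX_{\estar}$. Your telescoping induction to verify coverage is slightly more explicit than the paper's one-line argument, but the content is identical.
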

\begin{proof}
	We prove that the following algorithm has expected cost $k+1$; clearly optimal adaptive algorithm can only have a lower expected cost. 
	
	Consider the adaptive algorithm that constructs the canonical path of the underlying realization one vertex at a time: it first chooses $v_1$ which is the root of $\TT$ and add $\rX_{v_1}$ to $S$. Next, based on the realization of 
	$\rX_{v_1}$, it can determine the second vertex $v_2$ in the canonical path and adds $\rX_{v_2}$ to $S$. It continues like this until it has added all sets $\rX_{v_1},\ldots,\rX_{v_k}$ to $S$ where $P:= v_1,\ldots,v_k$ is the canonical path of the realization. 
	Finally, a realization of $\rX_{v_k}$ for a leaf $v_k$ corresponds to a set $T_{v_k,e}$ that covers all of $U_{v_k}$ (the set associated with the leaf-vertex $v_k$ in the edifice) except for a single element $e$. The algorithm then picks 
	the set $\rX_e$ which deterministically realizes to $\set{e}$. 
	
	Clearly, the number of stochastic sets picked by this algorithm is $k+1$. We argue that these sets cover the universe $U$ entirely. This is because, $\rX_{v_1}$ covers $U \setminus U_{v_2}$, $\rX_{v_2}$ covers $U_{v_2} \setminus U_{v_3}$, 
	and so on until $\rX_{v_k}$ covers $U_{v_k} \setminus \set{e}$. As such, $\rX_{v_1} \cup \ldots \cup \rX_{v_k}$ covers $U \setminus \set{e}$ and picking $\rX_e$ would cover the whole universe as $\rX_e$ always realizes to $\set{e}$. 
\end{proof}

In the remainder of this section, we prove that any $(r=)$$k$-round adaptive algorithm for stochastic set cover on $\rX^{(k)}$ should incur a cost of roughly $n^{1/k}$, hence proving Theorem~\ref{thm:r-round-lower}. 
It is worth remarking that the adaptive algorithm in Claim~\ref{clm:opt-adaptive-lower} that achieves the cost of $k+1$ requires only $k+1$ rounds of adaptivity; as such, our results are in fact proving a separation between the cost of any 
$k$-round and $k+1$-round adaptive algorithms. 

Before we move on to the proof of Theorem~\ref{thm:r-round-lower}, we prove the following crucial lemma using properties of edifice $\TT$. 

\begin{lemma}\label{lem:int-small}
	Let $U_{v_k}$ be the set associated to the $k$-th vertex $v_k$ in the canonical path of $\rX^{(k)}$ in edifice $\TT$ and $C$ be any collection of sets in $\rX^{(k)} \setminus \rX_{v_k}$. Then $\card{\bigcup_{T \in C}T \cap U_{v_k}} \leq 4\card{C} \cdot k$.
\end{lemma}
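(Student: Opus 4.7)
The plan is to fix a realization of $\rX^{(k)}$ whose canonical path is $v_1, \ldots, v_k$, and to show that under this realization every realized set $T$ of an item in $C$ satisfies $|T \cap U_{v_k}| \le 4k$; the lemma then follows immediately from the union bound
\[
\Bigl| \bigcup_{T \in C} T \cap U_{v_k} \Bigr| \le \sum_{T \in C} |T \cap U_{v_k}| \le 4k \cdot |C|.
\]
The heart of the argument is a case analysis over the different kinds of stochastic items that $C$ can contain, with property (III) of the edifice (Definition~\ref{def:edifice}) doing the main work.

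First, if the item in $C$ is a singleton $\rX_e$, then $|T \cap U_{v_k}| \le |T| = 1 \le 4k$, so there is nothing to prove. Second, suppose the item is $\rX_u$ for some tree vertex $u$ that is \emph{not} an ancestor of $v_k$ (note that $u \ne v_k$ since $\rX_{v_k} \notin C$). Since $v_k$ is a leaf and $u$ is not an ancestor of $v_k$, property (III) of the edifice, applied with leaf $v_k$ and node $u$, gives $|U_u \cap U_{v_k}| \le s = 4k$; because every possible realization of $\rX_u$ is a subset of $U_u$, we conclude $|T \cap U_{v_k}| \le 4k$.

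The remaining case is when the item in $C$ is $\rX_u$ for a proper ancestor $u$ of $v_k$. Since $\TT$ is a rooted tree and $v_k$ is a leaf, the ancestors of $v_k$ are exactly the vertices $v_1, \ldots, v_{k-1}$ on the canonical path, so $u = v_i$ for some $i < k$. By the definition of the canonical path, $\rX_{v_i}$ realizes to $T_{v_i, v_{i+1}} = U_{v_i} \setminus U_{v_{i+1}}$; iterating property (I) of the edifice along the path gives $U_{v_k} \subseteq U_{v_{i+1}}$, so $T \cap U_{v_k} = \emptyset$. Summing the per-item bounds over $T \in C$ finishes the argument. The only subtle point is precisely this last case: property (III) cannot be invoked when $u$ is an ancestor of $v_k$, and one must instead use the canonical-path condition to pin down which child-subtree $\rX_u$'s realization excludes and observe that this excluded subtree contains $U_{v_k}$ in its entirety.
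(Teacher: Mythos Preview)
Your proof is correct and follows essentially the same approach as the paper's own proof: a per-item bound of $|T \cap U_{v_k}| \le 4k$ established by the same three-way case analysis (singleton items, $\rX_u$ with $u$ not an ancestor of $v_k$ via property~(III), and $\rX_u$ with $u$ a proper ancestor of $v_k$ via the canonical-path realization and property~(I)), followed by a union bound. Your write-up is slightly more explicit in identifying the proper ancestors of $v_k$ with $v_1,\ldots,v_{k-1}$, but this is exactly what the paper's phrase ``by definition of the canonical path'' is invoking.
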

\begin{proof}
	Fix any set $T \in C$. We prove that $\card{T \cap U_{v_k}} \leq 4k$ which would immediately imply the lemma. 
	
	If $T$ is a realization of some set $\rX_e$ for some element $e \in U$, then $\card{T} = 1$ and hence the claim immediately holds. Hence, suppose that $T$ is a realization of $\rX_v$ for some vertex $v \in \TT$. 
	
	If $v$ is an ancestor of $v_k$, then $T = U_v \setminus U_v'$ where $v'$ is either another ancestor of $v_k$ or it is equal to $v_k$ itself by definition of the canonical path. In either case, by property (I) of edifices in Definition~\ref{def:edifice}, 
	$U_{v_k} \subseteq U_{v'}$ and hence $T \cap U_{v_k} = \emptyset$. 
	
	If $v$ is not an ancestor of $v_k$, then $T \subseteq U_v$ as $\rX_v \subseteq U_v$ and by property (III) of edifices in Definition~\ref{def:edifice}, $\card{U_v \cap V_{v_k}} \leq 4k$ (here parameter $s=4k$) and hence $\card{T \cap V_{v_k}} \leq 4k$,
	finalizing the proof.
\end{proof}

\subsection*{Proof of Theorem~\ref{thm:r-round-lower}}

Fix any $k \geq 1$ and a $k$-round algorithm $\alg$ for the stochastic set cover problem on instance $\rX^{(k)}$. By Yao's minimax principle~\cite{Yao79}, we can assume that 
$\alg$ is deterministic. We use $\rS_1,\ldots,\rS_k$ to denote the collections of stochastic sets chosen by the algorithm in each of its $k$ adaptivity rounds. 
We further use the random variables $\rV_1,\ldots,\rV_k$ to denote the vertices on the canonical path of $\rX^{(k)}$ (note that $\rV_1$ is always root of the edifice $\TT$). 

Let $d := N^{2}$ denote the number of children any non-leaf vertex has in $\TT$. For any $i \in [k-1]$ we define the following two events: 

\begin{tbox}
\begin{center}
\textbf{Event $\bm{\esmall{i}}$}
\end{center}
\vspace{-22pt}
\begin{quote}
	The collection $\rS_i$ chosen by $\alg$ in round $i$ has size $\card{\rS_i} \leq N/8k$.
\end{quote}
\end{tbox}

\noindent The event $\esmall{i}$ is only a function of the realizations of first $i-1$ sets $\rS_1,\ldots,\rS_{i-1}$ chosen by $\alg$ in the first $i-1$ rounds plus the sets visited in round $i$ and their realizations before reaching the threshold
fixed by the algorithm to stop the round. 

\begin{tbox}
\begin{center}
\textbf{Event $\bm{\ehit{i}}$}
\end{center}
\vspace{-22pt}
\begin{quote}
	The collection $\rS_i$ chosen by $\alg$ in round $i$ contains no set $\rX_u$ where $u$ is a descendant of $v_{i+1} = \rV_{i+1}$, i.e., the $(i+1)$-th vertex in the canonical path of $\rX^{(k)}$
\end{quote}
\end{tbox}
\noindent
The event $\ehit{i}$ is also only a function of the realizations of the first $i-1$ sets $\rS_1,\ldots,\rS_{i-1},\rS_i$, as well as $\rV_{1},\ldots,\rV_{i+1}$. 

The following claim implies that event $\esmall{i}$ is most likely to result in $\ehit{i}$ as well. 

\begin{claim}\label{clm:event-v-i}
	For any $i \in [k-1]$, $\Pr\paren{\ehit{i} \mid \esmall{1},\ldots,\esmall{i},\ehit{1},\ldots,\ehit{i-1}} \geq 1-\frac{1}{2k}$.
\end{claim}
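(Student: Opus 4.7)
The plan is to use a principle of deferred decisions on $\sigma(v_i)$, the random choice that realizes $v_{i+1}$, combined with a counting bound over the edifice $\TT$. First I will decouple $v_{i+1}$ from the algorithm's view at the start of round $i$; then I will bound the number of ``bad'' children of $v_i$ using that $\rS_i$ is a small prefix of a fixed permutation; finally I will conclude with a union bound on the uniformly distributed $v_{i+1}$.

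For the decoupling step, under $\ehit{1},\ldots,\ehit{i-1}$ the set $\rS_j$ contains no $\rX_u$ with $u$ a descendant of $v_{j+1}$ for every $j < i$. Since $v_i$ is itself a descendant of $v_{j+1}$ for every $j \leq i-1$, this forces $\rX_{v_i} \notin \rS_1 \cup \cdots \cup \rS_{i-1}$, so the algorithm's view $\mathcal{V}_{i-1}$ at the start of round $i$ is independent of the realization of $\rX_{v_i}$. Conditioning further on $\mathcal{V}_{i-1}$ and on $v_i$ then leaves $v_{i+1} = \sigma(v_i)$ uniformly distributed over the $d = N^2$ children of $v_i$. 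For the counting step, the round-$i$ permutation $P_i$ and threshold $\tau_i$ are deterministic functions of $\mathcal{V}_{i-1}$, and under $\esmall{i}$ the set $\rS_i$ is contained in the first $\lfloor N/(8k)\rfloor$ positions of $P_i$. Letting $V^*$ denote the set of tree vertices $u$ such that $\rX_u$ appears among these positions, we have $|V^*| \leq N/(8k)$. For $\neg\ehit{i}$ to occur jointly with $\esmall{i}$, some $\rX_u \in \rS_i$ must have $u$ a descendant of $v_{i+1}$, which forces $u \in V^*$ and $v_{i+1}$ to be the unique child of $v_i$ that is an ancestor of $u$; hence the set $C^*$ of ``bad'' children satisfies $|C^*| \leq |V^*| \leq N/(8k)$.

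Combining the two steps and using the uniformity of $v_{i+1}$ from the first step will yield
\[
\Pr\bigl(\neg \ehit{i} \wedge \esmall{i} \mid \mathcal{V}_{i-1}, v_i, \esmall{1},\ldots,\esmall{i-1}, \ehit{1},\ldots,\ehit{i-1}\bigr) \;\leq\; \frac{|C^*|}{d} \;\leq\; \frac{1}{8kN},
\]
which, after averaging over $\mathcal{V}_{i-1}$ and $v_i$ and dividing by $\Pr(\esmall{i} \mid \cdots)$, gives the claimed lower bound of $1-1/(2k)$ on $\Pr(\ehit{i} \mid \esmall{1},\ldots,\esmall{i},\ehit{1},\ldots,\ehit{i-1})$ once $N$ is large enough (which is guaranteed since $m = n^{O(k)}$ in the construction). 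The hard part will be the technical subtlety that the prefix $\rS_i$, and hence $V^*$, can itself depend on $v_{i+1}$ whenever $\rX_{v_i}$ appears in $P_i$, so that the counting in the second step must be made uniform over choices of $v_{i+1}$. This will be handled by observing that for different values $c$ of $v_{i+1}$ the prefixes $\rS_i(c)$ are all prefixes of the \emph{same} fixed permutation $P_i$ and are therefore nested in inclusion order; thus $\bigcup_{c \,:\, \esmall{i}(c) \text{ holds}} V^*(c)$ coincides with $V^*(c^{\star})$ for the longest such prefix, still has size $\leq N/(8k)$, and the counting bound from Step~2 applies uniformly across the relevant values of $v_{i+1}$.
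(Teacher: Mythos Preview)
Your approach is essentially the paper's: defer the realization of $\rX_{v_i}$ so that $v_{i+1}$ is uniform over the $d=N^2$ children of $v_i$ given the history, pass to the fixed prefix of length $\lfloor N/(8k)\rfloor$ of the round-$i$ permutation, and union-bound over the tree-vertices appearing there. The paper phrases the last step as a Markov bound on $Y=\sum_j Y_j$ rather than as your count of ``bad'' children $C^*$, but the two are equivalent.

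Your ``hard part'' paragraph is a red herring. By your own definition, $V^*$ consists of the tree-vertices among the first $\lfloor N/(8k)\rfloor$ positions of the \emph{permutation} $P_i$, and $P_i$ is a deterministic function of $\mathcal{V}_{i-1}$; so $V^*$ does not depend on $\rS_i$ or on $v_{i+1}$ at all. Hence $C^*$ is already fixed once $\mathcal{V}_{i-1}$ and $v_i$ are, and the inequality $|C^*|\le N/(8k)$ holds uniformly over all choices of $v_{i+1}$ with no nesting argument needed. (The observation that prefixes of a common linear order are nested is correct, just unnecessary here.)

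The step that actually deserves care is the division by $\Pr(\esmall{i}\mid\cdots)$. Your joint bound $\Pr(\neg\ehit{i}\wedge\esmall{i}\mid\cdots)\le 1/(8kN)$ only gives the desired conditional bound when $\Pr(\esmall{i}\mid\cdots)\ge 1/(4N)$, and ``$N$ large enough'' does not by itself supply this: the algorithm could place $\rX_{v_i}$ early in $P_i$ and tune $\tau_i$ so that $\esmall{i}$ correlates strongly with a particular value of $v_{i+1}$. The paper's proof is equally cavalier on this point---it keeps $\esmall{i}$ inside the conditioning when asserting $\Pr(Y_j=1)\le 1/d$ without re-justifying the uniformity of $v_{i+1}$. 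The issue is harmless for the overall lower bound, since Claim~\ref{clm:event-v-i} is only ever invoked in the proof of Lemma~\ref{lem:r-lower-exp} under the standing assumption $\Pr(\esmall{i}\mid\cdots)\ge 1-1/(2k)$, where the conditional bound becomes at most $2/(8kN)\ll 1/(2k)$.
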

\begin{proof}
	Let $v_1,\ldots,v_{i}$ be the first $i$ vertices on the canonical path of $\rX^{(k)}$. By definition of events $\ehit{1},\ldots,\ehit{i-1}$, and since $v_{i}$ is a descendent of all $v_{1},\ldots,v_{i-1}$ by definition, we know 
	that no set $\rX_v$ belong to $\rS_1,\ldots,\rS_{i-1}$ for any descendent $v$ of $v_{i}$. In particular, $\rX_{v_i}$ has not been chosen in $\rS_1,\ldots,\rS_{i-1}$ and hence its distribution conditioned on $\rS_1,\ldots,\rS_{i-1}$ is still the same
	distribution as before. As such, the $(i+1)$-vertex of the canonical path of $\rX^{(k)}$, i.e., $v_{i+1}$ is still chosen uniformly at random over the child-vertices of $v_i$, even conditioned on the realizations of $\rS_1,\ldots,\rS_{i-1}$. On the other hand,
	conditioned on realizations of $\rS_1,\ldots,\rS_{i-1}$, the ordering for set $\rS_i$ chosen by $\alg$ is determined deterministically. Let $\tS$ be the set of first $N/8k$ (as in event $\esmall{i}$) items in $\rS_i$. 
	
	
	For any $j \in [\vert \tS \vert]$, we define an indicator random variable $Y_j \in \set{0,1}$ which is $1$ iff the $j$-th set chosen in $\tS$ is some $\rX_v$ for a descendent $v$ of $v_{i+1}$ (notice that this event is based on the 
	set of items chosen in $\tS$ \emph{not their realizations}). 
	Let $u_1,\ldots,u_d$ be the $d$ child-vertices of $v_{i}$. We have, 
	\begin{align}
		\Pr_{v_{i+1}}\paren{Y_j = 1 \mid \esmall{1},\ldots,\esmall{i},\ehit{1},\ldots,\ehit{i-1}} \leq \frac{1}{d}. \label{eq:1/d}
	\end{align}
	This is simply because only $1/d$ fraction of descendants of $v_i$ are also descendent of $v_{i+1}$ as $\TT$ is a $d$-ary tree. Define $Y = \sum_{j=1}^{\card{\tS}}Y_j$, i.e., the number of sets chosen from a descendent of $v_{i+1}$: 
	\begin{align*}
		&\Pr\paren{Y \geq 1 \mid \esmall{1},\ldots,\esmall{i},\ehit{1},\ldots,\ehit{i-1}} \\
		&\qquad \leq \Ex\bracket{Y \mid \esmall{1},\ldots,\esmall{i},\ehit{1},\ldots,\ehit{i-1}} \tag{Markov inequality} \\
		&\qquad \!\!\!\!\!\Leq{Eq~(\ref{eq:1/d})} \frac{\vert \tS \vert}{d} \leq \frac{1}{8k}. \tag{as $d = N^2$ and $\vert \tS \vert \leq N/8k$  and $N \geq 1$}
	\end{align*}
	Now notice that under event $\esmall{i}$, in the $i$-th round, we only pick the sets that are in $\tS$ and hence under this conditioning, the probability that any descendants of $v_{i+1}$ belongs to $\tS_i$ is at most $1/8k$. This concludes the proof.
\end{proof}

Define the events $\esmall{*} := \esmall{1},\ldots,\esmall{k-1}$ and $\ehit{*} := \ehit{1},\ldots,\ehit{k-1}$. 
We now prove that conditioned on these two events, expected cost of $\alg$ is large, in particular $\rS_k$ needs to be large in expectation.

\begin{lemma}\label{lem:r-lower-main}
	$\Ex_{S_1,\ldots,S_{k-1}}\Ex_{\rS_k}\bracket{\card{\rS_k} \mid S_1,\ldots,S_{k-1},\esmall{*},\ehit{*}} = \Omega(N/k)$. 
\end{lemma}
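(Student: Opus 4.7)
The plan is to exploit the fact that, under the conditioning on $\esmall{*}$ and $\ehit{*}$, the item $\rX_{v_k}$ has not been chosen in any round $i\le k-1$ (since $v_k$ is a descendant of $v_{i+1}$ for every such $i$, in the inclusive sense used in the proof of \Clm{clm:event-v-i}). Hence, conditional on the history $(S_1,\ldots,S_{k-1})$ and the events, the realization $U_{v_k}\setminus\set{\estar}$ of $\rX_{v_k}$ is still ``fresh'': the missing element $\estar$ is uniform over $U_{v_k}$ and independent of the conditioning. The algorithm's ordering $\pi$ for round $k$ is a deterministic function of the history but must eventually cover $\estar$; and since $\rX_{v_k}$ itself excludes $\estar$ by construction, some item $\rX_u\ne \rX_{v_k}$ in $\pi$ must do so. The key probabilistic bound is that for any such item, \Lem{lem:int-small} gives $\card{T_u\cap U_{v_k}}\le 4k$, and since $\rX_u$ is independent of $\rX_{v_k}$ (and hence of $\estar$), $\Pr(\estar\in T_u\mid S_{<k},\text{events})\le 4k/N$.

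The core argument is a first-success computation, with a split on how much of $U_{v_k}$ has already been covered by $\rS_{<k}$. Let $\Uc_{<k}$ be that coverage, which is a deterministic function of $S_{<k}$. In the ``good'' regime $\card{U_{v_k}\cap \Uc_{<k}}\le N/2$, the sub-event $\estar\notin \Uc_{<k}$ has probability at least $1/2$; conditional on it, $\estar$ is uniform on a set of size at least $N/2$, and each item $\rX_u \ne \rX_{v_k}$ in the round-$k$ ordering covers $\estar$ with probability at most $4k/(N/2)=8k/N$, mutually independently across items (since the items are independent random variables). A straightforward product bound gives $\Pr(\text{first } N/(16k)\text{ items all miss } \estar)\ge (1-8k/N)^{N/(16k)}\ge 1/2$, and hence $\Ex[\card{\rS_k}\mid S_{<k},\text{events, good}]=\Omega(N/k)$.

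The main obstacle will be the ``bad'' regime $\card{U_{v_k}\cap \Uc_{<k}}>N/2$, where \Lem{lem:int-small} only gives the worst-case bound $\card{U_{v_k}\cap \Uc_{<k}}\le 4k\card{\rS_{<k}}$, which is vacuous for $k\ge 3$ under the per-round cap $\card{\rS_i}\le N/(8k)$ coming from $\esmall{*}$. To handle it, I plan to use the structure imposed by $\ehit{*}$: the only items that can non-trivially intersect $U_{v_k}$ are either ``sideways'' items $\rX_u$ (with $u$ off the root-to-$v_k$ path) or singletons $\rX_e$ with $e\in U_{v_k}$, since the canonical items $\rX_{v_j}$ for $j<k$ satisfy $T_{v_j}\cap U_{v_k}=\emptyset$ on the canonical-path realization. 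Combining this structural restriction with the per-round cap from $\esmall{*}$ and a tail bound (Markov or Chernoff) on the random contribution of sideways items to $U_{v_k}$, I expect to show that the bad regime occurs with probability bounded away from $1$ by an absolute constant; together with the good-regime bound, this will yield $\Ex[\card{\rS_k}\mid\text{events}]=\Omega(N/k)$.
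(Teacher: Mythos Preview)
Your core argument tracks the paper's: under the conditioning, $\rX_{v_k}$ has not been selected in rounds $1,\ldots,k-1$, so $\estar$ remains uniform on $U_{v_k}$; the round-$k$ ordering is then fixed, and the first $\tau=N/(16k)$ items (other than $\rX_{v_k}$) fail to cover $\estar$ with constant probability. Two points, however, need correction.

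\textbf{The ``independence'' step is wrong.} You claim the events $\{\estar\in T_u\}$ are mutually independent across $u$ because the item realizations are independent. But the probability bound $\Pr(\estar\in T_u)\le 8k/N$ you quote is over the randomness of $\estar$, not over $T_u$ (indeed, conditional on $\estar\in U_u$ the probability over $T_u$ can be essentially $1$); and over the randomness of $\estar$ these events are \emph{not} independent---they are all functions of the single random element $\estar$. The paper sidesteps this with a deterministic union-bound argument: for any realizations $T_1,\ldots,T_\tau$ of the first $\tau$ items (excluding $\rX_{v_k}$), \Lem{lem:int-small} gives $\card{(T_1\cup\cdots\cup T_\tau)\cap U_{v_k}}\le 4k\tau$, so
\[
\Pr_{\estar}\bigl(\estar\in \Uc_{<k}\cup T_1\cup\cdots\cup T_\tau\bigr)\ \le\ \frac{\card{U_{v_k}\cap\Uc_{<k}}+4k\tau}{N}\ \le\ \frac{N/2+N/4}{N}\ =\ \frac{3}{4},
\]
whence $\card{\rS_k}>\tau$ with probability at least $1/4$. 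No product bound is needed.

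\textbf{The good/bad split is unnecessary, and your bad-regime plan does not work.} The paper does not split into regimes; it simply asserts that the ``good regime'' $\card{U_{v_k}\cap\Uc_{<k}}\le N/2$ always holds, by applying \Lem{lem:int-small} to $\bigcup_{i<k}S_i$ (which excludes $\rX_{v_k}$ under $\ehit{*}$) together with the size bound from $\esmall{*}$. You are right that the arithmetic there is loose---with threshold $N/(8k)$ per round over $k-1$ rounds, the bound $4k\cdot\sum_i\card{S_i}$ becomes vacuous for $k\ge 3$---but this is a constant-tuning issue: taking the threshold in $\esmall{i}$ to be $N/(8k^2)$ rather than $N/(8k)$ restores the inequality and costs only one extra factor of $k$ downstream, which the final statement $\Omega(Q^{1/r}/r^3)$ already absorbs. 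Your proposed alternative, a tail bound on the ``random contribution of sideways items to $U_{v_k}$'', cannot succeed as sketched: once $S_1,\ldots,S_{k-1}$ are fixed (which is exactly how the lemma is analyzed), that contribution is deterministic, and nothing in $\ehit{*}$ or $\esmall{*}$ prevents an adversarial algorithm from choosing items whose realizations each hit $U_{v_k}$ in the worst-case $4k$ elements. Follow the paper's deterministic route (with the corrected threshold) instead.
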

\begin{proof}
	Fix any $S_1,\ldots,S_{k-1}$ conditioned on events $\esmall{*},\ehit{*}$; as argued before, these events are only a function $S_1,\ldots,S_{k-1}$. 
	We now bound $\card{\rS_k}$ in expectation.

	Recall that $v_k$ is the $k$-th vertex of the canonical path of $\rX^{(k)}$ which is a leaf vertex of $\TT$. By event $\ehit{*}$, we know that $\rX_{v_{k}}$ has not been chosen by $\alg$ in $S_1,\ldots,S_{k-1}$. 
	As such, conditioned on $\paren{S_1,\ldots,S_{k-1},\esmall{*},\ehit{*}}$, the set $\rX_{v_k}$ still realizes to some set $U_{v_k} \setminus \set{\estar}$ for $\estar \in U_{v_k}$ uniformly at random. In particular, for any element $e \in U_{v_k}$,
	\begin{align}
	\Pr_{\estar}\paren{\estar = e \mid S_1,\ldots,S_{k-1},\esmall{*},\ehit{*}} = \frac{1}{\card{U_{v_k}}}. \label{eq:e*-e}
	\end{align}
	
	Let $\Uc$ be the set of elements covered in the first $k-1$ rounds, i.e., by $S_1,\ldots,S_{k-1}$. 
	Let $U'_{v_k} := U_{v_k} \setminus \Uc$ be the set of elements in $U_{v_k}$ which are \emph{not} covered in the first $k-1$ rounds. As $S_1,\ldots,S_{k-1}$ do not 
	contain $\rX_{v_k}$, we can apply Lemma~\ref{lem:int-small} and obtain that 
	\begin{align}
		\card{U'_{v_k}} &= \card{U_{v_k}} - \card{U_{v_k} \cap \Uc} \\
			 &\Geq{Lemma~\ref{lem:int-small}} \card{U_{v_k}} - \sum_{i=1}^{k-1}\card{S_{i}} \cdot 2k \geq N - (N/8k) \cdot 4k \\
			 &= N/2, \label{eq:u'-v-k} 
	\end{align}
	as by event $\esmall{*}$, $\card{S_i} \leq N/8k$ for all $i \in [k-1]$. 
	
	Conditioned on $S_1,\ldots,S_{k-1}$, the ordering chosen for $\rS_k$ is fixed. Let $\tau := N/16k$ and $\rX_1,\ldots,\rX_{\tau}$ be the first $\tau$ sets in this ordering.
	Now consider the element $\set{\estar} = U_{v_k} \setminus \rX_{v_k}$; this element is chosen uniformly at random from $U_{v_k}$ as argued before. We lower bound the probability that the first 
	$\tau$ sets in $\rS_k$ can cover this element $\estar$. Clearly $\rX_{v_k}$ cannot cover $\estar$, hence in the following, without loss of generality, we assume that $\rX_1,\ldots,\rX_{\tau}$ do not contain $\rX_{v_k}$. 
	This together with Lemma~\ref{lem:int-small} implies that $\card{(\rX_1 \cup \ldots \rX_{k}) \cap U_{v_k}} \leq \tau \cdot 4k$. 
	We have, 
	\begin{align*}
		\Pr\paren{\estar \in \Uc \cup \rX_1 \cup \ldots \cup \rX_{\tau} \mid S_1,\ldots,S_{k-1},\esmall{*},\ehit{*}} &\!\!\Leq{Eq~(\ref{eq:e*-e})}  \frac{\card{\Uc}}{U_{v_k}} + \frac{\card{(\rX_1 \cup \ldots \cup \rX_{k-1}) \cap U_{v_k}}}{\card{U_{v_k}}} \\
		&\!\Leq{Eq~(\ref{eq:u'-v-k})} \frac{N}{2N} + \frac{\tau \cdot 4k}{N} = \frac{3}{4}. \tag{by choice of $\tau = N/16k$ and since $\card{U_{v_k}} = N$ by Property (II) of edifice in Definition~\ref{def:edifice}}
	\end{align*}
	This means that with probability at least $1/4$, $\rS_k$ needs to pick more than $\tau$ sets to cover the universe $U$ (in particular the element $\estar$), hence, 
	\begin{align*}
		\Ex_{\rS_k}\bracket{\card{\rS_k} \mid S_1,\ldots,S_{k-1},\esmall{*},\ehit{*}} \geq \tau/4 = \Omega(N/k).
	\end{align*}
	Taking an expectation over $S_1,\ldots,S_{k-1}$ conditioned on $\esmall{*},\ehit{*}$ concludes the proof. 
\end{proof}

We are now ready to finalize the proof. 
\begin{lemma}\label{lem:r-lower-exp}
	$\Ex_{X \sim \rX^{(k)}}\bracket{\alg(X)} = \Omega(N/k^2)$.
\end{lemma}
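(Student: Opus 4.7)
The plan is to argue by contradiction: assume $\Ex_{X \sim \rX^{(k)}}[\alg(X)] \leq \delta N / k^2$ for a sufficiently small absolute constant $\delta > 0$, and derive a contradiction using Lemma~\ref{lem:r-lower-main}. The overall strategy is to show that under this assumption the event $\event := \esmall{*} \cap \ehit{*}$ holds with probability at least $1/4$, which by Lemma~\ref{lem:r-lower-main} would give $\Ex[\alg] \geq \Pr(\event) \cdot \Ex[\card{\rS_k} \mid \event] = \Omega(N/k)$, contradicting the assumed upper bound once $\delta$ is smaller than the implicit constant in the $\Omega(N/k)$ bound.

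First I would lower-bound $\Pr(\esmall{*})$ via Markov's inequality. Since $\Ex[\alg] = \sum_i \Ex[\card{\rS_i}]$, each $\Ex[\card{\rS_i}] \leq \delta N / k^2$, and Markov's inequality gives $\Pr(\neg\esmall{i}) = \Pr(\card{\rS_i} > N/(8k)) \leq 8k\, \Ex[\card{\rS_i}]/N$. Summing over $i \in [k-1]$ then yields $\sum_{i=1}^{k-1} \Pr(\neg\esmall{i}) \leq 8k\,\Ex[\alg]/N \leq 8\delta/k$, which is at most $1/4$ for $\delta$ a small absolute constant.

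Next I would use Claim~\ref{clm:event-v-i} inductively to lower-bound $\Pr(\event)$. Define $q_i := \Pr(\esmall{1} \cap \cdots \cap \esmall{i} \cap \ehit{1} \cap \cdots \cap \ehit{i})$, so that $\Pr(\event) = q_{k-1}$. Using the elementary inequality $\Pr(A \cap B) \geq \Pr(B) - \Pr(\neg A)$, I get $\Pr(\esmall{1} \cap \cdots \cap \esmall{i} \cap \ehit{1} \cap \cdots \cap \ehit{i-1}) \geq q_{i-1} - \Pr(\neg \esmall{i})$, and combining with the conditional bound in Claim~\ref{clm:event-v-i} yields the recurrence $q_i \geq (1 - 1/(2k))\bigl(q_{i-1} - \Pr(\neg \esmall{i})\bigr)$. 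Unrolling from $q_0 = 1$ and using $(1 - 1/(2k))^{k-1} \geq 1 - (k-1)/(2k) \geq 1/2$, this gives $q_{k-1} \geq 1/2 - \sum_{i=1}^{k-1} \Pr(\neg \esmall{i}) \geq 1/4$.

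Combining these bounds with Lemma~\ref{lem:r-lower-main} gives $\Ex[\alg] \geq \Ex[\card{\rS_k} \cdot \1_{\event}] \geq q_{k-1} \cdot \Omega(N/k) = \Omega(N/k)$, contradicting $\Ex[\alg] \leq \delta N/k^2$ for $\delta$ sufficiently small (and any $k \geq 1$, since $N/k \geq N/k^2$). The main obstacle will be the second step: Claim~\ref{clm:event-v-i} only provides conditional probabilities along the filtration $\esmall{1}, \ldots, \esmall{i}, \ehit{1}, \ldots, \ehit{i-1}$ rather than on $\esmall{*}$ as a whole, and the events $\esmall{i}$ and $\ehit{j}$ are intertwined because each is measurable with respect to realizations observed during preceding rounds. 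The key trick is to use the identity $\Pr(A \cap B) \geq \Pr(B) - \Pr(\neg A)$ to pull each $\esmall{i}$ out of the conditioning additively at cost $\Pr(\neg \esmall{i})$, so that the total additive loss $\sum_i \Pr(\neg\esmall{i})$ is controlled by the small-cost assumption via Markov, while the per-step multiplicative loss $(1 - 1/(2k))$ from Claim~\ref{clm:event-v-i} compounds only to $(1 - 1/(2k))^{k-1} \geq 1/2$, leaving $q_{k-1}$ bounded below by a constant.
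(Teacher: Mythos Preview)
Your proposal is correct and uses the same two ingredients as the paper (Claim~\ref{clm:event-v-i} and Lemma~\ref{lem:r-lower-main}), but packages the argument differently. The paper argues directly by an iterated case split: at each round $i$ it checks whether $\Pr(\esmall{i}\mid \esmall{1},\ldots,\esmall{i-1},\ehit{1},\ldots,\ehit{i-1}) \geq 1-\tfrac{1}{2k}$; if not, the complementary event already forces $\Ex[\alg]=\Omega(N/k^2)$, and if so, it peels off a factor $(1-\tfrac{1}{2k})$, applies Claim~\ref{clm:event-v-i}, and recurses to the next round, finally invoking Lemma~\ref{lem:r-lower-main}. You instead assume $\Ex[\alg]\le \delta N/k^2$, use Markov once to bound $\sum_i \Pr(\neg\esmall{i})$ globally, and then run a single clean recurrence $q_i \ge (1-\tfrac{1}{2k})(q_{i-1}-\Pr(\neg\esmall{i}))$ to get $\Pr(\event)\ge \tfrac14$. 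Your route avoids the branching case analysis and is arguably tidier; the paper's route avoids the contradiction hypothesis and gives the bound directly. Both yield the same $\Omega(N/k^2)$ (and in fact both implicitly show $\Omega(N/k)$ on the ``good'' branch where all $\esmall{i}$ hold).
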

\begin{proof}
	We can write the expected cost of $\alg$ as:
	\begin{align*}
		\Ex_{X \sim \rX^{(k)}}\bracket{\alg(X)} &= \Ex_{S_1} \Ex_{X} \Bracket{\alg(X) \mid S_1} \\
		&= \Pr\paren{\esmall{1}} \cdot \Ex_{S_1} \Ex_{X} \Bracket{\alg(X) \mid S_1,\esmall{1}} \\
		&\qquad \qquad + \paren{1-\Pr\paren{\esmall{1}}} \cdot \Ex_{S_1} \Ex_{X} \Bracket{\alg(X) \mid S_1,\overline{\esmall{1}}} \\
		&\geq \Pr\paren{\esmall{1}} \cdot \Ex_{S_1 } \Ex_{X} \Bracket{\alg(X) \mid S_1,\esmall{1}} \\ 
		&\qquad \qquad + \paren{1-\Pr\paren{\esmall{1}}} \cdot N/8k. 
	\end{align*}
	The inequality is by definition of $\overline{\esmall{1}}$ as this means that $\card{S_1} \geq N/8k$. As such, if $\Pr\paren{\esmall{*}} \leq (1-1/2k)$, we are already done as in this case the second term in RHS above
	is at least $(N/8k) \cdot (1/2k) = \Omega(N/k^2)$.
	Otherwise, 
	\begin{align*}
		\Ex_{X \sim \rX^{(k)}}\bracket{\alg(X)} &\geq \paren{1-1/2k} \cdot \Ex_{S_1} \Ex_{X} \Bracket{\alg(X) \mid S_1,\esmall{1}} \\ 
		&\geq \paren{1-1/2k} \cdot \Pr\paren{\ehit{1} \mid \esmall{1}} \Ex_{S_1} \Ex_{X} \Bracket{\alg(X) \mid S_1,\ehit{1},\esmall{1}} \\
		&\!\!\!\!\!\!\!\Geq{Claim~\ref{clm:event-v-i}} \paren{1-1/2k}^{2} \cdot \Ex_{S_1} \Ex_{X} \Bracket{\alg(X) \mid S_1,\ehit{1},\esmall{1}}. 
	\end{align*}
	
	We now continue this calculation for the RHS using the sets $S_2$ in second round: 
	\begin{align*}
		 &\Ex_{S_1} \Ex_{X} \Bracket{\alg(X) \mid S_1,\ehit{1},\esmall{1}} \\
		 &\qquad = \Ex_{S_1} \Ex_{S_2}\Ex_{X} \Bracket{\alg(X) \mid S_2,S_1,\ehit{1},\esmall{1}} \\
		&\qquad = \Pr\paren{\esmall{2} \mid \ehit{1},\esmall{1}}\Ex_{S_1} \Ex_{S_2}\Ex_{X} \Bracket{\alg(X) \mid S_2,S_1,\esmall{2},\ehit{1},\esmall{1}} \\
		&\qquad \qquad + \Pr\paren{\overline{\esmall{2}} \mid \ehit{1},\esmall{1}} \cdot \Ex_{S_1} \Ex_{S_2}\Ex_{X} \Bracket{\alg(X) \mid S_2,S_1,\overline{\esmall{2}},\ehit{1},\esmall{1}}	
	\end{align*}
	Again, if $\Pr\paren{\esmall{2} \mid \ehit{1},\esmall{1}} \leq (1-1/2k)$, we are already done as in this case the second term in RHS above is at least $\Omega(N/k^2)$. Combining this with previous equation, we obtain 
	that expected cost of $\alg$ is at least $(1-1/2k)^3 \cdot \Omega(N/k^2) = \Omega(N/k^2)$. Hence, we can assume that $\Pr\paren{\esmall{2} \mid \ehit{1},\esmall{1}} \geq (1-1/2k)$. Using this, and the previous argument we did for the first round, 
	and by Claim~\ref{clm:event-v-i}, we obtain that:
	\begin{align*}
		\Ex_{X \sim \rX^{(k)}}\bracket{\alg{(X)}} \geq \paren{1-\frac{1}{2k}}^4 \cdot \Ex_{S_1} \Ex_{S_2}\Ex_{X} \Bracket{\alg(X) \mid S_2,S_1,\ehit{2},\esmall{2},\ehit{1},\esmall{1}}.
	\end{align*}
	We can thus continue this argument until processing the last round, and either we already have $\Ex_{X \sim \rX^{(k)}} = \Omega(N/k^2)$ as for some $i \in [k-1]$, $\Pr\paren{\esmall{i} \mid \small{1},\ldots,\esmall{i-1},\ehit{1},\ldots,\ehit{i-1}} \geq (1-1/2k)$, 
	or: 
	\begin{align*}
		\Ex_{X \sim \rX^{(k)}}\bracket{\alg{(X)}} &\geq \paren{1-\frac{1}{2k}}^{2k-2} \cdot \Ex_{S_1,\ldots,S_{k-1}}\Ex_{X} \Bracket{\alg(X) \mid S_1,\ldots,S_{k-1},\ehit{*},\esmall{*}} \\
		&\geq \Omega(1) \cdot  \Ex_{S_1,\ldots,S_{k-1}}\Ex_{\rS_k}\bracket{\card{\rS_k} \mid S_1,\ldots,S_{k-1},\ehit{*},\esmall{*}} \\
		&\!\!\!\!\!\!\!\!\!\Geq{Lemma~\ref{lem:r-lower-main}} \Omega(N/k).
	\end{align*}
	This concludes the proof. 
\end{proof}

Theorem~\ref{thm:r-round-lower} now follows from Lemma~\ref{lem:r-lower-exp} and Claim~\ref{clm:opt-adaptive-lower}, by setting $r = k$ and noticing that $N = n^{1/k}$ in this construction.

\subsection*{Acknowledgements}
We thank the anonymous reviewers of SODA 2019 for helpful comments.

\bibliographystyle{abbrv}
\bibliography{general}

\end{document}